\newtheorem{theorem}{Theorem}
\newtheorem{example}{Example}
\newtheorem{corollary}{Corollary}
\newtheorem{lemma}{Lemma}
\newtheorem{definition}{Definition}
\def\psfancypar#1#2{\begingroup\def\par{\endgraf\endgroup\lineskiplimit=0pt}
               \setbox2=\hbox{\large\sc #2}
               \newdimen\tmpht \tmpht \ht2 \advance\tmpht by \baselineskip
               \font\hhuge=Times-Bold at \tmpht
               \setbox1=\hbox{{\hhuge #1}}
               \count7=\tmpht \count8=\ht1
               \divide\count8 by 1000 \divide\count7 by \count8 
               \tmpht=.001\tmpht\multiply\tmpht by \count7 
               \font\hhuge=Times-Bold at \tmpht
               \setbox1=\hbox{{\hhuge #1}}
               \noindent
                \hangindent1.05\wd1
               \hangafter=-2 {\hskip-\hangindent
               \lower1\ht1\hbox{\raise1.0\ht2\copy1}%
                \kern-0\wd1}\copy2\lineskiplimit=-1000pt}
\newcommand{\beq}{\begin{equation}}
\newcommand{\eeq}{\end{equation}}
\newcommand{\bqa}{\begin{eqnarray}}
\newcommand{\eqa}{\end{eqnarray}}
\newcommand{\bqn}{\begin{eqnarray*}}
\newcommand{\eqn}{\end{eqnarray*}}
\newcommand{\be}{\begin{enumerate}}
\newcommand{\ee}{\end{enumerate}}
\newcommand{\bi}{\begin{itemize}}
\newcommand{\ei}{\end{itemize}}
\newcommand{\bd}{\begin{description}}
\newcommand{\ed}{\end{description}}
\newcommand{\ba}{\begin{array}}
\newcommand{\ea}{\end{array}}
\newcommand{\bde}{\begin{definition}}
\newcommand{\ede}{\end{definition}}
\newcommand{\bex}{\begin{example}}
\newcommand{\eex}{\end{example}}
\def\boxit#1{\vbox{\hrule\hbox{\vrule\kern3pt
        \vbox{\kern3pt#1\kern3pt}\kern3pt\vrule}\hrule}}
\def\reals{ { {\rm  I \kern-0.15em R }  } }
\def\complex{ {\,{{\rm C} \kern-0.50em \raise0.20ex {  |}}\, }}
\def\0bf{{\bf 0}}
\def\1bf{{\bf 1}}
\def\2bf{{\bf 2}}
\def\3bf{{\bf 3}}
\def\4bf{{\bf 4}}
\def\5bf{{\bf 5}}
\def\6bf{{\bf 6}}
\def\7bf{{\bf 7}}
\def\8bf{{\bf 8}}
\def\9bf{{\bf 9}}
\def\Rbf{{\bf R}}
\def\Nmat{\mathcal{N}}
\def\Qmat{\mathcal{Q}}
\def\Xmat{\mathcal{X}}
\def\Ymat{\mathcal{Y}}
\def\Rxx{\Rbf_{\ssstyle X\kern-.1em X}}
\let\ssstyle=\scriptscriptstyle
\def\Kout{\setbox1=\hbox{\Huge\bf K}\hbox to
1.05\wd1{\hspace{.05\wd1}
\def\Sout{\setbox1=\hbox{\Huge\bf S}\hbox to 1.05\wd1{\hspace{.05\wd1}

\def\scalefig#1{\epsfxsize #1\textwidth}
\begin{document}
\title{\LARGE On the Capacity of Multiple-Access-Z-Interference Channels}
\author{Fangfang Zhu$^*$, Xiaohu Shang$^\dagger$, Biao Chen$^*$, H. Vincent Poor$^\ddag$\footnotemark\\
$^*$ Dept. of EECS, Syracuse University, Syracuse, NY\\
$^\dagger$ Bell Labs, Alcatel-Lucent, Holmdel, NJ\\
$^\ddag$ Dept. of Electrical Engineering, Princeton University, Princeton, NJ}


\maketitle
\let\thefootnote\relax\footnotetext{This work was supported in part by the National Science Foundation under Grants CCF-0905320, and CNS-0905398, and in part by the Air Force Office of Scientific Research under Grant FA9550-09-01-0643. The material in this paper was presented in part at the IEEE International Conference on Communications (ICC), Kyoto, Japan, June 2011.}
\begin{abstract}
The capacity of a network in which a multiple access channel (MAC) generates interference to a single-user channel is studied. An achievable rate region based on superposition coding and joint decoding is established for the discrete case. If the interference is very strong, the capacity region is obtained for both the discrete memoryless channel and the Gaussian channel. For the strong interference case, the capacity region is established for the discrete memoryless channel; for the Gaussian case, we attain a line segment on the boundary of the capacity region. Moreover, the capacity region for the Gaussian channel is identified for the case when one interference link being strong, and the other being very strong. For a subclass of Gaussian channels with mixed interference, a boundary point of the capacity region is determined. Finally, for the Gaussian channel with weak interference, sum capacities are obtained under various channel coefficient and power constraint conditions.
\end{abstract}

\section{Introduction}
In a cellular system, co-channel cells are strategically placed to ensure that interference is kept at a minimum.
As such, the downlink transmission within each cell is typically modeled as a broadcast channel (BC) while uplink transmission is
modeled as a multiple access channel (MAC). This effectively isolates each cell from all the other co-channel cells and makes it feasible to characterize the performance limits as the capacity regions for the Gaussian BC and the Gaussian MAC have been completely determined  (see \cite{Cover&Thomas:book}).

However, as the need for spectrum reuse increases, various frequency reuse schemes have been proposed in recent years and it is no longer realistic
to disregard co-channel interference in both downlink and uplink transmissions.
For downlink transmissions, the Gaussian broadcast-interference channel model has been studied in \cite{Shang&Poor: 10ISIT, Shang&Poor:10IT, Liu&Erkip:12IT} with
an emphasis on the one-sided interference model. The capacity regions of such channels with very strong and slightly strong interference,
and some boundary points on the capacity regions of that with moderate and weak interferences were determined.
It was shown that the capacity is achieved by fully decoding the interference when it is strong,
partially decoding the interference when it is moderate, and treating the interference as noise when it is weak.

In this paper, we consider an uplink model with interference, namely the multiple access-interference channel.
As with \cite{Shang&Poor: 10ISIT, Shang&Poor:10IT, Liu&Erkip:12IT}, we focus on the MAC with one-sided interference, an example of this channel model is depicted is illustrated in Fig.~\ref{fig:cellmodel}. The same model can be used to describe the channels between microcell and femtocell, or between microcell and picocell, etc. 
Mobile users $TX_1$ and $TX_2$ belong to cell $1$ while $TX_3$ belongs to cell 2 and the transmissions of $TX_1$ and $TX_2$ cause
interferences at $RX_2$, the base station at cell $2$. The interference from $TX_3$ to $RX_1$, on the other hand, is assumed to be
negligible.

A similar model has been studied by \cite{Chaaban&Sezgin:EW11} and \cite{Buhler&Wunder:ISIT2011}, both of which considered the two-sided interference between the two cells. The authors in \cite{Chaaban&Sezgin:EW11} derived the capacity region for the very strong and some of the strong interference cases, and provided an upper-bound of the sum-rate for the weak interference case which is nearly optimal in low signal-to-noise ratio regime, while \cite{Buhler&Wunder:ISIT2011} characterized the capacity region in the from of interference alignment under the weak symmetric interference assumption.
\begin{figure}[h]
\centerline{
\begin{psfrags}\psfrag{r1}[c]{\tiny$Rx_1$} \psfrag{r2}[c]{\tiny$Rx_2$}\psfrag{t1}[c]{\tiny$Tx_1$}\psfrag{t2}[c]{\tiny$Tx_2$}\psfrag{t3}[c]{\tiny$Tx_3$}\scalefig{0.6}
\epsfbox{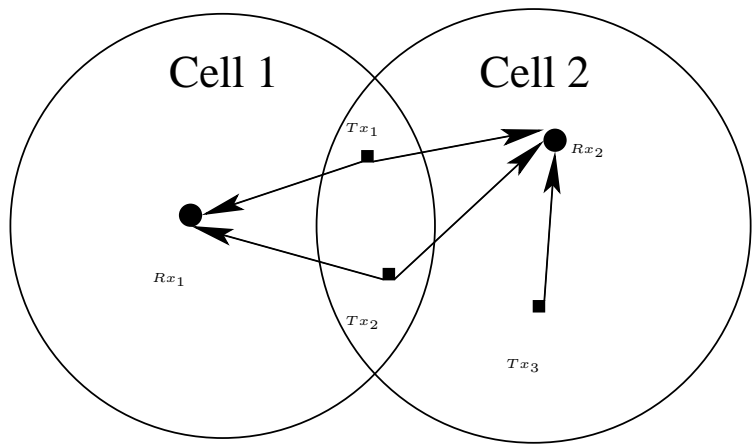}\end{psfrags}}
\caption{\label{fig:cellmodel} Two-cell uplink transmission.}
\end{figure}

Fig.~\ref{fig:MAZICmodel} is an abstract model of the above network. Transmitters $1$ and $2$ and receiver $1$ form a MAC.
Transmitter $3$ and receiver $2$ form a single-user channel and receiver $2$ is subject to interference from transmitters $1$ and $2$. Specifically, the channel outputs are given by

\bqa
Y_1 &=& X_1 + X_2 + Z_1,\\
Y_2 &=& \sqrt{a}X_1 + \sqrt{b}X_2+X_3+Z_2,
\eqa
\begin{figure}[h]
\centerline{
\begin{psfrags}\psfrag{h11}[c]{\small$1$}\psfrag{h12}[c]{\small$\sqrt{a}$}\psfrag{h21}[c]{\small$1$}\psfrag{h22}[c]{\small$\sqrt{b}$}\psfrag{h32}[c]{\small$1$}\psfrag{x1}[c]{\small$X_1(W_1)$}\psfrag{x2}[c]{\small$X_2(W_2)$}\psfrag{x3}[c]{\small$X_3(W_3)$}
\psfrag{y1}[l]{\small$Y_1(W_1,W_2)$}\psfrag{y2}[l]{\small$Y_2(W_3)$}
\scalefig{0.3}\epsfbox{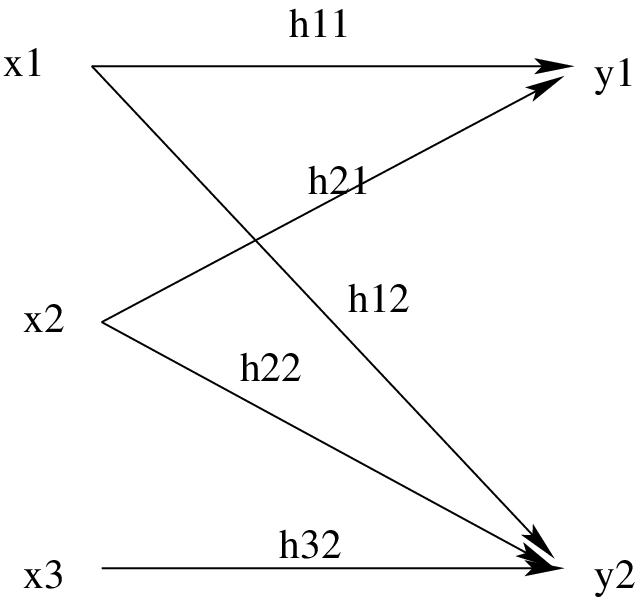}\end{psfrags}}
\caption{\label{fig:MAZICmodel} The Multiple-Access-Z-interference
Channel model}
\end{figure}
where $X_i$ and $Y_j$ are the transmitted and received signals of transmitter $i$ and receiver $j$,
respectively, for $i=1,2,3$ and $j=1,2$. For each $j$, $Z_j$ is Gaussian noise with zero mean and unit variance and we
assume all the noise terms are independent of each other and over time. For channels with arbitrary coefficients and noise variances, standard normalization can be applied such that its capacity is equivalent to the above channel, i.e., the gains for $X_1$, $X_2$ in $Y_1$ and $X_3$ in $Y_2$ are all assumed to be $1$. The channel coefficients $a$ and $b$ are fixed and known at both the transmitters and the receivers.
Without loss of generality, we assume $a, b > 0$, i.e., they are strictly positive. 
For transmitter $i$, the user/channel input sequence $X_{i1},X_{i2}, \cdots,X_{in}$ is subject to a block power constraint
$\sum_{k=1}^n \mathcal{E}[X_{ik}^2] \leq nP_{i}$. We denote the rates for messages $W_1$, $W_2$ and $W_3$ by $R_1$, $R_2$ and $R_3$, respectively.
The channel defined here is referred to as a Multiple-Access-Z-Interference channel (MAZIC). Unlike the two-user Z-interference channel (ZIC), there are more than one interference signal from multiple independent senders. For example, in the Gaussian case, the interference signals are multiplied by different coefficients. One cannot claim equivalence to degraded channels as in the two-user ZIC case. As such, capacity analysis becomes more complicated.
Our goal in this paper is to
obtain capacity results for the strong, mixed$^1$\footnote{$^1$Here, the notion of mixed interference refers to the strengths of the two interference links with coefficients $\sqrt{a}$ and $\sqrt{b}$. It differs from the classical notion of mixed interference where the interference is imposed on two different receivers.}  and weak interference cases for the MAZIC.

The rest of the paper is organized as follows. We give the problem formulation in Section \ref{sec:model}.
Section \ref{sec:achievable} gives an achievable rate region for the discrete memoryless MAZIC and the result is extended to the Gaussian case.
Capacity results for the strong, very strong, mixed and weak interference cases are derived in Sections \ref{sec: strong}, \ref{sec:verystrong}, \ref{sec:mixed} and \ref{sec:weak} respectively.
Section \ref{sec:conclusion} concludes the paper.

\section{Preliminaries}\label{sec:model}
A discrete memoryless MAZIC is defined by $(\Xmat_1,\Xmat_2,\Xmat_3, p,\Ymat_1,\Ymat_2)$, where $\Xmat_1, \Xmat_2$ and $\Xmat_3$ are finite input
alphabet sets; $\Ymat_1$ and $\Ymat_2$ are finite output alphabet sets; and $p (y_1y_2|x_1x_2x_3 )$ is the channel transition probability.
As the receivers do not cooperate, the capacity depends only on the marginal channel transition probabilities. Thus we can only consider two marginal distributions $(p(y_1|x_1x_2), p(y_2|x_1x_2x_3))$.
The channels are memoryless, i.e.,
\bqa
p(y_1^ny_2^n|x_1^nx_2^nx_3^n)=\prod_{i=1}^n p(y_{1i}y_{2i}|x_{1i}x_{2i}x_{3i}),
\eqa
where $x_i^n=[x_{i1},x_{i2},\cdots,x_{in}]$ and $y_j^n=[y_{j1},y_{j2},\cdots,y_{jn}]$, for $i=1,2$, and $j=1,2,3$.
The message for transmitter $i$ is $W_i\in\{1, 2, \cdots, 2^{nR_i}\}$, $i=1,2,3$. A $(2^{nR_1}, 2^{nR_2}, 2^{nR_3},n)$ code consists of three encoders:
\bqn
f_1&:&\{1,2,\cdots, 2^{nR_1}\}\rightarrow\Xmat_1^n,\\
f_2&:&\{1,2,\cdots, 2^{nR_2}\}\rightarrow\Xmat_2^n,\\
f_3&:&\{1,2,\cdots, 2^{nR_3}\}\rightarrow\Xmat_3^n,
\eqn
and two decoders:
\bqn
g_1&:&\Ymat_1^n\rightarrow\{1,2,\cdots,2^{nR_1}\}\times\{1,2,\cdots,2^{nR_2}\},\\
g_2&:&\Ymat_2^n\rightarrow\{1,2,\cdots,2^{nR_3}\}.
\eqn
The error probability is defined as
\bqn
P_e=\Pr\{g_1(Y_1^n)\neq (W_1,W_2),\textrm{ or }g_2(Y_2^n)\neq W_3 \}.
\eqn
Assuming $W_1$, $W_2$ and $W_3$ are all uniformly distributed, a rate triple $(R_1,R_2,R_3)$ is achievable if there exist a sequence of
$(2^{nR_1}, 2^{nR_2}, 2^{nR_3}, n)$ codes for $n$ sufficiently large such that $P_e\rightarrow 0$ when $n\rightarrow \infty$.
Throughout this paper, we make the assumption that all the transmitters implement deterministic encoders instead of stochastic encoders
as one can easily prove, following the same approach as that of \cite{Willems&Meulen:85IT}, that stochastic encoders do not increase the capacity
for a MAZIC. 
Before proceeding, we introduce some notation
that will be used throughout the paper.
\bi
\item \noindent$p_X(x)$ is the probability mass function of a discrete random variable $X$, or a probability density function of a continuous
random variable $X$, and is simplified as $p(x)$. 
\item $A_\epsilon^{(n)}(X)$ denotes the set of length-$n$ $\epsilon$-typical sequences of $X$.
\item $I(\cdot ;\cdot)$, $H(\cdot)$ and $h(\cdot)$ are respectively the mutual information, discrete entropy and differential entropy.
\item $\emptyset$ denotes the empty set.
\item $\bar{x}=1-x$.
\item $\bf x\sim \Nmat(\bf 0,S)$ means that $\bf x$ has a Gaussian distribution with zero mean and covariance matrix $\bf S$.
\ei

The following properties of Markov chains are useful throughout the paper (see \cite[Section 1.1.5]{Pearl:book}):
\begin{itemize}
  \item Decomposition: $X-Y-ZW\Longrightarrow X-Y-Z$;
  \item Weak Union: $X-Y-ZW\Longrightarrow X-YW-Z$;
  \item Contraction: $(X-Y-Z)$ and $(X-YZ-W)\Longrightarrow X-Y-ZW$.
\end{itemize}

\section{An Achievable Region for the General MAZIC}\label{sec:achievable}
We use superposition coding and joint decoding to derive an achievable rate region. Consider the
independent messages $W_1$ and $W_2$ generated by transmitters $1$ and $2$, respectively. We split them into
\bqn
W_1&=&[W_{1c},W_{1p}],\\
W_2&=&[W_{2c},W_{2p}],
\eqn
where $W_{1c}$ and $W_{2c}$ denote the common messages that are to be decoded at both receivers $1$ and $2$; and $W_{1p}$ and $W_{2p}$ represent
the private messages that are to be decoded only at receiver $1$. 

We first introduce the auxiliary random variables $Q$, $U_1$, and $U_2$, where $Q$ is a time-sharing random variable, and $U_1$ and $U_2$ contain the information $W_{1c}$ and $W_{2c} $ respectively. The distribution of $(
Q, U_1, U_2, X_1, X_2, X_3)$ factorizes as
\bqa
p(qu_1u_2x_1x_2x_3)=p(q)p(u_1|q)p(x_1|u_1,q)p(u_2|q)p(x_2|u_2,q)p(x_3|q).\label{eq:input_dist}
\eqa
The following achievable rate region can be obtained whose proof is given in Appendix \ref{apx:general_inner}.
\vspace{10pt}
\begin{theorem}\label{thm:general_inner}
For a discrete memoryless MAZIC, an achievable rate region is given
by the set of all nonnegative rate triples $(R_1,R_2,R_3)$ that satisfy
\bqa
R_1\!\!\!\!&\leq&\!\!\!\!I(X_1;Y_1|X_2Q),\label{eq:thm1_start}\\
R_2\!\!\!\!&\leq&\!\!\!\!I(X_2;Y_1|X_1Q),\\
R_3\!\!\!\!&\leq&\!\!\!\!I(X_3;Y_2|U_1U_2Q),\\
R_1+R_2\!\!\!\!&\leq&\!\!\!\!I(X_1X_2;Y_1|Q),\\
R_1+R_3\!\!\!\!&\leq&\!\!\!\!I(X_1;Y_1|U_1X_2Q)+I(U_1X_3;Y_2|U_2Q),\\
R_2+R_3\!\!\!\!&\leq&\!\!\!\!I(X_2;Y_1|U_2X_1Q)+I(U_2X_3;Y_2|U_1Q),\\
R_1+R_2+R_3\!\!\!\!&\leq&\!\!\!\!I(X_1X_2;Y_1|U_1U_2Q)+I(U_1U_2X_3;Y_2|Q),\\
R_1+R_2+R_3\!\!\!\!&\leq&\!\!\!\!I(X_1X_2;Y_1|U_1Q)+I(U_1X_3;Y_2|U_2Q),\\
R_1+R_2+R_3\!\!\!\!&\leq&\!\!\!\!I(X_1X_2;Y_1|U_2Q)+I(U_2X_3;Y_2|U_1Q),\\
R_1+2R_2+R_3\!\!\!\!&\leq&\!\!\!\!I(X_2;Y_1|U_2X_1Q)+I(X_1X_2;Y_1|U_1Q)+I(U_1U_2X_3;Y_2|Q),\\
2R_1+R_2+R_3\!\!\!\!&\leq&\!\!\!\!I(X_1;Y_1|U_1X_2Q)+I(X_1X_2;Y_1|U_2Q)+I(U_1U_2X_3;Y_2|Q),\label{eq:thm1_end}
\eqa
\normalsize
where the input distribution factors as (\ref{eq:input_dist}).
Furthermore, the region remains the same if we impose the constraints $\|\Qmat\|\leq 12$, $\|U_1\|\leq \|X_1\|+5$, and $\|U_2\|\leq \|X_2\|+5$.
\end{theorem}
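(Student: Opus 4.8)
The plan is to prove achievability by the standard route: rate-split the two MAC messages, encode by superposition, decode by simultaneous joint typicality at both receivers, bound the error probability event by event, and then project the region from the split-rate variables down to $(R_1,R_2,R_3)$ by Fourier--Motzkin elimination; the cardinality bounds I would obtain last, by a support-lemma argument. In detail: fix an input law of the form (\ref{eq:input_dist}) and a time-sharing sequence $q^n$ of that type, and split $W_i=(W_{ic},W_{ip})$ with $R_i=R_{ic}+R_{ip}$, $i=1,2$. I would generate cloud centers $u_1^n(w_{1c})$ i.i.d.\ $\sim\prod_k p(u_{1k}|q_k)$ with superimposed satellites $x_1^n(w_{1c},w_{1p})$ i.i.d.\ $\sim\prod_k p(x_{1k}|u_{1k},q_k)$, similarly $(u_2^n,x_2^n)$, and $x_3^n(w_3)$ i.i.d.\ $\sim\prod_k p(x_{3k}|q_k)$; transmitter $i$ sends the codeword carrying its message. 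Receiver~1 would decode the unique $(\hat w_{1c},\hat w_{1p},\hat w_{2c},\hat w_{2p})$ making $(u_1^n,x_1^n,u_2^n,x_2^n,y_1^n)$ jointly $\epsilon$-typical given $q^n$; receiver~2 would decode $\hat w_3$ as the unique index for which $(u_1^n(\hat w_{1c}),u_2^n(\hat w_{2c}),x_3^n(\hat w_3),y_2^n)$ is jointly typical for \emph{some} $(\hat w_{1c},\hat w_{2c})$ --- note that $W_{1c},W_{2c}$ serve only as a decoding aid at receiver~2 and need not be recovered there, which is what keeps the receiver-2 constraints limited to the four that appear.

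For the error analysis I would assume the all-ones messages were sent and union-bound over joint-typicality error events, invoking the packing lemma and tracking for each event the number of competing codeword tuples and the governing conditional mutual information, simplifying exponents with the Markov relations $U_i-(X_i,Q)-(\text{everything else})$ and $(U_1,X_1)\perp(U_2,X_2)\mid Q$ forced by (\ref{eq:input_dist}). Receiver~1 yields the MAC-type bounds in $(R_{1c},R_{1p},R_{2c},R_{2p})$, namely $R_{ip}\le I(X_i;Y_1|U_iX_jQ)$ and $R_i\le I(X_i;Y_1|X_jQ)$ for $\{i,j\}=\{1,2\}$, $R_{1p}+R_{2p}\le I(X_1X_2;Y_1|U_1U_2Q)$, $R_1+R_{2p}\le I(X_1X_2;Y_1|U_2Q)$, $R_{1p}+R_2\le I(X_1X_2;Y_1|U_1Q)$, and $R_1+R_2\le I(X_1X_2;Y_1|Q)$. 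Receiver~2 yields only $R_3\le I(X_3;Y_2|U_1U_2Q)$, $R_{1c}+R_3\le I(U_1X_3;Y_2|U_2Q)$, $R_{2c}+R_3\le I(U_2X_3;Y_2|U_1Q)$, and $R_{1c}+R_{2c}+R_3\le I(U_1U_2X_3;Y_2|Q)$, because only $\hat w_3$ matters. Adjoining $R_{ic}\ge0$ and $R_{ip}=R_i-R_{ic}\ge0$ gives an achievable region in the split-rate variables.

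Then I would eliminate $R_{1c},R_{1p},R_{2c},R_{2p}$ by Fourier--Motzkin (using $R_{ip}=R_i-R_{ic}$ and $0\le R_{ic}\le R_i$): pairing each lower bound on $R_{1c},R_{2c}$ coming from receiver~1 with each upper bound coming from receiver~2 produces the eleven inequalities (\ref{eq:thm1_start})--(\ref{eq:thm1_end}). I expect the main obstacle to be here rather than in the preceding steps: the elimination also generates a number of extraneous inequalities, and each must be shown redundant given the eleven, $R_1,R_2,R_3\ge0$, and elementary information inequalities specialized to the Markov structure of (\ref{eq:input_dist}) --- for instance $I(X_1;Y_1|U_1X_2Q)\le I(X_1X_2;Y_1|U_1U_2Q)$ and $I(U_1;Y_2|Q)\le I(U_1;Y_2|U_2Q)$, the latter using $U_1\perp U_2\mid Q$. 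Getting every mutual-information exponent in the error analysis exactly right and carrying out this redundancy check cleanly is the delicate part.

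Finally, the cardinality bounds I would get from the Fenchel--Eggleston--Carath\'eodory support lemma. Holding $Q$ fixed, reducing $U_1$ requires preserving the $\|X_1\|-1$ free parameters of $p_{X_1|Q}$ along with the few $U_1$-dependent mutual-information functionals entering (\ref{eq:thm1_start})--(\ref{eq:thm1_end}), which yields $\|U_1\|\le\|X_1\|+5$, and symmetrically $\|U_2\|\le\|X_2\|+5$; then reducing $Q$ requires preserving the eleven rate functionals, which yields $\|Q\|\le12$. Since these reductions do not change the region, the proof is complete.
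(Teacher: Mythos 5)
Your proposal is correct and follows essentially the same route as the paper's own proof: rate splitting with superposition coding, joint typicality decoding at receiver 1 and non-unique decoding of the common indices at receiver 2, a union-bound/packing-lemma error analysis yielding constraints on the split rates, Fourier--Motzkin elimination, and Carath\'eodory-type cardinality bounds. The only cosmetic difference is that you discard the redundant error-event constraints (those implied by the retained ones together with nonnegativity of the split rates) before eliminating, whereas the paper carries the full list of exponents into the elimination; the resulting polytope, and hence the projected region, is the same.
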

\vspace{10pt}
The MAC and the Z-interference channel (ZIC) are two special cases of a MAZIC. On setting $X_3U_1U_2=\emptyset$, we obtain the capacity region for the MAC:
\bqn
R_1&\leq& I(X_1;Y_1|X_2Q),\\
R_2&\leq& I(X_2;Y_1|X_1Q),\\
R_1+R_2&\leq& I(X_1X_2;Y_1|Q).
\eqn
Alternatively, on setting $U_2X_2=\emptyset$, we obtain Han and Kobayashi's achievable rate region for the ZIC \cite{Han&Kobayashi:81IT}\cite{Chong-etal:08IT}\cite{Kramer:06IZS}:
\bqn
R_1 &\leq& I(X_1;Y_1|Q),\\
R_3 &\leq& I(X_3;Y_2|U_1Q),\\
R_1+R_3 &\leq& I(X_1;Y_1|U_1Q)+I(U_1X_3;Y_2|Q).
\eqn
Theorem \ref{thm:general_inner} allows us to obtain a computable achievable region for Gaussian MAZICs.
\vspace{10pt}
\begin{corollary}\label{cor:general_g_inner}
For any nonnegative pair $[\alpha, \beta]\in[0,1]$, the non-negative rate triples $(R_1,R_2,R_3)$ satisfying the conditions (\ref{eq: cor1_start})-(\ref{eq:cor1_end}) are achievable for a Gaussian MAZIC.

\bqa
R_1 &\leq& \frac{1}{2}\log(1+P_1),\label{eq: cor1_start}\\
R_2 &\leq& \frac{1}{2}\log(1+P_2),\\
R_3 &\leq& \frac{1}{2}\log\left(1+\frac{P_3}{1+a\alpha P_1+b\beta P_2}\right),\\
R_1+R_2 &\leq& \frac{1}{2}\log\left(1+P_1+P_2\right),\\
R_1+R_3 &\leq& \frac{1}{2}\log\left(1+\alpha P_1\right)+\frac{1}{2}\log\left(1+\frac{a\bar{\alpha}P_1+P_3}{1+a\alpha P_1+b\beta P_2}\right),\\
R_2+R_3 &\leq& \frac{1}{2}\log\left(1+\beta P_2\right)+\frac{1}{2}\log\left(1+\frac{b\bar{\beta}P_2+P_3}{1+a\alpha P_1+b\beta P_2}\right),\\
R_1+R_2+R_3 &\leq& \frac{1}{2}\log\left(1+\alpha P_1+\beta P_2\right)+\frac{1}{2}\log\left(1+\frac{a\bar{\alpha}P_1+b\bar{\beta}P_2+P_3}{1+a\alpha P_1+b\beta P_2}\right),\\
R_1+R_2+R_3 &\leq& \frac{1}{2}\log\left(1+\alpha P_1+P_2\right)+\frac{1}{2}\log\left(1+\frac{a\bar{\alpha}P_1+P_3}{1+a\alpha P_1+b\beta P_2}\right),\\
R_1+R_2+R_3 &\leq& \frac{1}{2}\log\left(1+P_1+\beta P_2\right)+\frac{1}{2}\log\left(1+\frac{b\bar{\beta}P_2+P_3}{1+a\alpha P_1+b\beta P_2}\right),\\
R_1+2R_2+R_3 &\leq& \frac{1}{2}\log\left(1+\beta P_2\right)+\frac{1}{2}\log\left(1+\alpha P_1+P_2\right)+\frac{1}{2}\log\left(1+\frac{a\bar{\alpha}P_1+b\bar{\beta}P_2+P_3}{1+a\alpha P_1+b\beta P_2}\right),\\
2R_1+R_2+R_3&\leq& \frac{1}{2}\log\left(1+\alpha P_1\right)+\frac{1}{2}\log\left(1+P_1+\beta P_2\right)+\frac{1}{2}\log\left(1+\frac{a\bar{\alpha}P_1+b\bar{\beta}P_2+P_3}{1+a\alpha P_1+b\beta P_2}\right).\label{eq:cor1_end}
\eqa
\end{corollary}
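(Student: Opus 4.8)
The plan is to specialize the discrete-memoryless region of Theorem~\ref{thm:general_inner} to a Gaussian input distribution with a superposition structure, and then evaluate each of the eleven mutual-information bounds in closed form. Since the corollary asserts achievability for \emph{each} fixed pair $[\alpha,\beta]$, no time-sharing is needed: I would take $Q$ to be a constant and set
\[
X_1=U_1+X_{1p},\qquad X_2=U_2+X_{2p},\qquad X_3\sim\mathcal{N}(0,P_3),
\]
where $U_1\sim\mathcal{N}(0,\bar{\alpha}P_1)$, $X_{1p}\sim\mathcal{N}(0,\alpha P_1)$, $U_2\sim\mathcal{N}(0,\bar{\beta}P_2)$, $X_{2p}\sim\mathcal{N}(0,\beta P_2)$, and all five of $U_1,X_{1p},U_2,X_{2p},X_3$ are mutually independent. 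One checks at once that this law factors as (\ref{eq:input_dist}) (with the additive independent private parts realizing $p(x_i|u_i,q)$) and that the power constraints $\mathcal{E}[X_i^2]\le P_i$ hold with equality; here $U_1,U_2$ carry the common parts of powers $\bar\alpha P_1,\bar\beta P_2$ and $X_{1p},X_{2p}$ the private parts of powers $\alpha P_1,\beta P_2$.

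Next I would substitute $Y_1=X_1+X_2+Z_1$ and $Y_2=\sqrt{a}X_1+\sqrt{b}X_2+X_3+Z_2$ into each of (\ref{eq:thm1_start})--(\ref{eq:thm1_end}). Because every quantity in sight is scalar and jointly Gaussian, each conditional mutual information collapses to $\tfrac12\log(1+\mathrm{SNR})$ according to a simple bookkeeping rule: conditioning on a subset of $\{U_1,U_2,X_1,X_2,X_3\}$ deletes the corresponding independent components from the received signal, the components whose information is being extracted contribute the numerator power, and the remaining independent components together with the unit-variance noise contribute the denominator. For example, $I(X_1;Y_1|X_2Q)=\tfrac12\log(1+P_1)$; $I(X_3;Y_2|U_1U_2Q)=\tfrac12\log\!\big(1+P_3/(1+a\alpha P_1+b\beta P_2)\big)$ because conditioning on $U_1,U_2$ leaves $X_{1p},X_{2p}$ as Gaussian interference of powers $\alpha P_1,\beta P_2$; and a two-letter term such as $I(U_1X_3;Y_2|U_2Q)=\tfrac12\log\!\big(1+(a\bar\alpha P_1+P_3)/(1+a\alpha P_1+b\beta P_2)\big)$ follows by the same accounting, the pair $(U_1,X_3)$ forming the numerator power $a\bar\alpha P_1+P_3$ against the residual $\sqrt a X_{1p}+\sqrt b X_{2p}+Z_2$. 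Carrying this out term by term reproduces (\ref{eq: cor1_start})--(\ref{eq:cor1_end}) exactly, including the three single-letter bounds, the MAC-type sum bound, the three pairwise bounds, the three triple-sum bounds, and the two weighted bounds $R_1+2R_2+R_3$ and $2R_1+R_2+R_3$.

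The only point that is not purely mechanical is that Theorem~\ref{thm:general_inner} is stated for finite alphabets, so I would invoke the standard extension of the random-coding/joint-typicality argument of Appendix~\ref{apx:general_inner} to memoryless channels with continuous alphabets under an average-power (cost) constraint — either via the usual quantization-of-the-input-distribution limiting argument or by replacing $\epsilon$-typicality with its cost-constrained analogue throughout. With that in place the Gaussian choice above is admissible and the corollary follows. I do not expect a genuine obstacle here; the part requiring care is simply the bookkeeping across all eleven inequalities — translating each conditioning set into the correct residual signal, interference, and noise powers so that every right-hand side matches the claimed expression.
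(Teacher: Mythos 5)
Your proposal is correct and follows essentially the same route as the paper: take $\|\Qmat\|=1$ and Gaussian superposition inputs $X_i=U_i+$ (private part) with $X_3\sim\Nmat(0,P_3)$, then evaluate the eleven mutual-information bounds of Theorem~\ref{thm:general_inner} in closed form (the continuous-alphabet/power-constraint extension you flag is the standard discretization argument the paper leaves implicit). Your variance assignment, with the private components carrying powers $\alpha P_1$ and $\beta P_2$, is in fact the one consistent with the stated expressions (\ref{eq: cor1_start})--(\ref{eq:cor1_end}), whereas the paper's one-line proof writes the roles of $\alpha$ and $\bar{\alpha}$ (and $\beta$, $\bar{\beta}$) with the labels swapped, a harmless relabeling of the splitting parameter.
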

\vspace{10pt}
\begin{proof}
Corollary \ref{cor:general_g_inner} follows directly from Theorem \ref{thm:general_inner} by choosing $\|\Qmat\|=1$, $X_1\sim\Nmat(0,P_1)$, $X_2\sim \Nmat(0,P_2)$, and $X_1=U_1+V_1$, $X_2=U_2+V_2$, where $U_1$, $U_2$, $V_1$ and $V_2$ are independent random variables with $U_1\sim \Nmat(0,\alpha P_1)$, $U_2\sim \Nmat(0,\beta P_2)$, $V_1\sim\Nmat(0,\bar{\alpha}P_1)$ and $V_2 \sim \Nmat(0, \bar{\beta}P_2)$.
\end{proof}
In the following, we discuss capacity results for different interference regimes for MAZICs.

\section{MAZICs with Strong Interference}\label{sec: strong}
\subsection{Discrete Case}
Similar to \cite{Costa&ElGamal:87IT}, the discrete MAZIC with strong interference is defined as a discrete memoryless MAZIC satisfying
\bqa
I(X_1;Y_1|X_2)\leq I(X_1;Y_2|X_2X_3),\label{eq:strong_start}\\
I(X_2;Y_1|X_1)\leq I(X_2;Y_2|X_1X_3),\label{eq:strong_mid}\\
I(X_1X_2;Y_1)\leq I(X_1X_2;Y_2|X_3),\label{eq:strong_end}
\eqa
for all product distributions on $\Xmat_1\times\Xmat_2\times\Xmat_3$.

The above single letter conditions imply multi-letter conditions as stated below.
\begin{lemma}
For a discrete memoryless interference channel, if (\ref{eq:strong_start})-(\ref{eq:strong_end}) are satisfied for all product probability distributions on $\Xmat_1\times\Xmat_2\times\Xmat_3$, then
\bqa
I(X_1^n; Y_1^n|X_2^n)\leq I(X_1^n;Y_2^n|X_2^n X_3^n),\label{eq:multi_strong_1}\\
I(X_2^n; Y_1^n|X_1^n)\leq I(X_2^n;Y_2^n|X_1^nX_3^n),\label{eq:multi_strong_2}\\
I(X_1^nX_2^n;Y_1^n)\leq I(X_1^nX_2^n;Y_2^n|X_3^n).\label{eq:multi_strong_3}
\eqa
\end{lemma}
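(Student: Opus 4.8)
The three displayed inequalities are the $n$-letter analogues of (\ref{eq:strong_start})--(\ref{eq:strong_end}), and the plan is to obtain them by a single-letterization of the ``genie plus Csisz\'ar sum identity'' type. First, some harmless reductions. By the symmetry between transmitters~$1$ and~$2$, (\ref{eq:multi_strong_2}) follows from (\ref{eq:multi_strong_1}), so it suffices to treat (\ref{eq:multi_strong_1}) and (\ref{eq:multi_strong_3}); and throughout $X_1^n,X_2^n,X_3^n$ are mutually independent. Since the receivers do not cooperate, only the marginals $p(y_1|x_1x_2)$ and $p(y_2|x_1x_2x_3)$ matter, so I may fix the joint channel law so that $Y_1^n\perp Y_2^n$ given $(X_1^n,X_2^n,X_3^n)$. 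Because $Y_1$ does not involve $X_3$, we have $X_3^n-(X_1^n,X_2^n)-Y_1^n$, hence $I(X_1^n;Y_1^n|X_2^n)=I(X_1^n;Y_1^n|X_2^nX_3^n)$ and $I(X_1^nX_2^n;Y_1^n)=I(X_1^nX_2^n;Y_1^n|X_3^n)$; so I may carry $X_3^n$ in every conditioning.

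For (\ref{eq:multi_strong_1}), set $\Delta=I(X_1^n;Y_2^n|X_2^nX_3^n)-I(X_1^n;Y_1^n|X_2^nX_3^n)$. I would expand $I(X_1^n;Y_1^n|X_2^nX_3^n)=\sum_i I(X_1^n;Y_{1i}|Y_1^{i-1}X_2^nX_3^n)$ forward, $I(X_1^n;Y_2^n|X_2^nX_3^n)=\sum_i I(X_1^n;Y_{2i}|Y_{2,i+1}^nX_2^nX_3^n)$ backward, and fold the ``genie'' $Y_{2,i+1}^n$ (resp.\ $Y_1^{i-1}$) into the conditioning of the $i$-th term of the first (resp.\ second) sum. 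The conditional independence $Y_1^n\perp Y_2^n\mid(X_1^n,X_2^n,X_3^n)$ makes this exact up to the cross terms $I(Y_{2,i+1}^n;Y_{1i}|Y_1^{i-1}X_2^nX_3^n)$ and $I(Y_1^{i-1};Y_{2i}|Y_{2,i+1}^nX_2^nX_3^n)$, whose sums over $i$ are equal by the Csisz\'ar sum identity and therefore cancel. What remains is $\Delta=\sum_i\big[I(X_1^n;Y_{2i}|U_iX_2^nX_3^n)-I(X_1^n;Y_{1i}|U_iX_2^nX_3^n)\big]$ with $U_i=(Y_1^{i-1},Y_{2,i+1}^n)$.

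Now $Y_{1i}$ is conditionally independent of the rest given $(X_{1i},X_{2i})$, and $Y_{2i}$ is conditionally independent of the rest given $(X_{1i},X_{2i},X_{3i})$, so the $i$-th bracket equals $I(X_{1i};Y_{2i}|U_iX_2^nX_3^n)-I(X_{1i};Y_{1i}|U_iX_2^nX_3^n)$. Freeze $(U_i,X_2^n,X_3^n)$ to a value: then $X_{2i}$ and $X_{3i}$ are constants while $X_{1i}$ has some law $q$ on $\Xmat_1$, so (\ref{eq:strong_start}), applied to the product distribution $q\otimes\delta_{x_{2i}}\otimes\delta_{x_{3i}}$, gives $I(X_{1i};Y_{1i}|\cdot)\le I(X_{1i};Y_{2i}|\cdot)$. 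Averaging over the frozen variables, every bracket is nonnegative, so $\Delta\ge0$, which is (\ref{eq:multi_strong_1}); (\ref{eq:multi_strong_2}) follows by symmetry.

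The step I expect to be the main obstacle is (\ref{eq:multi_strong_3}). Running the same scheme with $(X_{1i},X_{2i})$ in the role of the input leaves $\Delta=\sum_i\big[I(X_{1i}X_{2i};Y_{2i}|U_iX_3^n)-I(X_{1i}X_{2i};Y_{1i}|U_iX_3^n)\big]$, but after freezing $(U_i,X_3^n)$ the pair $(X_{1i},X_{2i})$ is in general \emph{not} independent: $U_i$ contains $Y_1^{i-1}$, a common descendant of $X_1^{i-1}$ and $X_2^{i-1}$ and hence, through the codebooks, of $X_{1i}$ and $X_{2i}$, and conditioning on it couples them --- whereas (\ref{eq:strong_end}) is only assumed for \emph{product} input distributions. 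To close (\ref{eq:multi_strong_3}) one must either redesign the genie so that freezing it restores the product law of $(X_{1i},X_{2i})$ (for instance by also conditioning on the past inputs $X_1^{i-1},X_2^{i-1}$, which ``resolves'' that collider, and re-checking that the sum-identity bookkeeping still closes), or split the bracket by the chain rule into an $X_{1i}$-part controlled by (\ref{eq:strong_start}) and an $X_{2i}$-part that must be matched to (\ref{eq:strong_mid})--(\ref{eq:strong_end}) despite the average over $X_{1i}$ hidden inside it. Arranging the auxiliary variables so that every per-letter term carries exactly the product structure that (\ref{eq:strong_start})--(\ref{eq:strong_end}) require is the crux; the rest is routine chain-rule and entropy manipulation.
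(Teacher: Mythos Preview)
The paper's own proof is essentially a deferral: after noting that $I(X_1^n;Y_1^n|X_2^nX_3^n)=I(X_1^n;Y_1^n|X_2^n)$ and the two analogous identities, it simply says ``the rest of the proof can be established using techniques similar to that of \cite{Costa&ElGamal:87IT}, hence is omitted.'' So there is no detailed argument to compare term--by--term.

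Your Csisz\'ar--sum treatment of (\ref{eq:multi_strong_1}) and (\ref{eq:multi_strong_2}) is correct and complete. After the identity bookkeeping you freeze $(U_i,X_2^n,X_3^n)$, which pins $X_{2i},X_{3i}$ to constants; (\ref{eq:strong_start}) applied with point masses for $X_2,X_3$ then gives each bracket the right sign. This packaging differs from the classical Costa--El~Gamal route (first observe that conditioning on $X_2,X_3$ extends (\ref{eq:strong_start})--(\ref{eq:strong_mid}) from product to \emph{all} input laws, then run a direct chain-rule induction), but the two arguments are interchangeable here and yours is at least as clean.

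For (\ref{eq:multi_strong_3}) you correctly locate the real obstruction and then stop: conditioning on $U_i=(Y_1^{i-1},Y_{2,i+1}^n)$ opens the collider at $Y_1^{i-1}$ and couples $X_{1i}$ with $X_{2i}$ through $X_1^{i-1},X_2^{i-1}$, whereas (\ref{eq:strong_end}) is hypothesised only for product $p(x_1)p(x_2)p(x_3)$. That is a genuine gap in your proposal --- you diagnose it but do not close it. It is worth noting that the paper's one-line deferral glosses over exactly the same point: the Costa--El~Gamal ``extend to all joint laws'' trick works for (\ref{eq:strong_start}) and (\ref{eq:strong_mid}) precisely because both sides condition on every \emph{other} input, but (\ref{eq:strong_end}) conditions on neither $X_1$ nor $X_2$, so it does \emph{not} automatically extend to arbitrary joint $p(x_1,x_2)$, and the two-user lemma in \cite{Costa&ElGamal:87IT} has no direct analogue of (\ref{eq:strong_end}). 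Your suggested repair --- enlarging the genie to include $X_1^{i-1},X_2^{i-1},X_{1,i+1}^n,X_{2,i+1}^n$ --- does restore the product law of $(X_{1i},X_{2i})$ given the enlarged conditioning, but you then have to redo the telescoping/sum-identity accounting with those extra variables in place and verify nothing leaks; as written, that verification is left open.
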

\vspace{10pt}
\begin{proof}
From the channel model, we have
\bqn
I(X_1^n; Y_1^n|X_2^nX_3^n)&=&I(X_1^n;Y_1^n|X_2^n),\\
I(X_2^n; Y_1^n|X_1^nX_3^n)&=&I(X_2^n;Y_1^n|X_1^n),\\
I(X_1^nX_2^n;Y_1^n|X_3^n)&=&I(X_1^nX_2^n; Y_1^n).
\eqn
The rest of the proof can be established using techniques similar to that of \cite{Costa&ElGamal:87IT}, hence is omitted.
\end{proof}
The above lemma leads to the following theorem.
\vspace{10pt}
\begin{theorem}\label{thm:discrete_strong}
For a discrete memoryless MAZIC with conditions (\ref{eq:strong_start})-(\ref{eq:strong_end}) for all product probability distributions on $\Xmat_1\times\Xmat_2\times\Xmat_3$, the capacity region is given by the set of all the nonnegative rate triples $(R_1,R_2,R_3)$ that satisfy
\bqa
R_1&\leq& I(X_1;Y_1|X_2Q),\label{eq:cr_strong_1}\\
R_2&\leq& I(X_2;Y_1|X_1Q),\label{eq:cr_strong_2}\\
R_3&\leq& I(X_3;Y_2|X_1X_2Q),\label{eq:cr_strong_3}\\
R_1+R_2&\leq& I(X_1X_2;Y_1|Q),\label{eq:cr_strong_4}\\
R_2+R_3&\leq& I(X_2X_3;Y_2|X_1Q),\label{eq:cr_strong_5}\\
R_1+R_3&\leq& I(X_1X_3;Y_2|X_2Q),\label{eq:cr_strong_6}\\
R_1+R_2+R_3 &\leq& I(X_1X_2X_3;Y_2|Q),\label{eq:cr_strong_7}
\eqa
where the input distribution factors as
\bqa
p(qx_1x_2x_3)=p(q)p(x_1|q)p(x_2|q)p(x_3|q).
\eqa
Furthermore, the region remains invariant if we impose the constraint $\|\Qmat\|\leq 8$.
\end{theorem}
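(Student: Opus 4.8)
The plan is to derive the region (\ref{eq:cr_strong_1})--(\ref{eq:cr_strong_7}) as an achievable region directly from Theorem~\ref{thm:general_inner}, and as an outer bound from Fano's inequality together with the multi-letter conditions (\ref{eq:multi_strong_1})--(\ref{eq:multi_strong_3}) of the preceding Lemma. For achievability I would specialize Theorem~\ref{thm:general_inner} by setting $U_1=X_1$ and $U_2=X_2$, so that the input distribution (\ref{eq:input_dist}) becomes exactly $p(q)p(x_1|q)p(x_2|q)p(x_3|q)$ and the terms $I(X_1;Y_1|U_1X_2Q)$, $I(X_2;Y_1|U_2X_1Q)$ and $I(X_1X_2;Y_1|U_1U_2Q)$ all vanish. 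Under this substitution the eleven inequalities of Theorem~\ref{thm:general_inner} reduce to (\ref{eq:cr_strong_1})--(\ref{eq:cr_strong_7}) together with four more: $R_1+R_2+R_3\le I(X_2;Y_1|X_1Q)+I(X_1X_3;Y_2|X_2Q)$, $R_1+R_2+R_3\le I(X_1;Y_1|X_2Q)+I(X_2X_3;Y_2|X_1Q)$, $R_1+2R_2+R_3\le I(X_2;Y_1|X_1Q)+I(X_1X_2X_3;Y_2|Q)$, and $2R_1+R_2+R_3\le I(X_1;Y_1|X_2Q)+I(X_1X_2X_3;Y_2|Q)$. Each of these four is the termwise sum of two of the seven retained inequalities --- for instance the first is (\ref{eq:cr_strong_2}) added to (\ref{eq:cr_strong_6}), and the third is (\ref{eq:cr_strong_2}) added to (\ref{eq:cr_strong_7}) --- hence redundant, so (\ref{eq:cr_strong_1})--(\ref{eq:cr_strong_7}) is achievable, and the strong-interference hypothesis plays no role in this direction.

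For the converse I would fix a sequence of codes with $P_e\to 0$ and write $\epsilon_n\to 0$ for the Fano terms, so $H(W_1W_2|Y_1^n)\le n\epsilon_n$ and $H(W_3|Y_2^n)\le n\epsilon_n$. The bounds (\ref{eq:cr_strong_1}), (\ref{eq:cr_strong_2}), (\ref{eq:cr_strong_4}) follow from the standard MAC converse at receiver~$1$: conditioning on $W_2$, on $W_1$, or on neither; applying Fano; using the Markov chains $W_i-X_i^n-Y_1^n$; single-letterizing; and introducing a time-sharing variable $Q$ uniform on $\{1,\dots,n\}$ and independent of the messages, with $X_j=X_{jQ}$ and $Y_j=Y_{jQ}$. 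Since $W_1,W_2,W_3$ are independent, $X_{1Q},X_{2Q},X_{3Q}$ are conditionally independent given $Q$, which yields the factorization claimed in the theorem.

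The crux of the converse --- and the step I expect to be the main obstacle --- is to show that receiver~$2$ can also recover $(W_1,W_2)$, for which the passage from the single-letter conditions (\ref{eq:strong_start})--(\ref{eq:strong_end}) to their multi-letter form (\ref{eq:multi_strong_1})--(\ref{eq:multi_strong_3}) is indispensable. Because the code induces mutually independent $X_1^n,X_2^n,X_3^n$, the Lemma applies, and (\ref{eq:multi_strong_3}) gives $I(X_1^nX_2^n;Y_2^n|X_3^n)\ge I(X_1^nX_2^n;Y_1^n)\ge H(X_1^nX_2^n)-n\epsilon_n$, so $H(X_1^nX_2^n|Y_2^nX_3^n)\le n\epsilon_n$; combining this with $H(W_1W_2|X_1^nX_2^n)\le H(W_1W_2|Y_1^n)\le n\epsilon_n$ (via $(W_1,W_2)-(X_1^n,X_2^n)-Y_1^n$) and with $H(W_3|Y_2^n)\le n\epsilon_n$ yields $H(W_1W_2W_3|Y_2^n)\le 3n\epsilon_n$. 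With receiver~$2$ now effectively a three-user MAC, the standard MAC converse there --- conditioning respectively on $(W_1,W_2)$, on $W_1$, on $W_2$, and on no message, applying Fano with $3n\epsilon_n$, and single-letterizing with the same $Q$ --- produces (\ref{eq:cr_strong_3}), (\ref{eq:cr_strong_5}), (\ref{eq:cr_strong_6}) and (\ref{eq:cr_strong_7}). The three remaining three-user-MAC inequalities at receiver~$2$ (on $R_1$, $R_2$, and $R_1+R_2$ alone) need not be listed, since under $p(q)p(x_1|q)p(x_2|q)p(x_3|q)$ the single-letter conditions (\ref{eq:strong_start})--(\ref{eq:strong_end}) make them implied by (\ref{eq:cr_strong_1}), (\ref{eq:cr_strong_2}), (\ref{eq:cr_strong_4}).

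Finally, for the support bound I would observe that every right-hand side in (\ref{eq:cr_strong_1})--(\ref{eq:cr_strong_7}) is the $Q$-average of a continuous functional of the conditional law of $(X_1,X_2,X_3)$ given $Q$, and that the region depends on $Q$ only through these seven averages; a standard Carath\'eodory-type argument then shows that one may take $\|\Qmat\|\le 8$ without shrinking the region.
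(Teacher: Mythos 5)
Your proposal is correct, and it reaches the theorem by a route that differs from the paper's in two respects worth noting. On achievability, you specialize Theorem~\ref{thm:general_inner} with $U_1=X_1$, $U_2=X_2$ (all messages common) and check that the four leftover inequalities are termwise sums of (\ref{eq:cr_strong_1})--(\ref{eq:cr_strong_7}); this is the right reading, and in fact cleaner than the paper's appendix, which literally says ``set $U_1=U_2=\emptyset$'' --- a substitution that would give $R_3\le I(X_3;Y_2|Q)$ (interference treated as noise) and not (\ref{eq:cr_strong_3}); the intended choice is yours. On the converse, the paper proves (\ref{eq:cr_strong_5}), (\ref{eq:cr_strong_6}) and (\ref{eq:cr_strong_7}) separately, each by a direct chain: Fano, conditioning on the remaining codeword via independence, then the matching multi-letter condition (\ref{eq:multi_strong_2}), (\ref{eq:multi_strong_1}) or (\ref{eq:multi_strong_3}) to transfer the mutual information to $Y_2^n$, and single-letterization. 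You instead use only (\ref{eq:multi_strong_3}) to establish $H(X_1^nX_2^n|Y_2^nX_3^n)\le n\epsilon_n$, conclude that receiver $2$ can decode all three messages, and then run the standard three-user MAC converse at $Y_2^n$, obtaining (\ref{eq:cr_strong_3}) and (\ref{eq:cr_strong_5})--(\ref{eq:cr_strong_7}) in one sweep; the receiver-$1$ bounds and the factorization $p(q)p(x_1|q)p(x_2|q)p(x_3|q)$ are handled identically in both arguments, as is the Carath\'eodory bound on $\|\Qmat\|$. Your organization buys economy --- only one of the three multi-letter conditions is needed in the converse, and the bounds at receiver $2$ come from a single decodability statement --- at the price of the $3n\epsilon_n$ bookkeeping and of the step $I(X_1^nX_2^n;Y_1^n)\ge H(X_1^nX_2^n)-n\epsilon_n$, which tacitly uses the paper's standing assumption of deterministic encoders (so that $H(X_1^nX_2^n|Y_1^n)\le H(W_1W_2|Y_1^n)$); you should make that invocation explicit. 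Your closing remark that the extra receiver-$2$ MAC constraints on $R_1$, $R_2$ and $R_1+R_2$ are rendered redundant by (\ref{eq:strong_start})--(\ref{eq:strong_end}) is correct, though strictly speaking unnecessary for the converse, since omitting constraints only enlarges an outer bound; it serves instead to explain why the inner and outer bounds coincide exactly.
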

\vspace{10pt}
The proof is given in Appendix \ref{apx:strong_cr}.
\subsection{Gaussian Case}
For a Gaussian MAZIC, the strong interference is defined as the case where $a\geq 1$ and $b\geq 1$, which are sufficient and necessary conditions for (\ref{eq:strong_start}) and (\ref{eq:strong_mid}), respectively. However, it is hard to find a sufficient and necessary conditions for (\ref{eq:strong_end}), and there are counter examples in which condition (\ref{eq:strong_end}) is violated even if $a\geq 1$ and $b \geq 1$. That is, there exist input distributions such that (\ref{eq:strong_end}) does not old with $a\geq 1$ and $b\geq 1$.

While Theorem \ref{cor:general_g_inner} still applies, a better rate splitting strategy can be devised for this case.
If $(R_1,R_2,R_3)$ is an achievable rate triple, then receiver $2$ can reliably recover $X_1$ and $X_2$ at these rates.
Therefore, receiver $2$ can decode whatever receiver $1$ decodes. Thus, if we choose the private message sets for users $1$
and $2$ to be empty, i.e., $\alpha=\beta=0$, we obtain an achievable rate region. 

In the following, we give an outer-bound on the capacity region.
\vspace{10pt}
\begin{corollary}\label{cor:CR_strong_gaussian}
For a Gaussian MAZIC with conditions $a,b\geq 1$, an outer-bound on the capacity region is given by the set of all the nonnegative rate triples $(R_1,R_2,R_3)$ that satisfy
\bqa
R_1 &\leq& \frac{1}{2}\log\left(1+P_1\right),\label{eq:strong_g_start}\\
R_2 &\leq& \frac{1}{2}\log\left(1+P_2\right),\label{eq:mac_r2}\\
R_3 &\leq& \frac{1}{2}\log\left(1+P_3\right),\label{eq:strong_cr_r3}\\
R_1+R_2 &\leq& \frac{1}{2}\log\left(1+P_1+P_2\right),\label{eq:mac_sum}\\
R_2+R_3 &\leq& \frac{1}{2}\log\left(1+bP_2+P_3\right),\label{eq:MAZIC_sum}\\
R_1+R_3 &\leq& \frac{1}{2}\log\left(1+aP_1+P_3\right).\label{eq:strong_g_end}
\eqa
\end{corollary}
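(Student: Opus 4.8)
The plan is a genie-aided converse in which each of the six bounds comes from Fano's inequality followed by the maximum-differential-entropy bound under the power constraint; the hypothesis $a,b\ge1$ will be needed only for the two cross-cell sum-rate bounds (\ref{eq:MAZIC_sum})--(\ref{eq:strong_g_end}). For (\ref{eq:strong_g_start})--(\ref{eq:mac_r2}) and (\ref{eq:strong_cr_r3}) I would give receiver~$1$ the genie signal $X_2^n$ (respectively give receiver~$2$ both $X_1^n$ and $X_2^n$): since the messages are independent and the encoders deterministic, $W_1\perp X_2^n$, whence $nR_1\le I(W_1;Y_1^n|X_2^n)+n\epsilon_n\le I(X_1^n;Y_1^n|X_2^n)+n\epsilon_n=h(X_1^n+Z_1^n)-h(Z_1^n)+n\epsilon_n$, and bounding $h(X_1^n+Z_1^n)\le\sum_i h(X_{1i}+Z_{1i})$ with $\mathrm{Var}(X_{1i}+Z_{1i})\le\mathcal{E}[X_{1i}^2]+1$ and using concavity of $\log$ with $\tfrac1n\sum_i\mathcal{E}[X_{1i}^2]\le P_1$ yields (\ref{eq:strong_g_start}) as $n\to\infty$; (\ref{eq:mac_r2}) and (\ref{eq:strong_cr_r3}) follow in exactly the same way (for the latter $Y_2^n-\sqrt aX_1^n-\sqrt bX_2^n=X_3^n+Z_2^n$ with power $P_3$). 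For the pure-MAC bound (\ref{eq:mac_sum}) no genie is needed: $n(R_1+R_2)\le I(X_1^nX_2^n;Y_1^n)+n\epsilon_n=h(Y_1^n)-h(Z_1^n)+n\epsilon_n\le\sum_i[h(Y_{1i})-h(Z_{1i})]+n\epsilon_n$, and because $X_{1i}\perp X_{2i}$ one has $\mathrm{Var}(Y_{1i})=\mathrm{Var}(X_{1i})+\mathrm{Var}(X_{2i})+1\le\mathcal{E}[X_{1i}^2]+\mathcal{E}[X_{2i}^2]+1$, so no cross term survives and (\ref{eq:mac_sum}) follows by concavity.

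For (\ref{eq:MAZIC_sum}) I would hand $X_1^n$ as a genie to \emph{both} receivers. Receiver~$1$ still decodes $(W_1,W_2)$, so $nR_2\le I(W_2;Y_1^n|X_1^n)+n\epsilon_n\le I(X_2^n;Y_1^n|X_1^n)+n\epsilon_n$, while receiver~$2$ decodes $W_3$ and, using $W_3\perp X_1^n$, $nR_3\le I(W_3;Y_2^n|X_1^n)+n\epsilon_n\le I(X_3^n;Y_2^n|X_1^n)+n\epsilon_n$. The crucial ingredient is the $n$-letter strong-interference inequality $I(X_2^n;Y_1^n|X_1^n)\le I(X_2^n;Y_2^n|X_1^nX_3^n)$: because $X_2^n$ is independent of $(X_1^n,X_3^n)$, the two sides equal $I(X_2^n;X_2^n+Z_1^n)$ and $I(X_2^n;\sqrt bX_2^n+Z_2^n)=I(X_2^n;X_2^n+Z_2^n/\sqrt b)$, and for $b\ge1$ the observation $X_2^n+Z_1^n\stackrel{d}{=}(X_2^n+Z_2^n/\sqrt b)+N^n$, with each coordinate of $N^n$ an independent $\Nmat(0,1-1/b)$ variable, exhibits the former as a degraded version of the latter, so the inequality is just the data-processing inequality.

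Adding the two Fano bounds and applying the chain rule gives
\[
n(R_2+R_3)\le I(X_2^n;Y_2^n|X_1^nX_3^n)+I(X_3^n;Y_2^n|X_1^n)+n\epsilon_n=I(X_2^nX_3^n;Y_2^n|X_1^n)+n\epsilon_n,
\]
and since $X_2^n,X_3^n\perp X_1^n$ this last quantity is $h(\sqrt bX_2^n+X_3^n+Z_2^n)-h(Z_2^n)\le\sum_i[h(\sqrt bX_{2i}+X_{3i}+Z_{2i})-h(Z_{2i})]$; once more $X_{2i}\perp X_{3i}$ eliminates the cross term in $\mathrm{Var}(\sqrt bX_{2i}+X_{3i}+Z_{2i})\le b\,\mathcal{E}[X_{2i}^2]+\mathcal{E}[X_{3i}^2]+1$, and concavity together with the power constraints yields (\ref{eq:MAZIC_sum}). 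Bound (\ref{eq:strong_g_end}) is obtained by the same argument with the indices $1$ and $2$ interchanged and with $a\ge1$ in place of $b\ge1$.

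I expect the delicate point to be justifying the $n$-letter strong-interference inequalities under only $a,b\ge1$: since (\ref{eq:strong_end}) can fail in the Gaussian case even when $a,b\ge1$, the Lemma cannot be invoked wholesale, and one must instead note that the multi-letter inequalities (\ref{eq:multi_strong_1})--(\ref{eq:multi_strong_2}) each follow from their own single-letter condition (equivalently, from the degradedness observation above), and that those two single-letter conditions are precisely equivalent to $a\ge1$ and $b\ge1$. The remaining care is routine: deriving $W_i\perp X_j^n$ and the Markov relations $W_i-X_i^n-Y_j^n$ (given the genie) from message independence and deterministic encoding, and using conditional/centered second moments throughout so that correlations among the independent inputs never enter the entropy bounds.
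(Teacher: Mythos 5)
Your proposal is correct and follows essentially the same route as the paper: the paper's (sketched) converse mirrors the proof of Theorem \ref{thm:discrete_strong}, i.e.\ Fano plus conditioning on the non-interfering input, the $n$-letter strong-interference inequalities (\ref{eq:multi_strong_1})--(\ref{eq:multi_strong_2}), the chain rule, and Gaussian maximum-entropy single-letterization, while deliberately dropping the triple-sum bound because (\ref{eq:strong_end}) can fail even when $a,b\geq 1$ --- exactly the delicate point you flagged. The only difference is that you justify the $n$-letter inequalities directly by the degradedness/data-processing observation ($b\geq 1$ makes $X_2^n+Z_1^n$ a degraded version of $X_2^n+Z_2^n/\sqrt{b}$, and similarly for $a\geq 1$) rather than citing the Costa--El Gamal single-to-multi-letter argument of Lemma 1, which is an equally valid (and in the Gaussian case self-contained) way to obtain the same step.
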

\vspace{10pt}
The proof of this corollary is very similar to the proof of Theorem \ref{thm:discrete_strong}, except for the bound on $R_1+R_2+R_3$. The reason is that with $a\geq 1$ and $b\geq 1$, $I(X_1X_2;X_1+X_2+Z_1)\leq I(X_1X_2;\sqrt{a}X_1+\sqrt{b}X_2+Z_2)$ is generally not true for every possible input distribution, hence we do not have (\ref{eq:strong_end}). Therefore, inequality (\ref{eq:cr_strong_7}) cannot be obtained.

Next, let us consider one interference link being strong, for example, $1\leq a\leq 1+P_3$. In this case, we can easily get the following outer-bound:
\bqa
R_1&\leq&\frac{1}{2}\log(1+P_1),\label{eq:lineopt_ob_1}\\
R_2&\leq&\frac{1}{2}\log(1+P_2),\label{eq:lineopt_ob_2}\\
R_3&\leq&\frac{1}{2}\log(1+P_3),\label{eq:lineopt_ob_3}\\
R_1+R_2&\leq& \frac{1}{2}\log(1+P_1+P_2),\label{eq:lineopt_ob_4}\\
R_1+R_3&\leq& \frac{1}{2}\log(1+aP_1+P_3)\label{eq:lineopt_ob_5}.
\eqa

On the other hand, by setting $\alpha=\beta=0$ in the achievable region for Gaussian MAZICs in Corollary \ref{cor:general_g_inner},
one would have an achievable rate region with all nonnegative rate triples $(R_1,R_2,R_3)$ that satisfy
\bqa
R_1&\leq&\frac{1}{2}\log(1+P_1),\label{eq:onestrong_ib_1}\\
R_2&\leq&\frac{1}{2}\log(1+P_2),\label{eq:onestrong_ib_2}\\
R_3&\leq&\frac{1}{2}\log(1+P_3),\label{eq:onestrong_ib_3}\\
R_1+R_2&\leq&\frac{1}{2}\log(1+P_1+P_2),\label{eq:onestrong_ib_4}\\
R_1+R_3&\leq&\frac{1}{2}\log(1+aP_1+P_3),\label{eq:onestrong_ib_5}\\
R_2+R_3&\leq&\frac{1}{2}\log(1+bP_2+P_3),\label{eq:onestrong_ib_6}\\
R_1+R_2+R_3&\leq&\frac{1}{2}\log(1+aP_1+bP_2+P_3)\label{eq:onestrong_ib_7}.
\eqa

The following theorem summarizes the cases where some segment of the line: the intersection of the two hyperplanes defined by
\bqa
R_1+R_2 = \frac{1}{2}\log(1+P_1+P_2),\label{eq:line_1}\\
R_1+R_3 = \frac{1}{2}\log(1+aP_1+P_3)\label{eq:line_2}
\eqa
is on the boundary of the capacity region.
\vspace{5pt}
\begin{theorem}
For a Gaussian MAZIC with $1\leq a\leq 1+P_3$, if 
\bqa
b \geq\frac{1+aP_1+P_3}{1+P_1}\label{eq:lineopt_finalconditon},
\eqa
a segment of the line defined by (\ref{eq:line_1}) and (\ref{eq:line_2}), which starts at 
\begin{equation}\label{pt:endpoint1}
\left(\frac{1}{2}\log(1+P_1), \frac{1}{2}\log\left(1+\frac{P_2}{1+P_1}\right), \frac{1}{2}\log\left(1+\frac{P_3}{1+aP_1}\right)\right),
\end{equation} 
and ends at
\small
\begin{equation}\label{pt:endpoint2}
\left(\frac{1}{2}\log(1+P_1+P_2)-\frac{1}{2}\log\left(1+\frac{bP_2}{1+aP_1+P_3}\right), \frac{1}{2}\log\left(1+\frac{bP_2}{1+aP_1+P_3}\right), \frac{1}{2}\log\left(\frac{1+aP_1+bP_2+P_3}{1+P_1+P_2}\right)\right),
\end{equation}
\normalsize
is on the boundary of the capacity region of the channel.
\end{theorem}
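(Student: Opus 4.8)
The plan is a sandwiching argument between the achievable region of Corollary~\ref{cor:general_g_inner} with $\alpha=\beta=0$ and the outer bound stated just before the theorem. Write $\mathcal{R}_{\mathrm{in}}$ for the region (\ref{eq:onestrong_ib_1})--(\ref{eq:onestrong_ib_7}) and $\mathcal{R}_{\mathrm{out}}$ for (\ref{eq:lineopt_ob_1})--(\ref{eq:lineopt_ob_5}); both are available in the present regime, and since $\mathcal{R}_{\mathrm{in}}$ is cut out by all five inequalities of $\mathcal{R}_{\mathrm{out}}$ together with the two extra ones (\ref{eq:onestrong_ib_6})--(\ref{eq:onestrong_ib_7}), we have $\mathcal{R}_{\mathrm{in}}\subseteq\mathcal{C}\subseteq\mathcal{R}_{\mathrm{out}}$, where $\mathcal{C}$ is the capacity region. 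A rate triple that lies in $\mathcal{R}_{\mathrm{in}}$ and simultaneously on $\partial\mathcal{R}_{\mathrm{out}}$ must then lie on $\partial\mathcal{C}$, because a point interior to $\mathcal{C}$ is interior to $\mathcal{R}_{\mathrm{out}}$. Hence it suffices to exhibit the claimed segment as a subset of $\mathcal{R}_{\mathrm{in}}\cap\partial\mathcal{R}_{\mathrm{out}}$.

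I would parameterize the line $L$ defined by (\ref{eq:line_1})--(\ref{eq:line_2}) by $R_1=t$, so that $R_2=\frac{1}{2}\log(1+P_1+P_2)-t$ and $R_3=\frac{1}{2}\log(1+aP_1+P_3)-t$. Because (\ref{eq:line_1}) and (\ref{eq:line_2}) are the bounding hyperplanes of the facets (\ref{eq:lineopt_ob_4}) and (\ref{eq:lineopt_ob_5}) of $\mathcal{R}_{\mathrm{out}}$, every point of $L$ lying in $\mathcal{R}_{\mathrm{out}}$ automatically lies on $\partial\mathcal{R}_{\mathrm{out}}$, so what remains is to pin down the range of $t$ for which the parameterized triple lies in $\mathcal{R}_{\mathrm{in}}$ and to check it is nonempty. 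The crucial monotonicity is that along $L$ the coordinate $R_1$ increases with $t$ while $R_2$, $R_3$, $R_2+R_3$ and $R_1+R_2+R_3$ all strictly decrease with $t$; hence on any $t$-interval the tightest instance of each of (\ref{eq:onestrong_ib_1})--(\ref{eq:onestrong_ib_7}) occurs at an endpoint. The constraint $R_1\le\frac{1}{2}\log(1+P_1)$ is tightest at the larger-$t$ end, which it pins at $R_1=\frac{1}{2}\log(1+P_1)$, i.e.\ at (\ref{pt:endpoint1}); the sum constraint $R_1+R_2+R_3\le\frac{1}{2}\log(1+aP_1+bP_2+P_3)$ is tightest at the smaller-$t$ end, and a direct substitution identifies the point where it becomes active as (\ref{pt:endpoint2}). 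One checks that the $R_1$-value at (\ref{pt:endpoint2}) is strictly below that at (\ref{pt:endpoint1}) if and only if $b>\frac{1+aP_1+P_3}{1+P_1}$, so (\ref{eq:lineopt_finalconditon}) is precisely the condition making the segment nonempty (equality collapsing it to a point).

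It then remains to verify the other inequalities of $\mathcal{R}_{\mathrm{in}}$ along the segment. The two constraints $R_1+R_2\le\frac{1}{2}\log(1+P_1+P_2)$ and $R_1+R_3\le\frac{1}{2}\log(1+aP_1+P_3)$ hold with equality everywhere on $L$; $R_2\ge 0$ and $R_3\ge 0$ hold automatically (using $a\ge1$); and $R_2\le\frac{1}{2}\log(1+P_2)$, $R_3\le\frac{1}{2}\log(1+P_3)$, $R_2+R_3\le\frac{1}{2}\log(1+bP_2+P_3)$ and $R_1\ge0$ are tightest at the smaller-$t$ endpoint (\ref{pt:endpoint2}). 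Substituting the coordinates of (\ref{pt:endpoint2}) turns each of these, after clearing logarithms, into a polynomial inequality in $(P_1,P_2,P_3,a,b)$ that follows from $a\le 1+P_3$ and $b\ge\frac{1+aP_1+P_3}{1+P_1}$ in the range of $b$ relevant here; for instance $R_2+R_3\le\frac{1}{2}\log(1+bP_2+P_3)$ at (\ref{pt:endpoint2}) becomes $(1+aP_1+bP_2+P_3)^2\le(1+P_1+P_2)(1+aP_1+P_3)(1+bP_2+P_3)$. Once all of this is checked, the segment lies in $\mathcal{R}_{\mathrm{in}}\subseteq\mathcal{C}$ and, being in $\mathcal{R}_{\mathrm{out}}$ with (\ref{eq:line_1})--(\ref{eq:line_2}) active, on $\partial\mathcal{R}_{\mathrm{out}}$, hence on $\partial\mathcal{C}$. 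I expect the main obstacle to be this last bookkeeping: one must confirm that as $t$ decreases from $\frac{1}{2}\log(1+P_1)$ the point first reaches the sum-rate facet of $\mathcal{R}_{\mathrm{in}}$ and not the caps $R_2\le\frac{1}{2}\log(1+P_2)$ or $R_3\le\frac{1}{2}\log(1+P_3)$ (so that (\ref{pt:endpoint2}) is genuinely the endpoint), and that both endpoints are legitimate rate triples; both reduce to the same family of polynomial inequalities governed by (\ref{eq:lineopt_finalconditon}) and $a\le 1+P_3$.
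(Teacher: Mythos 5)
Your overall strategy is the same as the paper's: sandwich the capacity region between the $\alpha=\beta=0$ achievable region (\ref{eq:onestrong_ib_1})--(\ref{eq:onestrong_ib_7}) and the outer bound (\ref{eq:lineopt_ob_1})--(\ref{eq:lineopt_ob_5}), and observe that any point of the inner region lying on an active facet of the outer bound is a boundary point of the capacity region. The concrete problem is your identification of the starting endpoint. On the line $L$ defined by (\ref{eq:line_1})--(\ref{eq:line_2}), the point with $R_1=\frac{1}{2}\log(1+P_1)$ has $R_3=\frac{1}{2}\log\bigl(\frac{1+aP_1+P_3}{1+P_1}\bigr)$, whereas (\ref{pt:endpoint1}) has $R_3=\frac{1}{2}\log\bigl(\frac{1+aP_1+P_3}{1+aP_1}\bigr)$; these coincide only when $a=1$. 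For $a>1$ the stated point (\ref{pt:endpoint1}) satisfies (\ref{eq:line_1}) but satisfies (\ref{eq:line_2}) strictly, i.e.\ it is not on $L$ at all, so your monotonicity argument along $L$ never says anything about (\ref{pt:endpoint1}). (This looseness originates in the theorem statement itself, but it is exactly where your route and the paper's diverge.) The paper does not parameterize $L$: it proves (\ref{pt:endpoint1}) achievable directly by successive decoding at both receivers (receiver 1 decodes $X_2$ first, then $X_1$; receiver 2 decodes $X_2$ first, then $X_3$ treating $X_1$ as noise), which is where the condition $b\ge\frac{1+aP_1+P_3}{1+P_1}$ enters, checks that (\ref{pt:endpoint2}) meets (\ref{eq:onestrong_ib_7}) with equality and (\ref{eq:onestrong_ib_6}), and then time-shares between the two points; the connecting segment is a boundary segment because it lies on the facet $R_1+R_2=\frac{1}{2}\log(1+P_1+P_2)$ of the outer bound.

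A second caution: the verifications you defer as "bookkeeping" are not all consequences of the stated hypotheses. At (\ref{pt:endpoint2}) the constraint (\ref{eq:onestrong_ib_2}) requires $b\le 1+aP_1+P_3$, and your displayed inequality $(1+aP_1+bP_2+P_3)^2\le(1+P_1+P_2)(1+aP_1+P_3)(1+bP_2+P_3)$ likewise fails for large $b$ (e.g.\ $P_1=P_2=P_3=1$, $a=1$, $b=20$ gives $529\not\le 198$), since its left side is quadratic in $b$ and its right side linear. So an implicit upper restriction on $b$ is needed; the paper's own proof glosses over this as well, but you should not present these checks as automatic consequences of $1\le a\le 1+P_3$ and (\ref{eq:lineopt_finalconditon}).
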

\begin{proof}
Consider the rate triple $(R_1,R_2,R_3)$ on the line defined by (\ref{eq:line_1}) and (\ref{eq:line_2}). Any achievable rate triple on this line that also satisfies (\ref{eq:onestrong_ib_6}) and (\ref{eq:onestrong_ib_7}) must appear on the boundary of the capacity region as it belongs to both the inner and outer bounds.

Consider the rate triple defined by (\ref{pt:endpoint1}). It is achievable if
\begin{eqnarray}
\frac{1}{2}\log\left(1+\frac{P_2}{1+P_1}\right)\leq \frac{1}{2}\log\left(1+\frac{bP_2}{1+aP_1+P_3}\right),
\end{eqnarray}
i.e.,
\begin{equation}
b\geq \frac{1+aP_1+P_3}{1+P_1},
\end{equation}
as receiver $1$ first decodes $X_2$, subtracts it, and then decodes $X_1$; reciever $2$ also first decodes $X_2$, subtracts it, and then decodes $X_3$ by treating $X_1$ as noise.

The other rate triple defined by (\ref{pt:endpoint2}) satisfies (\ref{eq:onestrong_ib_7}) with equality, and satisfies (\ref{eq:onestrong_ib_6}) if $1\leq a\leq 1+P_3$ and $b\geq \frac{1+aP_1+P_3}{1+P_1}$.

Therefore, the line segment between these two rate triples (\ref{pt:endpoint1}) and (\ref{pt:endpoint2}) is on the boundary of the capacity region, and is achieved by time sharing.
\end{proof}

Fig.~\ref{fig:opt_line} gives an example where a line segment defined by (\ref{eq:line_1})  and (\ref{eq:line_2}) is on the boundary of the capacity region.
\begin{figure}[htp]
\centering
\scalefig{0.8}
\epsfbox{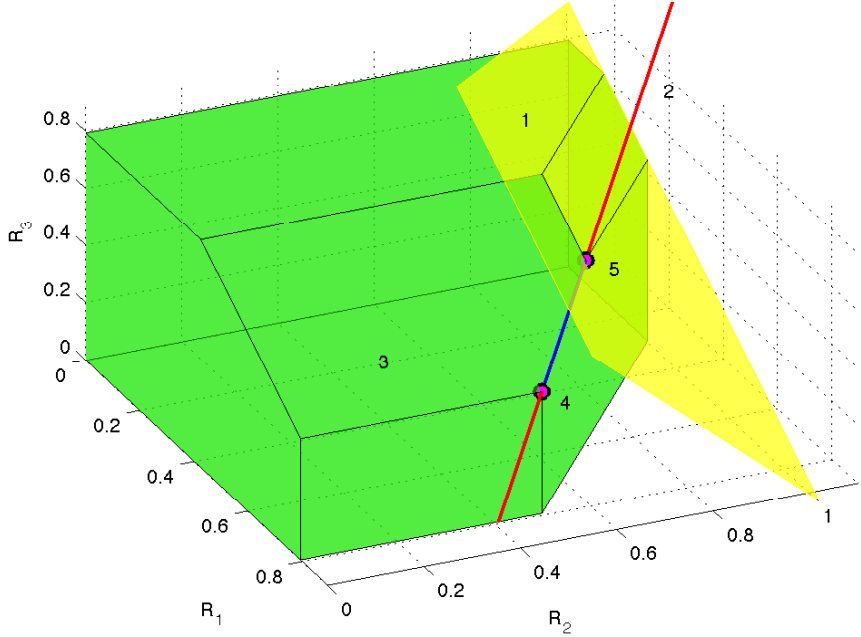}
\caption{The line $2$ defined in Eq. (\ref{eq:line_1}) and Eq. (\ref{eq:line_2}) appears as the boundary line of the capacity region. (Plane $1$ is defined by $R_1+R_2+R_3 = \frac{1}{2}\log(1+aP_1+bP_2+P_3)$; Region $3$ is defined by inequalities (\ref{eq:onestrong_ib_1})-(\ref{eq:onestrong_ib_7})); Points $4$ and $5$ are the two endpoints of the line segment that is on the capacity region. For this example, the corresponding channel parameters are: $a=1.2$, $b=3$, $P_1=P_3=2$, $P_2=3$.}\label{fig:opt_line} 
\end{figure}

Increasing $b$ even further for the case of $a\geq 1$ will ensure that (\ref{eq:onestrong_ib_6}) and (\ref{eq:onestrong_ib_7}) are never active. Specifically, we have
\begin{corollary}
For a Gaussian MAZIC with $a>1$ and $b>1+aP_1+P_3$, the capacity region is the set of all nonnegative rate triples $(R_1,R_2,R_3)$ that satisfies
\bqa
R_1&\leq&\frac{1}{2}\log(1+P_1),\\
R_2&\leq&\frac{1}{2}\log(1+P_2),\\
R_3&\leq&\frac{1}{2}\log(1+P_3),\\
R_1+R_2&\leq&\frac{1}{2}\log(1+P_1+P_2),\\
R_1+R_3&\leq&\frac{1}{2}\log(1+aP_1+P_3).
\eqa
\end{corollary}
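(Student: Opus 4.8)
The plan is to sandwich the capacity region between the achievable region of Corollary~\ref{cor:general_g_inner} with $\alpha=\beta=0$ and the outer bound of Corollary~\ref{cor:CR_strong_gaussian}, and to show that under the hypothesis $b>1+aP_1+P_3$ both of these known regions collapse to exactly the five-inequality region in the statement. Since an achievable region and an outer bound then coincide, they must equal the capacity region. The only work involved is verifying that the additional constraints appearing in each of those two regions are implied by the five that are retained.

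For the converse, I would first note that $a>1$ and $b>1+aP_1+P_3>1$, so the hypotheses $a,b\geq 1$ of Corollary~\ref{cor:CR_strong_gaussian} hold and every achievable triple satisfies (\ref{eq:strong_g_start})--(\ref{eq:strong_g_end}). Among these six inequalities, (\ref{eq:strong_g_start})--(\ref{eq:mac_sum}) together with (\ref{eq:strong_g_end}) are precisely the five claimed bounds, so it remains only to discard (\ref{eq:MAZIC_sum}). Adding (\ref{eq:mac_r2}) and (\ref{eq:strong_cr_r3}) gives $R_2+R_3\leq\frac{1}{2}\log\!\big((1+P_2)(1+P_3)\big)$, and $(1+P_2)(1+P_3)\leq 1+bP_2+P_3$ is equivalent to $b\geq 1+P_3$, which holds because $b>1+aP_1+P_3$ and $aP_1>0$. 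Hence the outer bound of Corollary~\ref{cor:CR_strong_gaussian} coincides with the claimed region.

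For achievability I would use Corollary~\ref{cor:general_g_inner} with $\alpha=\beta=0$, which gives the achievable region (\ref{eq:onestrong_ib_1})--(\ref{eq:onestrong_ib_7}). Here (\ref{eq:onestrong_ib_1})--(\ref{eq:onestrong_ib_5}) are the five claimed bounds, so it suffices to show that (\ref{eq:onestrong_ib_6}) and (\ref{eq:onestrong_ib_7}) are redundant. Constraint (\ref{eq:onestrong_ib_6}) is dispatched exactly as (\ref{eq:MAZIC_sum}) above. For (\ref{eq:onestrong_ib_7}), adding (\ref{eq:onestrong_ib_2}) and (\ref{eq:onestrong_ib_5}) gives $R_1+R_2+R_3\leq\frac{1}{2}\log\!\big((1+P_2)(1+aP_1+P_3)\big)$, and $(1+P_2)(1+aP_1+P_3)\leq 1+aP_1+bP_2+P_3$ reduces to $b\geq 1+aP_1+P_3$, which is exactly the hypothesis. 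Thus the achievable region is also the claimed region, and since it matches the outer bound, this is the capacity region.

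I do not expect a real obstacle: the content is just bookkeeping --- matching the retained constraints in the two invoked regions and checking the two scalar inequalities $b\geq 1+P_3$ and $b\geq 1+aP_1+P_3$. The genuinely nontrivial converse step for this regime, namely bounding $R_1+R_3$ by routing $W_1$'s rate through $Y_2$ via the strong-interference inequality $I(X_1^n;Y_1^n|X_2^n)\leq I(X_1^n;Y_2^n|X_2^nX_3^n)$, is already carried out inside Corollary~\ref{cor:CR_strong_gaussian} and need not be redone.
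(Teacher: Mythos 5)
Your proposal is correct and follows essentially the same route as the paper: achievability via Corollary~\ref{cor:general_g_inner} with $\alpha=\beta=0$ and a check that the $R_2+R_3$ and $R_1+R_2+R_3$ constraints become redundant when $b\geq 1+aP_1+P_3$, matched against the strong-interference outer bound; the paper merely asserts this redundancy while you verify the scalar inequalities $b\geq 1+P_3$ and $b\geq 1+aP_1+P_3$ explicitly. The only cosmetic difference is that you start from the six-inequality outer bound of Corollary~\ref{cor:CR_strong_gaussian} and discard (\ref{eq:MAZIC_sum}), whereas the paper quotes the five-inequality outer bound (\ref{eq:lineopt_ob_1})--(\ref{eq:lineopt_ob_5}) directly.
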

\begin{proof}
With $a\geq 1$ and $b\geq 1+aP_1+P_3$, (\ref{eq:onestrong_ib_6}) and (\ref{eq:onestrong_ib_7}) are redundant in the achievable region. As a result, the inner-bound and outer-bound coincide with each other.
\end{proof}

\section{MAZICs with Very Strong Interference}\label{sec:verystrong}
\subsection{Discrete Case}
The discrete MAZIC with very strong interference is defined as a discrete memoryless MAZIC satisfying
\bqa
I(X_1;Y_1|X_2)&\leq& I(X_1;Y_2|X_2),\label{eq:verystrong_start}\\
I(X_2;Y_1|X_1)&\leq& I(X_2;Y_2|X_1),\label{eq:verystrong_mid}\\
I(X_1X_2;Y_1)&\leq& I(X_1X_2;Y_2).\label{eq:verystrong_end}
\eqa
for all product distributions on $\Xmat_1\times\Xmat_2\times\Xmat_3$.

It is easy to see that the condition specified by (\ref{eq:verystrong_start})-(\ref{eq:verystrong_end}) is a special case of the strong interference condition (\ref{eq:strong_start})-(\ref{eq:strong_end}). Therefore, one can immediately obtain the capacity region of the MAZIC with very strong interference from Theorem \ref{thm:discrete_strong}.
\vspace{10pt}
\begin{corollary}\label{cor:discrete_verystrong}
For a discrete memoryless MAZIC with conditions (\ref{eq:verystrong_start})-(\ref{eq:verystrong_end}) for all product probability distributions on $\Xmat_1\times\Xmat_2\times\Xmat_3$, the capacity region is given by the set of all the nonnegative rate triples $(R_1,R_2,R_3)$ that satisfy
\bqa
R_1&\leq& I(X_1;Y_1|X_2Q),\label{eq:cr_verystrong_1}\\
R_2&\leq& I(X_2;Y_1|X_1Q),\label{eq:cr_verystrong_2}\\
R_3&\leq& I(X_3;Y_2|X_1X_2Q),\label{eq:cr_verystrong_3}\\
R_1+R_2&\leq& I(X_1X_2;Y_1|Q),\label{eq:cr_verystrong_4}
\eqa
where the input distribution factors as
\bqa
p(qx_1x_2x_3)=p(q)p(x_1|q)p(x_2|q)p(x_3|q).
\eqa
Furthermore, the region remains invariant if we impose the constraint $\|\Qmat\|\leq 5$.
\end{corollary}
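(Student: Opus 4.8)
The plan is to obtain Corollary \ref{cor:discrete_verystrong} directly from Theorem \ref{thm:discrete_strong} in three steps: first show that the very strong interference conditions (\ref{eq:verystrong_start})--(\ref{eq:verystrong_end}) imply the strong interference conditions (\ref{eq:strong_start})--(\ref{eq:strong_end}), so that the capacity region is the family of polytopes (\ref{eq:cr_strong_1})--(\ref{eq:cr_strong_7}); next show that, under (\ref{eq:verystrong_start})--(\ref{eq:verystrong_end}), the three sum-rate bounds (\ref{eq:cr_strong_5})--(\ref{eq:cr_strong_7}) are implied by (\ref{eq:cr_verystrong_1})--(\ref{eq:cr_verystrong_4}) and hence redundant; finally sharpen the cardinality bound.

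\emph{Step 1 (very strong implies strong).} Fix any product distribution $p(x_1)p(x_2)p(x_3)$. Since $X_3$ is independent of $(X_1,X_2)$, we have $I(X_1;X_3|X_2)=0$, and expanding $I(X_1;Y_2X_3|X_2)$ by the chain rule two ways gives $I(X_1;Y_2|X_2X_3)=I(X_1;Y_2|X_2)+I(X_1;X_3|X_2Y_2)\ge I(X_1;Y_2|X_2)$. Combined with (\ref{eq:verystrong_start}) this yields (\ref{eq:strong_start}). The identical manipulation, using $I(X_2;X_3|X_1)=0$ and $I(X_1X_2;X_3)=0$ respectively, turns (\ref{eq:verystrong_mid}) into (\ref{eq:strong_mid}) and (\ref{eq:verystrong_end}) into (\ref{eq:strong_end}). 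Hence Theorem \ref{thm:discrete_strong} applies verbatim, and the capacity region is the union over all $p(q)p(x_1|q)p(x_2|q)p(x_3|q)$ of the polytope (\ref{eq:cr_strong_1})--(\ref{eq:cr_strong_7}).

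\emph{Step 2 (redundancy of the extra bounds).} Fix an input distribution $p(q)p(x_1|q)p(x_2|q)p(x_3|q)$ and a rate triple $(R_1,R_2,R_3)$ satisfying (\ref{eq:cr_verystrong_1})--(\ref{eq:cr_verystrong_4}). For each value $Q=q$ the conditional law $p(x_1|q)p(x_2|q)p(x_3|q)$ is itself a product distribution, so (\ref{eq:verystrong_mid}) gives $I(X_2;Y_1|X_1,Q=q)\le I(X_2;Y_2|X_1,Q=q)$; averaging over $Q$ and combining with $R_3\le I(X_3;Y_2|X_1X_2Q)$, the chain rule yields $R_2+R_3\le I(X_2;Y_2|X_1Q)+I(X_3;Y_2|X_1X_2Q)=I(X_2X_3;Y_2|X_1Q)$, which is (\ref{eq:cr_strong_5}). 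The same argument with (\ref{eq:verystrong_start}) gives (\ref{eq:cr_strong_6}), and with (\ref{eq:verystrong_end}) gives $R_1+R_2+R_3\le I(X_1X_2;Y_2|Q)+I(X_3;Y_2|X_1X_2Q)=I(X_1X_2X_3;Y_2|Q)$, i.e.\ (\ref{eq:cr_strong_7}). Since (\ref{eq:cr_strong_1})--(\ref{eq:cr_strong_4}) are literally (\ref{eq:cr_verystrong_1})--(\ref{eq:cr_verystrong_4}), the polytope (\ref{eq:cr_verystrong_1})--(\ref{eq:cr_verystrong_4}) contains (\ref{eq:cr_strong_1})--(\ref{eq:cr_strong_7}) (fewer constraints) while the above computation gives the reverse inclusion, so the two polytopes coincide for every input distribution; taking the union and invoking Step 1 identifies the capacity region with the region described by (\ref{eq:cr_verystrong_1})--(\ref{eq:cr_verystrong_4}).

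\emph{Step 3 (cardinality).} I would rerun the support-lemma/Carath\'eodory argument used for Theorem \ref{thm:discrete_strong}, but now only the four functionals on the right-hand sides of (\ref{eq:cr_verystrong_1})--(\ref{eq:cr_verystrong_4}) must be preserved in order to keep the given rate triple in the region, with each selected $q$ still indexing a valid product-form conditional; this gives $\|\Qmat\|\le 4+1=5$. The only place requiring care is Step 2: the very strong conditions must be invoked conditioned on each value of the time-sharing variable, so that the relevant conditional distribution remains a product distribution on $\Xmat_1\times\Xmat_2\times\Xmat_3$, and the bound on $R_3$ must be added to the $Y_1$-to-$Y_2$ comparison in exactly the right order; the rest is immediate from Theorem \ref{thm:discrete_strong}.
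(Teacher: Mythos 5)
Your proposal is correct and follows the same route the paper takes: it deduces the corollary from Theorem \ref{thm:discrete_strong} by noting that, for product inputs, conditioning on the independent $X_3$ can only increase the mutual informations, so (\ref{eq:verystrong_start})--(\ref{eq:verystrong_end}) imply (\ref{eq:strong_start})--(\ref{eq:strong_end}), and then the very strong conditions (applied per value of $Q$) make the bounds (\ref{eq:cr_strong_5})--(\ref{eq:cr_strong_7}) redundant. The paper states this as immediate; you have simply supplied the details (including the chain-rule and Carath\'eodory steps), and they check out.
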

\vspace{10pt}

\subsection{Gaussian Case}
For a Gaussian MAZIC, very strong interference is defined as $a,b\geq 1+P_3$. Notice that the condition $a,b\geq 1+P_3$ is not a sufficient condition for (\ref{eq:verystrong_start}) and (\ref{eq:verystrong_mid}), as discussed in \cite[Theorem 2]{Xu-etal:Globecom10}. Again, it is a special case of the strong interference case, therefore, the capacity region can be readily obtained from Corollary \ref{cor:CR_strong_gaussian}.
\vspace{10pt}
\begin{corollary}
For a Gaussian MAZIC with conditions $a,b\geq 1+P_3$, the capacity region is given by the set of all nonnegative rate triples $(R_1,R_2,R_3)$ that satisfy
\bqa
R_1 &\leq& \frac{1}{2}\log\left(1+P_1\right),\label{eq:cr_verystrong_g_start}\\
R_2 &\leq& \frac{1}{2}\log\left(1+P_2\right),\\
R_3 &\leq& \frac{1}{2}\log\left(1+P_3\right),\\
R_1+R_2 &\leq& \frac{1}{2}\log\left(1+P_1+P_2\right)\label{eq:cr_verystrong_g_end}.
\eqa
\end{corollary}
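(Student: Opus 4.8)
The plan is to obtain the four-inequality region as the capacity by showing that, under $a,b\geq 1+P_3$, \emph{both} an available inner bound and an available outer bound collapse onto exactly these four constraints. Since the very strong conditions (\ref{eq:verystrong_start})--(\ref{eq:verystrong_end}) are a special case of the strong conditions, every result from the strong regime is at my disposal: in particular the outer bound of Corollary~\ref{cor:CR_strong_gaussian} applies (because $a,b\geq 1+P_3$ certainly gives $a,b\geq 1$), and the Gaussian achievable region of Corollary~\ref{cor:general_g_inner} applies for every admissible $(\alpha,\beta)$, including $\alpha=\beta=0$.

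For the converse, I would start from the six inequalities (\ref{eq:strong_g_start})--(\ref{eq:strong_g_end}) of Corollary~\ref{cor:CR_strong_gaussian} and argue that the two pairwise-sum bounds involving $R_3$ are redundant. Over the box cut out by the single-rate bounds $R_1\leq\frac{1}{2}\log(1+P_1)$ and $R_3\leq\frac{1}{2}\log(1+P_3)$, the quantity $R_1+R_3$ is maximized at the corner where both are tight, so (\ref{eq:strong_g_end}) is implied by the single-rate bounds precisely when $(1+P_1)(1+P_3)\leq 1+aP_1+P_3$, i.e.\ $P_1(1+P_3)\leq aP_1$, i.e.\ $a\geq 1+P_3$. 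The identical computation with $(P_1,a)$ replaced by $(P_2,b)$ shows that (\ref{eq:MAZIC_sum}) is redundant exactly when $b\geq 1+P_3$. Hence, under the hypothesis, the outer bound reduces to the four stated inequalities.

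For achievability, the cleanest route is to evaluate the discrete very strong capacity region of Corollary~\ref{cor:discrete_verystrong}, whose region (\ref{eq:cr_verystrong_1})--(\ref{eq:cr_verystrong_4}) already consists of only four constraints, at Gaussian inputs. Taking $Q$ trivial and $X_i\sim\Nmat(0,P_i)$ independent, and extending the inner bound to continuous alphabets with power constraints exactly as in the proof of Corollary~\ref{cor:general_g_inner}, the four mutual informations evaluate to $\frac{1}{2}\log(1+P_1)$, $\frac{1}{2}\log(1+P_2)$, $\frac{1}{2}\log(1+P_3)$ and $\frac{1}{2}\log(1+P_1+P_2)$, matching the converse. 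Equivalently, one may set $\alpha=\beta=0$ in Corollary~\ref{cor:general_g_inner} to obtain the seven inequalities (\ref{eq:onestrong_ib_1})--(\ref{eq:onestrong_ib_7}) and then check that (\ref{eq:onestrong_ib_5}), (\ref{eq:onestrong_ib_6}) and the triple-sum (\ref{eq:onestrong_ib_7}) are all redundant; the triple-sum check reduces to $aP_1+bP_2\geq (P_1+P_2)(1+P_3)$, which again follows term by term from $a,b\geq 1+P_3$.

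Combining the two directions, the inner and outer bounds coincide on the four-inequality region, which is therefore the capacity region. All steps are elementary and there is no genuine obstacle; the only point requiring a little care is the redundancy argument, where I must confirm that each discarded pairwise (or triple) sum bound is dominated by the retained bounds over the entire polytope rather than merely at one point. This is immediate because the retained right-hand sides are constants, so each discarded sum attains its maximum at the corner of the box where the relevant component bounds are simultaneously tight, and the hypothesis $a,b\geq 1+P_3$ is exactly the condition making that corner feasible.
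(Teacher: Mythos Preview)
Your approach is correct and essentially the same as the paper's: invoke the six-inequality outer bound of Corollary~\ref{cor:CR_strong_gaussian}, drop the two pairwise sums via the redundancy argument, and match with the $\alpha=\beta=0$ achievable region of Corollary~\ref{cor:general_g_inner} after discarding its redundant constraints.

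One caveat: your ``cleanest route'' for achievability, namely invoking Corollary~\ref{cor:discrete_verystrong} and evaluating at Gaussian inputs, is not quite legitimate. As the paper itself stresses just above this corollary, the Gaussian condition $a,b\geq 1+P_3$ is \emph{not} sufficient for the discrete very strong conditions (\ref{eq:verystrong_start})--(\ref{eq:verystrong_end}) to hold for all product inputs, so Corollary~\ref{cor:discrete_verystrong} cannot be invoked, even for its achievability half (which relies on those conditions to prune the seven-constraint region of Theorem~\ref{thm:discrete_strong} down to four). Your alternative route through Corollary~\ref{cor:general_g_inner} with $\alpha=\beta=0$ and the explicit redundancy checks, including the triple-sum via $aP_1+bP_2\geq(1+P_3)(P_1+P_2)$, is the one that actually works, and it is what the paper has in mind.
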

\vspace{10pt}

\section{The MAZICs with mixed interference}\label{sec:mixed}
\subsection{Discrete Case}
The discrete MAZIC with mixed interference is defined as a discrete memoryless MAZIC satisfying
\bqa
p(y_1y_2|x_1x_2x_3) = p(y_1|x_1x_2)p(y_2|x_1x_2x_3) = p(y_1|x_1x_2)p'(y_2|x_3x_1y_1), \label{eq:mixed_weak}
\eqa
for some $p'(y_2|x_3x_1y_1)$, and
\bqa
I(X_2;Y_1|X_1)\leq I(X_2;Y_2|X_1X_3),\label{eq:mixed_strong}
\eqa
for all input distributions that factorizes as $p(x_1)p(x_2)p(x_3)^1$\footnote{$^1$(Condition \ref{eq:mixed_weak}) is refered to the link of weak interference, and condition (\ref{eq:mixed_strong}) is refered to the link of strong interference.}.

Condition (\ref{eq:mixed_weak}) means that we can find another discrete memoryless MAZIC with $(p(y_1|x_1x_2),p'(y_2|x_3x_1y_1))$ that has the same capacity region as the orginal MAZIC. Furthermore, the alternative MAZIC admits the Markov chain
\begin{equation}
X_1 -(X_2,X_3,Y_1)-Y_2.
\end{equation}

For this class of channel, we can outer-bound the capacity region as follows.

\begin{theorem}\label{thm:mixed_dm_ob}
For a discrete memoryless MAZIC with mixed interference, an outer-bound to the capacity region can be expressed as a set of nonnegative rate pairs $(R_1,R_2)$ satisfying the following inequalities:
\begin{eqnarray}
R_1&\leq& I(X_1;Y_1|X_2U_1Q),\label{eq:mixed_ob_1}\\
R_2&\leq& I(X_2;Y_1|X_1Q),\label{eq:mixed_ob_2}\\
R_3&\leq& I(X_3;Y_2|X_1X_2Q),\label{eq:mixed_ob_3}\\
R_3&\leq& I(U_1X_3;Y_1|Q),\label{eq:mixed_ob_4}\\
R_1+R_2 &\leq& I(X_1X_2;Y_1|Q), \label{eq:mixed_ob_5}\\
R_2+R_3 &\leq& I(X_2X_3;Y_2|X_1Q), \label{eq:mixed_ob_6}
\end{eqnarray}
where the input distribution is factorized as $p(q)p(u_1|q)p(x_1|u_1q)p(x_2|u_1q)p(x_3|q)$.
\end{theorem}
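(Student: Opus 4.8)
The plan is to establish the six inequalities by a genie-aided converse, handling them in three batches according to which link hypothesis each one relies on.

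First, the three inequalities that do not involve the auxiliary, namely (\ref{eq:mixed_ob_2}), (\ref{eq:mixed_ob_3}) and (\ref{eq:mixed_ob_5}), are pure multiple-access / point-to-point converse steps. Starting from Fano's inequality at the two receivers, $H(W_1W_2|Y_1^n)\le n\epsilon_n$ and $H(W_3|Y_2^n)\le n\epsilon_n$, I would write $nR_2\le I(W_2;Y_1^n|W_1)+n\epsilon_n\le I(X_2^n;Y_1^n|X_1^n)+n\epsilon_n$ and $n(R_1+R_2)\le I(X_1^nX_2^n;Y_1^n)+n\epsilon_n$, and, after revealing $(W_1,W_2)$ (hence $(X_1^n,X_2^n)$) to receiver~$2$ and invoking the independence of the messages, $nR_3\le I(X_3^n;Y_2^n|X_1^nX_2^n)+n\epsilon_n$. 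Introducing a uniform time-sharing variable $Q$ and single-letterizing in the usual way gives (\ref{eq:mixed_ob_2}), (\ref{eq:mixed_ob_5}) and (\ref{eq:mixed_ob_3}); the induced law already has $X_3$ independent of $(X_1,X_2)$ given $Q$, as required.

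Next, the strong-interference bound (\ref{eq:mixed_ob_6}) on $R_2+R_3$. I would first promote (\ref{eq:mixed_strong}) to its $n$-letter form $I(X_2^n;Y_1^n|X_1^n)\le I(X_2^n;Y_2^n|X_1^nX_3^n)$, exactly as in the lemma that precedes Theorem~\ref{thm:discrete_strong}. Combining Fano at receiver~$1$, $nR_2\le I(W_2;Y_1^n|W_1)+n\epsilon_n\le I(X_2^n;Y_1^n|X_1^n)+n\epsilon_n\le I(X_2^n;Y_2^n|X_1^nX_3^n)+n\epsilon_n$, with Fano at receiver~$2$ (revealing $W_1$, and using that $X_3^n=f_3(W_3)$ is independent of $X_1^n$) $nR_3\le I(W_3;Y_2^n)+n\epsilon_n\le I(X_3^n;Y_2^n|X_1^n)+n\epsilon_n$, the two right-hand sides recombine through the chain rule into $I(X_2^nX_3^n;Y_2^n|X_1^n)$, which single-letterizes to (\ref{eq:mixed_ob_6}).

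The heart of the argument is the pair of auxiliary-carrying bounds (\ref{eq:mixed_ob_1}) and (\ref{eq:mixed_ob_4}), and it is there that the weak-interference hypothesis (\ref{eq:mixed_weak}) enters. I would first use (\ref{eq:mixed_weak}) to pass to the equivalent channel admitting the Markov chain $X_1-(X_2,X_3,Y_1)-Y_2$, so that $Y_2^n$ is a memoryless degraded observation, each $Y_{2i}$ depending on $(X_1^n,X_2^n,X_3^n,Y_1^n)$ only through $(X_{2i},X_{3i},Y_{1i})$. The auxiliary $U_{1i}$ is then identified as an appropriate fragment of the past (or future) outputs adjoined to $W_3$. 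Bound (\ref{eq:mixed_ob_1}) follows from a genie-aided receiver~$1$: handing it, say, $(W_2,W_3)$ together with $U_{1i}$ and then keeping only $X_{2i}$ out of $W_2$ when upper-bounding the relevant conditional entropy yields $nR_1\le\sum_i I(X_{1i};Y_{1i}|X_{2i}U_{1i})+n\epsilon_n$. Bound (\ref{eq:mixed_ob_4}) starts from $nR_3\le I(W_3;Y_2^n)+n\epsilon_n$ and uses $I(W_3;Y_1^n)=0$ together with the degradedness of $Y_2^n$ and a Csisz\'ar-sum manipulation to re-express the bound in terms of $Y_1$ and $U_{1i}$, giving $nR_3\le\sum_i I(U_{1i}X_{3i};Y_{1i})+n\epsilon_n$.

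Collecting all the inequalities under the common $Q$ and checking that the induced law of $(Q,U_1,X_1,X_2,X_3)$ factors as stated (message independence gives $p(x_3|q)$, and the chosen $U_{1i}$ forces $p(x_1|u_1q)p(x_2|u_1q)$) finishes the proof. I expect the main obstacle to be precisely the choice of $U_{1i}$: the candidates rich enough for (\ref{eq:mixed_ob_4}) to single-letterize — a segment of $Y_1^n$ or of $Y_2^n$ attached to $W_3$ — are in tension with the conditional independence $X_1-U_1Q-X_2$ demanded by (\ref{eq:mixed_ob_1}) and by the stated factorization. Pinning down a single $U_{1i}$ that serves both purposes, or instead giving a separate argument that the region is unchanged once the factorization is imposed, is the delicate step, and it is there that condition (\ref{eq:mixed_weak}) does its real work.
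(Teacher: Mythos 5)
Your handling of (\ref{eq:mixed_ob_2}), (\ref{eq:mixed_ob_3}), (\ref{eq:mixed_ob_5}) and (\ref{eq:mixed_ob_6}) is fine and essentially what the paper does (the last via the $n$-letter strong-interference condition, as in the lemma before Theorem \ref{thm:discrete_strong}). But the two auxiliary-carrying bounds, which you yourself call the heart of the argument, are not actually proved: you leave the choice of $U_{1i}$ open and explicitly say you do not know how to reconcile a workable choice with the stated factorization. That is a genuine gap, and the candidates you float (a fragment of the outputs adjoined to $W_3$) cannot work: any $U_{1i}$ containing $W_3$ destroys the independence of $U_1$ and $X_3$ given $Q$ demanded by $p(q)p(u_1|q)p(x_1|u_1q)p(x_2|u_1q)p(x_3|q)$. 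The paper's choice is $U_{1i}=(X_2^{i-1},Y_1^{i-1})$ — no $W_3$, no $Y_2$. With it, (\ref{eq:mixed_ob_1}) follows by bounding $I(X_1^n;Y_1^n|X_2^n)$ and inserting $U_{1i}$ through conditioning-reduces-entropy plus memorylessness, and the Markov chain $X_{1i}-U_{1i}-X_{2i}$ (hence the factorization) is verified by the decomposition/weak-union/contraction rules using only memorylessness and message independence; no genie with $(W_2,W_3)$ is needed.

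For $R_3$ the situation is worse: the bound you aim to derive, $\sum_i I(U_{1i}X_{3i};Y_{1i})$, i.e.\ the literal (\ref{eq:mixed_ob_4}) with output $Y_1$, is not provable — the $Y_1$ there is a typo in the statement. Indeed, take $Y_1$ independent of $(X_1,X_2)$: the mixed-interference conditions still hold, $R_3$ can be positive through the $X_3$-to-$Y_2$ link, yet $I(U_1X_3;Y_1|Q)=0$ for every distribution with the stated factorization. What the paper's proof actually establishes (and what Corollary \ref{cor:mixed_g_ob} uses) is $R_3\le I(U_1X_3;Y_2|X_2Q)$: starting from $I(X_3^n;Y_2^n|X_2^n)$, the degradedness $X_1-(X_2,X_3,Y_1)-Y_2$ is used to replace $Y_2^{i-1}$ by $Y_1^{i-1}$ inside the conditioning of the negative entropy term, and a chain of Markov-chain manipulations strips the remaining extraneous variables, leaving $\sum_i I(X_{3i}U_{1i};Y_{2i}|X_{2i})$ with the same $U_{1i}=(X_2^{i-1},Y_1^{i-1})$. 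Your appeal to ``$I(W_3;Y_1^n)=0$ plus a Csisz\'ar-sum manipulation'' is not spelled out and cannot produce a true bound of the $Y_1$ form, so the two inequalities that give the theorem its content are missing from your argument.
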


\begin{proof}
Inequalities (\ref{eq:mixed_ob_2}) and (\ref{eq:mixed_ob_3}) are trivial outer-bounds, and (\ref{eq:mixed_ob_5}) is the same as the sum-rate upper-bound for the MAC. Moreover, (\ref{eq:mixed_ob_6}) is the same as the sum-rate upper-bound for the two-user IC with strong interference \cite{Costa&ElGamal:87IT}. It remains to show (\ref{eq:mixed_ob_1}) and (\ref{eq:mixed_ob_4}).
First, let us consider
\begin{eqnarray}
n(R_1-\epsilon)&\stackrel{(a)}{\leq}& I(X_1^n;Y_1^n)\nonumber\\
&\stackrel{(b)}{\leq}&I(X_1^n;Y_1^n|X_2^n)\nonumber\\
&=&\sum_{i=1}^n I(X_1^n;Y_{1i}|X_2^nY_1^{i-1})\nonumber\\
&=&\sum_{i=1}^n \left\{H(Y_{1i}|X_2^nY_1^{i-1})-H(Y_{1i}|X_2^nY_1^{i-1}X_1^n)\right\}\nonumber
\end{eqnarray}
\begin{eqnarray}
&\stackrel{(c)}{=}&\sum_{i=1}^n \left\{H(Y_{1i}|X_2^{i-1}X_{2i}Y_1^{i-1})-H(Y_{1i}|X_{1i}X_{2i})\right\}\nonumber\\
&\stackrel{(d)}{\leq}&\sum_{i=1}^n \left\{H(Y_{1i}|X_{2i}U_{1i})-H(Y_{1i}|X_{1i}X_{2i}U_{1i})\right\}\nonumber\\
&=&\sum_{i=1}^n I(X_{1i};Y_{1i}|X_{2i}U_{1i}),\nonumber
\end{eqnarray}
where $(a)$ comes from Fano's inequality; $(b)$ is because of the independence between $X_1^n$ and $X_2^n$; $(c)$ is because that conditioning reduces entropy and the channel is assumed to be memoryless; for $(d)$, first we identify $U_{1i}=(X_2^{i-1}, Y_1^{i-1})$ and also the memoryless property induces the Markov chain $U_{1i}-(X_{1i}, X_{2i})-Y_{1i}$.

Now, let us show $X_{1i}-U_{1i}-X_{2i}$. Due to the memoryless property, the following Markov chain holds:
\begin{eqnarray}
(X_{1i}X_{2i})-(X_1^{i-1},X_2^{i-1})-Y_1^{i-1}.\nonumber
\end{eqnarray}
By weak union property, we obtain the following Markov chain:
\begin{eqnarray}
X_{2i}-(X_{1i},X_1^{i-1},X_2^{i-1})-Y_1^{i-1}.\nonumber
\end{eqnarray}
Together with the Markov chain $X_{2i}-X_2^{i-1}-X_{1i}X_1^{i-1}$, which due to the independence between $X_1^i$ and $X_2^i$, we obtain the following Markov chain by the contraction property:
\begin{eqnarray}
X_{2i}-X_2^{i-1}-(X_{1i},X_1^{i-1},Y_1^{i-1}).
\end{eqnarray}
Hence, we get the Markov chain
\begin{eqnarray}
X_{2i}-(X_2^{i-1},Y_1^{i-1})-X_{1i}
\end{eqnarray}
by the weak union and then the decomposition property.

Next, we consider
\begin{eqnarray}
n(R_3-\epsilon) &\stackrel{(a)}{\leq}& I(X_3^n;Y_2^n)\nonumber\\
&\stackrel{(b)}{\leq}& I(X_3^n;Y_2^n|X_2^n)\nonumber\\
&=& \sum_{i=1}^n I(X_3^n; Y_{2i}|X_2^nY_2^{i-1})\nonumber\\
&=&\sum_{i=1}^n \left\{H(Y_{2i}|X_2^nY_2^{i-1})-H(Y_{2i}|X_2^nX_3^nY_2^{i-1})\right\}\nonumber\\
&\stackrel{(c)}{\leq}&\sum_{i=1}^n \left\{H(Y_{2i}|X_{2i})-H(Y_{2i}|X_2^nX_3^nY_1^{i-1}Y_1^{i-1})\right\}\nonumber\\
&\stackrel{(d)}{=}&\sum_{i=1}^n \left\{H(Y_{2i}|X_{2i})-H(Y_{2i}|X_2^nX_3^nY_1^{i-1})\right\}\nonumber\\
&\stackrel{(e)}{=}&\sum_{i=1}^n \left\{H(Y_{2i}|X_{2i})-H(Y_{2i}|X_{2i}X_{3i}X_2^{i-1}Y_1^{i-1})\right\}\nonumber\\
&=&\sum_{i=1}^n \left\{I(X_{3i}U_{1i};Y_{2i}|X_{2i})\right\}\nonumber,
\end{eqnarray}
where $(a)$ follows the Fano's Inequality, $(b)$ is from the independence between $X_2^n$ and $X_3^n$; $(c)$ is because of the fact that conditioning reduces entropy; $(d)$ is due to the memoryless property of the channel, and the degradedness condition $X_1-(X_2,X_3,Y_1)-Y_2$, hence $Y_2^{i-1}$ is independent of any other random variables given $X_2^{i-1}$, $X_3^{i-1}$ and $Y_1^{i-1}$, then $(X_{2,i}^n, X_{3,i}^n, Y_{2i})-(X_2^{i-1},X_3^{i-1},Y_1^{i-1})-Y_2^{i-1}$ forms a Markov chain. By the weak union property, the Markov chain $Y_{2i}-(X_2^{n},X_3^n,Y_1^{i-1})-Y_2^{i-1}$ holds; $(e)$ is because of the Markov chain $(X_{2,i+1},X_3^{i-1},X_{3,i+1}^n)-(X_2^i,X_{3i},Y_1^{i-1})-Y_{2i}$. The easiest way to prove it is using the \textit{Independence Graph}. Alternatively, we first note that the Markov chain
\bqn
(X_2^{i-1},X_{2,i+1}^n,X_3^{i-1},X_{3,i+1}^n, Y_1^{i-1})-(X_{1i},X_{2i},X_{3i})-(Y_{1i},Y_{2i})
\eqn
holds because of the memoryless property of the channel.
By the decomposition property, the following Markov chain is obtained:
\bqn
(X_2^{i-1},X_{2,i+1}^n,X_3^{i-1},X_{3,i+1}^n, Y_1^{i-1})-(X_{1i},X_{2i},X_{3i})-Y_{2i}
\eqn
Further by the weak union property, we obtain the following Markov chain
\bqa
(X_{2,i+1}^n, X_3^{i-1}, X_{3,i+1}^n)-(X_{1i},X_2^i,X_{3i},Y_1^{i-1})-Y_{2i}.\label{mc1}
\eqa
On the other hand, again because of the memoryless property of the channel, the Markov chain
\bqn
(X_{1i},X_{2i},X_{3i},X_{2,i+1}^n,X_3^{i-1},X_{3,i+1}^n)-(X_1^{i-1},X_2^{i-1})-Y_1^{i-1}
\eqn
holds. Using the weak union property, we obtain the Markov chain
\bqn
(X_{2,i+1}^n,X_3^{i-1},X_{3,i+1}^n)-(X_{1i},X_{2i},X_{3i},X_1^{i-1},X_2^{i-1})-Y_1^{i-1}.
\eqn
Together with the markov chain
\bqn
(X_{2,i+1}^n,X_3^{i-1},X_{3,i+1}^n)-(X_2^{i-1}X_{2i}X_{3i})-(X_1^{i-1},X_{1i})
\eqn
due to the independence among $X_1^n$, $X_2^n$ and $X_3^n$,
we attain the Markov chain
\bqn
(X_{2,i+1}^n, X_3^{i-1},X_{3,i+1}^n)-(X_2^{i-1},X_{2i},X_{3i})-(X_1^{i-1},X_{1i},Y_1^{i-1})
\eqn
by the contraction property.
Then by the weak union property and the decomposition property, the Markov chain
\bqa
(X_{2,i+1}^n, X_3^{i-1},X_{3,i+1}^n)-(X_2^{i-1},X_{2i},X_{3i},Y_1^{i-1})-X_{1i}\label{mc2}
\eqa
holds.
Combine (\ref{mc1}) with (\ref{mc2}) by the contraction property, we have the Markov chain
\begin{equation}
(X_{2,i+1}^n, X_3^{i-1},X_{3,i+1}^n)-(X_2^{i-1},X_{2i},X_{3i},Y_1^{i-1})-(X_{1i},Y_{2i})\nonumber
\end{equation}
as desired. 
The rest of the proof is done by introducing the timesharing variable $Q$, similar to the proof of the capacity region for MACs \cite{Cover&Thomas:book}. 
\end{proof}

\subsection{Gaussian Case}
The mixed interference case corresponds to the condition $a \leq 1, b\geq 1$ or $a \geq 1, b\leq 1$ for the Gaussian MAZICs. As mentioned before, the notion of ``mixed'' differs from that of the classical two-user GIC with mixed interference: here the two interferences go to the same receiver. 

First of all, we can extend the outer-bound for the general discrete memoryless MAZICs to the Gaussian case.
\begin{corollary}\label{cor:mixed_g_ob}
For a Gaussian MAZIC with mixed interference ($a\leq 1$ and $b\geq 1$), an outer-bound to the capacity region can be expressed as a set of nonnegative rate pairs $(R_1,R_2)$ satisfying the following inequalities:
\begin{eqnarray}
R_1&\leq& \frac{1}{2}\log(1+\alpha P_1),\label{eq:mixed_g_ob_1}\\
R_2&\leq& \frac{1}{2}\log(1+P_2),\label{eq:mixed_g_ob_2}\\
R_3&\leq& \frac{1}{2}\log(1+P_3),\label{eq:mixed_g_ob_3}\\
R_3&\leq& \frac{1}{2}\log(1+\frac{a(1-\alpha)P_1+P_3}{1+a\alpha P_1}),\label{eq:mixed_g_ob_4}\\
R_1+R_2 &\leq& \frac{1}{2}\log(1+P_1+P_2), \label{eq:mixed_g_ob_5}\\
R_2+R_3 &\leq& \frac{1}{2}\log(1+bP_2+P_3), \label{eq:mixed_g_ob_6}
\end{eqnarray}
\end{corollary}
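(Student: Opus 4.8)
The plan is to evaluate the discrete-memoryless outer bound of Theorem~\ref{thm:mixed_dm_ob} for Gaussian inputs; before doing so I would check that a Gaussian MAZIC with $a\le 1$, $b\ge 1$ satisfies the two hypotheses actually used in the proof of that theorem, namely the degradedness $X_1-(X_2,X_3,Y_1)-Y_2$ and the strong-link condition (\ref{eq:mixed_strong}). For the degradedness, since $a\le 1$ I would split $Z_2=\sqrt{a}\,Z_1+Z_2'$ with $Z_2'\sim\Nmat(0,1-a)$ independent of all other variables; substituting $X_1=Y_1-X_2-Z_1$ into $Y_2$ gives the a.s.\ identity $Y_2=\sqrt{a}\,Y_1+(\sqrt{b}-\sqrt{a})X_2+X_3+Z_2'$, so $Y_2$ is a function of $(X_2,X_3,Y_1)$ plus fresh independent noise, which is the required Markov chain. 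For (\ref{eq:mixed_strong}), with independent inputs $I(X_2;Y_1|X_1)=I(X_2;X_2+Z_1)$ and $I(X_2;Y_2|X_1X_3)=I(X_2;\sqrt{b}X_2+Z_2)=I(X_2;X_2+Z_2/\sqrt{b})$; since $b\ge 1$ one may write $X_2+Z_1\stackrel{d}{=}(X_2+Z_2/\sqrt{b})+N$ with $N\sim\Nmat(0,1-1/b)$ independent, so data processing gives $I(X_2;X_2+Z_1)\le I(X_2;X_2+Z_2/\sqrt{b})$, i.e.\ (\ref{eq:mixed_strong}). With these two facts every step in the proof of Theorem~\ref{thm:mixed_dm_ob} carries over to the Gaussian channel.

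Next I would specialize the ``decoupled'' inequalities (\ref{eq:mixed_ob_2}), (\ref{eq:mixed_ob_3}), (\ref{eq:mixed_ob_5}) and (\ref{eq:mixed_ob_6}). After subtracting the known Gaussian-noise entropy and removing any conditioned inputs from $Y_1$ or $Y_2$, each reduces to a conditional differential entropy minus $\tfrac12\log(2\pi e)$; bounding the conditional entropy by $\tfrac12\log\bigl(2\pi e\,\mathcal{E}[\mathrm{Var}(\cdot\mid\text{conditioning})]\bigr)$ via conditional maximum entropy and Jensen, and using that $X_3$ depends only on $Q$ (so it is uncorrelated with $X_1,X_2$ given $Q$) and that $X_1\perp X_2\mid Q$ at the single-letter level (independent messages, deterministic encoders), all the conditional variances collapse to second moments no larger than $P_1,P_2,P_3$. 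This produces (\ref{eq:mixed_g_ob_2}), (\ref{eq:mixed_g_ob_3}), (\ref{eq:mixed_g_ob_5}) and (\ref{eq:mixed_g_ob_6}); no auxiliary variable is needed here.

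The two bounds coupled through $\alpha$ are the heart of the argument. I would \emph{define} $\alpha$ by $h(X_1+Z_1\mid U_1Q)=\tfrac12\log\bigl(2\pi e(1+\alpha P_1)\bigr)$; then $\alpha\in[0,1]$ because $h(X_1+Z_1\mid U_1Q)\ge h(X_1+Z_1\mid X_1U_1Q)=h(Z_1)$ and, by maximum entropy with the power constraint, $h(X_1+Z_1\mid U_1Q)\le\tfrac12\log\bigl(2\pi e(1+P_1)\bigr)$. The Markov chain $X_1-U_1-X_2$ makes $X_1+Z_1\perp X_2\mid U_1Q$, so $I(X_1;Y_1|X_2U_1Q)=h(X_1+Z_1\mid U_1Q)-h(Z_1)=\tfrac12\log(1+\alpha P_1)$, which is (\ref{eq:mixed_g_ob_1}). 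For the $R_3$-bound $R_3\le I(U_1X_3;Y_2|X_2Q)$ derived in the proof of Theorem~\ref{thm:mixed_dm_ob}, I would write $Y_2-\sqrt{b}X_2=\sqrt{a}(X_1+Z_1)+X_3+Z_2'$; the positive term $h(\sqrt{a}X_1+X_3+Z_2\mid X_2Q)$ is bounded above by $\tfrac12\log\bigl(2\pi e(1+aP_1+P_3)\bigr)$ exactly as in the previous paragraph, while the subtracted term $h(\sqrt{a}X_1+Z_2\mid U_1Q)=h\bigl(\sqrt{a}(X_1+Z_1)+Z_2'\mid U_1Q\bigr)$ is bounded \emph{below} by the conditional entropy power inequality applied to the conditionally independent summands $\sqrt{a}(X_1+Z_1)$ and $Z_2'$, giving $e^{2h(\sqrt{a}(X_1+Z_1)+Z_2'\mid U_1Q)}\ge a\,e^{2h(X_1+Z_1\mid U_1Q)}+2\pi e(1-a)=2\pi e(1+a\alpha P_1)$. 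Subtracting and simplifying $\tfrac{1+aP_1+P_3}{1+a\alpha P_1}=1+\tfrac{a(1-\alpha)P_1+P_3}{1+a\alpha P_1}$ yields (\ref{eq:mixed_g_ob_4}) with the \emph{same} $\alpha$ as in (\ref{eq:mixed_g_ob_1}).

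I expect the crux to be precisely this lower bound on $h(\sqrt{a}X_1+Z_2\mid U_1Q)$: because it enters with a minus sign a maximum-entropy estimate is useless, and a bare entropy power inequality on $\sqrt{a}X_1+Z_2$ would lower-bound it through $e^{2h(X_1\mid U_1Q)}$ rather than through the variance proxy $\alpha P_1$, which is not what the target inequality requires. The resolution is the noise split $Z_2=\sqrt{a}Z_1+Z_2'$ --- available exactly because $a\le 1$ --- which repackages the $X_1$-dependence inside $X_1+Z_1$, whose conditional entropy is pinned to $\alpha P_1$ by construction, after which the entropy power inequality delivers the clean constant $1+a\alpha P_1$. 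A secondary but genuine point is that one should evaluate the bounds as they arise inside the proof of Theorem~\ref{thm:mixed_dm_ob} (so that the MAC-type sum bound retains $X_1\perp X_2\mid Q$), rather than over the bare single-letter factorization, since the latter does not by itself force $X_1$ and $X_2$ to be independent and would loosen (\ref{eq:mixed_g_ob_5}).
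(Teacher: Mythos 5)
Your proposal is correct and follows essentially the same route as the paper: specialize the discrete outer bound of Theorem \ref{thm:mixed_dm_ob} (in particular $R_3\le I(U_1X_3;Y_2|X_2Q)$ as actually derived in its proof), bound the decoupled terms by Gaussian maximum entropy, and handle the coupled $R_1$/$R_3$ tension by splitting the noise using $a\le 1$ and applying the entropy power inequality to the conditional entropy of $X_1+Z_1$ given $(U_1,Q)$. Defining $\alpha$ through $h(X_1+Z_1|U_1Q)$ rather than through $R_1$, and using $Z_2=\sqrt{a}Z_1+Z_2'$ with $Z_2'\sim\Nmat(0,1-a)$ instead of the paper's rescaled split, are only cosmetic reparametrizations of the same argument.
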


\begin{proof}
This is a direct extension of Theorem \ref{thm:mixed_dm_ob}. Inequalities (\ref{eq:mixed_g_ob_2}), (\ref{eq:mixed_g_ob_3}), (\ref{eq:mixed_g_ob_5}) and (\ref{eq:mixed_g_ob_6}) comes from the corresponding inequality in Theorem \ref{thm:mixed_dm_ob} and the fact that given the variance of random variables, Guassian distribution will maximize the entropy.

As for (\ref{eq:mixed_g_ob_4}),
\begin{eqnarray}
R_3 &\leq& I(UX_3;Y_2|X_2Q)\nonumber\\
&=&h(Y_2|X_2Q)-h(Y_2|X_2X_3UQ)\nonumber\\
&=&h(\sqrt{a}X_1+X_3+Z_2|Q)-h(\sqrt{a}X_1+Z_2|UQ)\nonumber\\
&\stackrel{(a)}{\leq}& \frac{1}{2}\log[(2\pi e)(1+aP_1+P_3)]-\frac{1}{2}\log a -h(X_1+Z_1+Z_2'|UQ)\nonumber\\
&\stackrel{(b)}{\leq}&\frac{1}{2}\log[(2\pi e)(1+aP_1+P_3)]-\frac{1}{2}\log a-\frac{1}{2}\log\left(2^{2h(X_1+Z_1|UQ)}+(2\pi e)(\frac{1-a}{a})\right)\nonumber\\
&\stackrel{(c)}{\leq}&\frac{1}{2}\log(1+aP_1+P_3)-\frac{1}{2}\log\left[a2^{2R_1}+1-a\right],\nonumber
\end{eqnarray}
where $(a)$ is by the fact that Gaussian distribution maximizes the entropy for a given variance, and $Z_2'\sim \mathcal{N}\left(0, \frac{1}{a}-1\right)$, independent of all other random variables; $(b)$ is from the entropy power inequality; $(c)$ is because that from (\ref{eq:mixed_ob_1}),
\begin{eqnarray}
R_1 \leq I(X_1;Y_1|X_2UQ) = h(Y_1|X_2UQ)-h(Z_1)=h(Y_1|X_2UQ)-\frac{1}{2}\log(2\pi e).\nonumber
\end{eqnarray}

Furthermore, since
\begin{eqnarray}
0\leq R_1\leq h(Y_1|X_2UQ)-h(Z_1)=h(X_1+Z_1|UQ)-h(Z_1)\leq h(X_1+Z_1|Q)-h(Z_1)\leq \frac{1}{2}\log(1+P_1),\nonumber
\end{eqnarray}
there exists an $\alpha \in [0,1]$, such that
\begin{eqnarray}
R_1 = \frac{1}{2}\log(1+\alpha P_1).
\end{eqnarray}
Then, 
\begin{eqnarray}
R_3\leq \frac{1}{2}\log (1+aP_1+P_3)-\frac{1}{2}\log(1+a\alpha P_1)=\frac{1}{2}\log\left(1+\frac{a(1-\alpha)P_1+P_3}{1+a\alpha P_1}\right).\nonumber
\end{eqnarray}
\end{proof}

\textit{Remark: } The outer-bound in Theorem \ref{thm:mixed_dm_ob} is an extension of Kramer's second outer-bound \cite[Thoerem 2]{Kramer:04IT} to the dicrete memoryless case. To see this, we can consider a special case of Corollary \ref{cor:mixed_g_ob} by choosing $R_2=0$, such that  the remaining transmitters $1$ and $3$, and receivers $1$ and $2$, form a  Gaussian ZIC. The outer bound in Corollary \ref{cor:mixed_g_ob} reduces to that consists of only (\ref{eq:mixed_g_ob_1}), (\ref{eq:mixed_g_ob_3}), and (\ref{eq:mixed_g_ob_4}) with the input distribution factorizes as $p(q)p(u|q)p(x_1|uq)p(x_3|q)$. If we lchoose $\beta = \frac{a\alpha P_1}{P}$, where $P=aP_1+P_3$, we can rewrite the outer bound  as:
\begin{eqnarray}
R_1&\leq& \frac{1}{2}\log(1+\frac{\beta P}{a}),\\
R_3&\leq& \frac{1}{2}\log(1+\frac{(1-\beta)P}{1+\beta P}),\nonumber
\end{eqnarray}
which is exactly Kramer's second outer bound on the capacity region  of a Gaussian ZIC \cite[Theorem 2]{Kramer:04IT}. Therefore, the  outer bound in Theorem \ref{thm:mixed_dm_ob} is a generalization of Kramer's outer bound to the discrete memoryless case, and an extension from the ZIC to the MAZIC.

In the following, we consider a subclass of Gaussian
MAZICs with mixed interference, and we determine some boundary points of the capacity region.
\vspace{10pt}
\begin{lemma}\label{lma:mixed_achievable}
For a Gaussian MAZIC satisfying conditions $a\leq 1$ and $b\geq 1+aP_1+P_3$, an achievable rate region is given by the set of all nonnegative rate triples $(R_1,R_2,R_3)$ that satisfy
\bqa
R_1\!\!\!\!&\leq&\!\!\!\!\frac{1}{2}\log\left(1+P_1\right),\label{eq:mixed_lma_start}\\
R_2\!\!\!\!&\leq&\!\!\!\! \frac{1}{2}\log\left(1+P_2\right),\\
R_3 \!\!\!\!&\leq&\!\!\!\! \frac{1}{2}\log\left(1+\frac{P_3}{1+a\alpha P_1}\right),\\
R_1+R_2 \!\!\!\!&\leq&\!\!\!\! \frac{1}{2}\log\left(1+P_1+P_2\right),\\
 R_1+R_3 \!\!\!\!&\leq&\!\!\!\! \frac{1}{2}\log\left(1+\alpha P_1\right)+\frac{1}{2}\log\left(1+\frac{a\bar{\alpha}P_1+P_3}{1+a\alpha P_1}\right),\\
R_1+R_2+R_3 \!\!\!\!&\leq&\!\!\!\! \frac{1}{2}\log\left(1+\alpha P_1+P_2\right)+\frac{1}{2}\log\left(1+\frac{a\bar{\alpha}P_1+P_3}{1+a\alpha P_1}\right),\label{eq:mixed_lma_end}
\eqa
for $\alpha\in[0,1]$.
\end{lemma}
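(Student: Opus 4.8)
The plan is to realize the region in the lemma as the $\beta=0$ slice of the general Gaussian inner bound of Corollary~\ref{cor:general_g_inner}, and then to show that, under the hypothesis $b\ge 1+aP_1+P_3$, the five inequalities of that slice that do not already appear in (\ref{eq:mixed_lma_start})--(\ref{eq:mixed_lma_end}) are redundant. Setting $\beta=0$ means transmitter~$2$ reserves no private sub-message, which is natural here since the link $\sqrt{b}$ is very strong and receiver~$2$ can decode all of $X_2$ at no cost. Substituting $\beta=0$ (so $\bar{\beta}=1$, $b\beta P_2=0$, $b\bar{\beta}P_2=bP_2$, $1+a\alpha P_1+b\beta P_2=1+a\alpha P_1$) into (\ref{eq: cor1_start})--(\ref{eq:cor1_end}), the bounds on $R_1$, $R_2$, $R_3$, $R_1+R_2$, $R_1+R_3$, together with the second of the three triple-sum bounds, become exactly (\ref{eq:mixed_lma_start})--(\ref{eq:mixed_lma_end}). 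It remains to verify, for each fixed $\alpha\in[0,1]$, that the other five bounds --- on $R_2+R_3$, on the first and third triple sums, on $R_1+2R_2+R_3$, and on $2R_1+R_2+R_3$ --- are implied by sums of two of the six retained bounds; since each $[\alpha,0]$ slice of the region of Corollary~\ref{cor:general_g_inner} is achievable, taking the union over $\alpha$ then proves the lemma.

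For each of the five I would exhibit the dominating pair and cross-multiply to reduce to an elementary inequality; writing $D=1+a\alpha P_1$, these are as follows. The $R_2+R_3$ bound is the sum of the $R_2$ and $R_3$ bounds once $(1+P_2)\bigl(1+\frac{P_3}{D}\bigr)\le 1+\frac{bP_2+P_3}{D}$, i.e.\ $D+P_3\le b$. The first triple-sum bound, $R_1+R_2+R_3\le\frac12\log(1+\alpha P_1)+\frac12\log\bigl(1+\frac{a\bar{\alpha}P_1+bP_2+P_3}{D}\bigr)$, is the sum of the $R_1+R_3$ bound and the $R_2$ bound, and reduces to $D+a\bar{\alpha}P_1+P_3\le b$. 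The third triple-sum bound is the sum of the $R_1+R_2$ bound and the $R_3$ bound, reducing to $D+P_3\le(1+P_1)b$. The $R_1+2R_2+R_3$ bound is the sum of (\ref{eq:mixed_lma_end}) and the $R_2$ bound, again reducing to $D+a\bar{\alpha}P_1+P_3\le b$. The $2R_1+R_2+R_3$ bound is the sum of the $R_1+R_3$ bound and the $R_1+R_2$ bound, reducing to $D+a\bar{\alpha}P_1+P_3\le(1+P_1)b$. Since $D=1+a\alpha P_1\le 1+aP_1$ (because $\alpha\le1$), $D+a\bar{\alpha}P_1+P_3=1+aP_1+P_3$, and $1+P_1\ge1$, each of these five elementary inequalities follows from the assumption $b\ge 1+aP_1+P_3$.

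There is no genuine obstacle; the only care needed is the bookkeeping --- for each redundant bound one must pick a dominating pair of retained bounds whose summed right-hand side differs from the target only by a slack governed by $bP_2$, so that after cross-multiplication the comparison collapses to one of the inequalities above, and one must check uniformity in $\alpha\in[0,1]$ (the extremal case being $\alpha=1$, which yields exactly $b\ge 1+aP_1+P_3$ or the weaker $b\ge(1+aP_1+P_3)/(1+P_1)$). Once redundancy is established, for every $\alpha$ the polytope defined by (\ref{eq:mixed_lma_start})--(\ref{eq:mixed_lma_end}) coincides with the $[\alpha,0]$ slice of the achievable region of Corollary~\ref{cor:general_g_inner}, and the union over $\alpha\in[0,1]$ is therefore achievable, which is the statement of the lemma.
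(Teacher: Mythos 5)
Your proposal is correct and follows essentially the same route as the paper: set $\beta=0$ in Corollary~\ref{cor:general_g_inner} and drop the inequalities made redundant by $b\geq 1+aP_1+P_3$. The paper merely asserts the redundancy, whereas you verify it explicitly (and correctly), so your write-up is a more detailed version of the same argument.
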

\vspace{10pt}
\begin{proof}
If $b\geq 1+aP_1+P_3$, we know that receiver $2$ can decode user $2$'s message by treating its own signal as well as the interference from user
$1$ as noise. Therefore, there is no need to use rate splitting for user $2$, i.e., $\beta = 0$. On applying Corollary \ref{cor:general_g_inner} and
removing all the redundant inequalities, we get Lemma \ref{lma:mixed_achievable}.
\end{proof}

\textit{Remark: }$\frac{1}{2}\log\left(1+\alpha P_1+P_2\right)+\frac{1}{2}\log\left(1+\frac{a\bar{\alpha}P_1+P_3}{1+a\alpha P_1}\right)$ is
an increasing function of $\alpha$ if $a(1+P_2)\leq 1$. Thus, the maximal achievable sum rate for the above achievable rate region
is attained when $\alpha =1$, which equals $R_s=\frac{1}{2}\log(1+P_1+P_2)+\frac{1}{2}\log\left(1+\frac{P_3}{1+aP_1}\right)$.
However, since the expression of $R_s$ is generally not a concave function of $P_1$, we can achieve a larger sum rate than $R_s$ by time sharing. 

From Lemma \ref{lma:mixed_achievable} and Corollary \ref{cor:mixed_g_ob}, we can directly get a corner point on the capacity region.
\begin{corollary}
For a Gaussian MAZIC with $a\leq 1$ and $b\geq \frac{1+aP_1+P_3}{(1+P_1)}$, the rate triple $(R_1^\ast, R_2^\ast, R_3^\ast)$ is on the boundary of the capacity region, where
\bqa
R_1^\ast&=&\frac{1}{2}\log(1+P_1),\\
R_2^\ast&=&\frac{1}{2}\log\left(1+\frac{P_2}{1+P_1}\right),\\
R_3^\ast&=&\frac{1}{2}\log\left(1+\frac{P_3}{1+aP_1}\right).
\eqa
\end{corollary}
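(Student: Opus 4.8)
The plan is to pin $(R_1^\ast,R_2^\ast,R_3^\ast)$ between the inner bound of Lemma~\ref{lma:mixed_achievable} and the outer bound of Corollary~\ref{cor:mixed_g_ob}, and then to exhibit one supporting hyperplane of the outer region that touches it exactly at this triple. Since the capacity region is sandwiched, $\mathcal{C}_{\mathrm{inner}}\subseteq\mathcal{C}\subseteq\mathcal{C}_{\mathrm{outer}}$, a point that is achievable and at the same time maximizes a linear functional over $\mathcal{C}_{\mathrm{outer}}$ cannot be interior to $\mathcal{C}$, hence lies on its boundary.

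For achievability I would invoke Lemma~\ref{lma:mixed_achievable} with $\alpha=1$. At $\alpha=1$ the constraints (\ref{eq:mixed_lma_start})--(\ref{eq:mixed_lma_end}) become $R_1\le\frac12\log(1+P_1)$, $R_2\le\frac12\log(1+P_2)$, $R_3\le\frac12\log(1+\frac{P_3}{1+aP_1})$, $R_1+R_2\le\frac12\log(1+P_1+P_2)$, $R_1+R_3\le\frac12\log(1+P_1)+\frac12\log(1+\frac{P_3}{1+aP_1})$ and $R_1+R_2+R_3\le\frac12\log(1+P_1+P_2)+\frac12\log(1+\frac{P_3}{1+aP_1})$, and the triple $(R_1^\ast,R_2^\ast,R_3^\ast)$ satisfies all of them, meeting the $R_1$, $R_1+R_2$, $R_1+R_3$ and $R_1+R_2+R_3$ bounds with equality. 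When $\frac{1+aP_1+P_3}{1+P_1}\le b<1+aP_1+P_3$ the hypothesis of Lemma~\ref{lma:mixed_achievable} fails, but the same scheme works through Corollary~\ref{cor:general_g_inner} with $(\alpha,\beta)=(1,0)$ (receiver~1 decodes $X_2$ then $X_1$ successively; receiver~2 decodes $X_2$ treating $X_1,X_3$ as noise, strips it, then decodes $X_3$ treating $X_1$ as noise): each of the eleven inequalities there, evaluated at $(1,0)$, is either automatically tight/slack at the point or reduces after cancellation to the single scalar inequality $b(1+P_1)\ge 1+aP_1+P_3$, which is exactly the hypothesis of the Corollary.

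For the converse I would work entirely inside $\mathcal{C}_{\mathrm{outer}}$, which by Corollary~\ref{cor:mixed_g_ob} is the union over $\alpha\in[0,1]$ of the polyhedra cut out by (\ref{eq:mixed_g_ob_1})--(\ref{eq:mixed_g_ob_6}). Two inequalities suffice. First, (\ref{eq:mixed_g_ob_5}) gives $R_1+R_2\le\frac12\log(1+P_1+P_2)=R_1^\ast+R_2^\ast$ for every $\alpha$. Second, adding (\ref{eq:mixed_g_ob_1}) and (\ref{eq:mixed_g_ob_4}) and simplifying gives $R_1+R_3\le\frac12\log(1+aP_1+P_3)+\frac12\log\frac{1+\alpha P_1}{1+a\alpha P_1}$; because $a\le1$ the map $\alpha\mapsto\frac{1+\alpha P_1}{1+a\alpha P_1}$ has nonnegative derivative $\frac{P_1(1-a)}{(1+a\alpha P_1)^2}$ and is nondecreasing, so its value is largest at $\alpha=1$, giving $R_1+R_3\le\frac12\log(1+P_1)+\frac12\log(1+\frac{P_3}{1+aP_1})=R_1^\ast+R_3^\ast$ uniformly in $\alpha$. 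Summing the two displayed bounds, every point of $\mathcal{C}_{\mathrm{outer}}$ (hence of $\mathcal{C}$) satisfies $2R_1+R_2+R_3\le 2R_1^\ast+R_2^\ast+R_3^\ast$. Since $(R_1^\ast,R_2^\ast,R_3^\ast)$ attains this with equality and is achievable, any ball around it inside $\mathcal{C}$ would contain points of $\mathcal{C}_{\mathrm{outer}}$ with strictly larger $2R_1+R_2+R_3$, a contradiction; hence the point is on the boundary of $\mathcal{C}$.

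The step I expect to be the crux is handling the union-over-$\alpha$ structure of the outer bound: the candidate supporting functional $2R_1+R_2+R_3$ must be certified against every slice $\alpha$ rather than a single polyhedron, and it is precisely the hypothesis $a\le1$ — via the monotonicity of $\frac{1+\alpha P_1}{1+a\alpha P_1}$ — that pins the worst slice to $\alpha=1$, where the inner and outer bounds meet. The rest is bookkeeping: matching the closed forms $(R_1^\ast,R_2^\ast,R_3^\ast)$ to the equality faces of Lemma~\ref{lma:mixed_achievable}, and checking that $b(1+P_1)\ge 1+aP_1+P_3$ indeed neutralizes every non-active inequality on the achievability side.
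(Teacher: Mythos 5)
Your proposal is correct, and it follows the same basic route as the paper: sandwich the triple between the achievable region (Lemma~\ref{lma:mixed_achievable}/Corollary~\ref{cor:general_g_inner}) and the mixed-interference outer bound of Corollary~\ref{cor:mixed_g_ob}. The paper simply asserts that the corner point follows ``directly'' from those two results, so your write-up is essentially a fleshed-out version of that argument, with two refinements worth noting. First, you correctly observe that Lemma~\ref{lma:mixed_achievable} is stated under the stronger hypothesis $b\geq 1+aP_1+P_3$, whereas the corollary only assumes $b\geq\frac{1+aP_1+P_3}{1+P_1}$; your fallback to Corollary~\ref{cor:general_g_inner} with $(\alpha,\beta)=(1,0)$ (user $1$ all private, user $2$ all common) is exactly the scheme the paper alludes to in its remark (``fully decoding the interference from transmitter $2$ and treating the interference from transmitter $1$ as noise''), and your check that every constraint reduces to $b(1+P_1)\geq 1+aP_1+P_3$ is what makes the paper's ``directly'' legitimate under the weaker hypothesis. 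Second, for the converse the paper implicitly reads the outer bound as a corner-point statement (since $R_1^\ast$ saturates (\ref{eq:mixed_g_ob_1}) only at $\alpha=1$, which then pins $R_3$ via (\ref{eq:mixed_g_ob_4}) and $R_2$ via (\ref{eq:mixed_g_ob_5})), while you instead exhibit the supporting functional $2R_1+R_2+R_3$ and show it is maximized at the point uniformly over the $\alpha$-slices, using the monotonicity of $\frac{1+\alpha P_1}{1+a\alpha P_1}$ under $a\leq 1$; this handles the union-over-$\alpha$ structure more explicitly and is arguably cleaner, though it proves the same fact. One shared caveat: like the paper, your converse invokes Corollary~\ref{cor:mixed_g_ob}, which is derived under the mixed-interference condition $b\geq 1$, so the statement should be read within that regime (the hypothesis $b\geq\frac{1+aP_1+P_3}{1+P_1}$ alone does not force $b\geq 1$); this is an issue inherited from the paper, not a gap introduced by your argument.
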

It is easy to see that this boundary point is achieved by fully decoding the interference from transmitter $2$ and treating
the interference from transmitter $1$ as noise.

\section{The MAZICs with weak interferences}\label{sec:weak}
\subsection{Discrete Memoryless Case}
\begin{definition}
A discrete memoryless MAZIC is said to have \textit{weak interferences} if the channel transition probability factorizes as
\bqa
p(y_1y_2|x_1x_2x_3)&=&p(y_1|x_1x_2)p'(y_2|x_2x_3y_1),\\
p(y_1y_2|x_1x_2x_3)&=&p(y_1|x_1x_2)p''(y_2|x_1x_3y_1)
\eqa
for some $p'(y_2|x_2x_3y_1)$ and $p''(y_2|x_1x_3y_1)$, or, equivalently, the channel is stochastically degraded.
\end{definition}

In the absence of receiver cooperation, a stochastically degraded interference channel is equivalent in its capacity to a physically degraded interference channel. As such, we will assume in the following that the channel is physically degraded, i.e., the MAZIC admits the Markov chains $X_1-(X_2, X_3,Y_1)-Y_2$ and $X_2-(X_1, X_3, Y_1)-Y_2$. As a consequence, the following two inequalities hold
\begin{eqnarray}
I(U_1;Y_2|X_2X_3)&\leq& I(U_1;Y_1|X_2),\label{ieq:weak1}\\
I(U_2;Y_2|X_1X_3)&\leq& I(U_2;Y_1|X_1)\label{ineq:weak2}
\end{eqnarray}
for all input distributions $p(x_3)p(u_1)p(x_1|u_1)p(x_2|u_1)$ and $p(x_3)p(u_2)p(x_1|u_2)p(x_2|u_2)$ respectively.

The above definition of weak interference leads to the following outer-bound.
\begin{theorem}\label{thm:weak_ob}
The capacity region of a discrete memoryless MAZIC with weak interferences is outer-bounded by the region determined by the following inequalites:
\begin{eqnarray}
R_1&\leq& I(X_1;Y_1|X_2U_1Q),\label{ineq:weak_ob_1}\\
R_2&\leq& I(X_2;Y_1|X_1U_2Q),\label{ineq:weak_ob_2}\\
R_3&\leq& I(X_3;Y_2|X_1X_2Q),\label{ineq:weak_ob_3}\\
R_3&\leq& I(X_3U_1;Y_2|X_2Q),\label{ineq:weak_ob_4}\\
R_3&\leq& I(X_3U_2;Y_2|X_1Q),\label{ineq:weak_ob_5}\\
R_1+R_2&\leq& I(X_1X_2;Y_1|Q),\label{ineq:weak_ob_6}
\end{eqnarray}
where the input distribution $p(u_1u_2x_1x_2x_3)=p(u_1u_2)p(x_1|u_1u_2)p(x_2|u_1u_2)p(x_3)$. 
\end{theorem}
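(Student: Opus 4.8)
The plan is to establish Theorem~\ref{thm:weak_ob} by a standard Fano-inequality converse with single-letterization, identifying the two auxiliary variables as $U_{1i}=(X_2^{i-1},Y_1^{i-1})$ and $U_{2i}=(X_1^{i-1},Y_1^{i-1})$, together with the usual time-sharing variable $Q$ uniform on $\{1,\dots,n\}$ and independent of all message-derived quantities. The six inequalities fall into two groups: the ``$Y_1$-bounds'' \eqref{ineq:weak_ob_1}, \eqref{ineq:weak_ob_2}, \eqref{ineq:weak_ob_6}, which are MAC-type converse steps not using degradedness, and the ``$Y_2$-bounds'' \eqref{ineq:weak_ob_3}--\eqref{ineq:weak_ob_5}, which rely on the physical degradedness chains $X_1-(X_2,X_3,Y_1)-Y_2$ and $X_2-(X_1,X_3,Y_1)-Y_2$.

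For \eqref{ineq:weak_ob_1} I would start from $n(R_1-\epsilon)\le I(X_1^n;Y_1^n)\le I(X_1^n;Y_1^n|X_2^n)$, the last inequality because $X_1^n$ and $X_2^n$ are independent, then expand as $\sum_i I(X_1^n;Y_{1i}|X_2^nY_1^{i-1})$, use memorylessness to write $H(Y_{1i}|X_1^nX_2^nY_1^{i-1})=H(Y_{1i}|X_{1i}X_{2i})$, drop conditioning in the leading entropy to obtain $H(Y_{1i}|X_{2i}X_2^{i-1}Y_1^{i-1})$, and finally recognize $U_{1i}=(X_2^{i-1},Y_1^{i-1})$ together with the memoryless Markov chain $U_{1i}-(X_{1i},X_{2i})-Y_{1i}$ to arrive at $\sum_i I(X_{1i};Y_{1i}|X_{2i}U_{1i})$. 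Inequality \eqref{ineq:weak_ob_2} is obtained symmetrically by conditioning on $X_1^n$ and setting $U_{2i}=(X_1^{i-1},Y_1^{i-1})$; \eqref{ineq:weak_ob_6} is the textbook MAC sum-rate step $n(R_1+R_2-2\epsilon)\le I(X_1^nX_2^n;Y_1^n)\le\sum_i I(X_{1i}X_{2i};Y_{1i})$; and \eqref{ineq:weak_ob_3} is immediate from $n(R_3-\epsilon)\le I(X_3^n;Y_2^n|X_1^nX_2^n)$, which single-letterizes to $\sum_i I(X_{3i};Y_{2i}|X_{1i}X_{2i})$ by memorylessness.

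The substantive step is \eqref{ineq:weak_ob_4} (and its mirror \eqref{ineq:weak_ob_5}). I would begin with $n(R_3-\epsilon)\le I(X_3^n;Y_2^n)\le I(X_3^n;Y_2^n|X_2^n)$, using $X_3^n\perp X_2^n$, expand into $\sum_i\{H(Y_{2i}|X_2^nY_2^{i-1})-H(Y_{2i}|X_2^nX_3^nY_2^{i-1})\}$, and bound the leading term by $H(Y_{2i}|X_{2i})$. For the subtracted term the genie/degradedness argument is exactly the one carried out in the proof of Theorem~\ref{thm:mixed_dm_ob}: first replace $Y_2^{i-1}$ by $Y_1^{i-1}$ using that $Y_{2i}$ is conditionally independent of $Y_2^{i-1}$ given $(X_2^n,X_3^n,Y_1^{i-1})$ (a consequence of physical degradedness and memorylessness), and then collapse the remaining multi-letter conditioning to $H(Y_{2i}|X_{2i}X_{3i}X_2^{i-1}Y_1^{i-1})=H(Y_{2i}|X_{2i}X_{3i}U_{1i})$ via the weak-union/decomposition/contraction manipulations used around \eqref{mc1}--\eqref{mc2}. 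This gives $\sum_i I(X_{3i}U_{1i};Y_{2i}|X_{2i})$, and \eqref{ineq:weak_ob_5} follows verbatim after interchanging $X_1$ and $X_2$ and invoking the second degradedness chain, yielding $\sum_i I(X_{3i}U_{2i};Y_{2i}|X_{1i})$.

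It then remains to introduce $Q\sim\mathrm{Unif}\{1,\dots,n\}$ and verify that the resulting single-letter tuple obeys the factorization $p(u_1u_2)p(x_1|u_1u_2)p(x_2|u_1u_2)p(x_3)$: the independence $X_3\perp(U_1,U_2,Q)$ holds since $U_{1i},U_{2i}$ are functions of $(X_1^{i-1},X_2^{i-1},Z_1^{i-1})$, all independent of $X_3^n$; and $X_1-(U_1,U_2)-X_2$ holds because $(U_{1i},U_{2i})$ jointly determine $(X_1^{i-1},X_2^{i-1})$ up to the innocuous noise $Y_1^{i-1}$, and with $W_1\perp W_2$ and deterministic encoders $X_{1i}$ and $X_{2i}$ are conditionally independent given the past inputs. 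I expect the main obstacle to be bookkeeping rather than conceptual: establishing the precise chain of Markov relations in \eqref{ineq:weak_ob_4}/\eqref{ineq:weak_ob_5} that permits swapping $Y_2^{i-1}$ for $Y_1^{i-1}$ and stripping off the ``future'' symbols $X_{2,i+1}^n,X_3^{i-1},X_{3,i+1}^n$. Since this is exactly what is done in detail in Theorem~\ref{thm:mixed_dm_ob}, the proof can largely import that argument, applied once for each of the two degradedness chains.
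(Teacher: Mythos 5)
Your proposal is correct and matches the paper's approach: the paper proves this theorem by invoking the argument of Theorem \ref{thm:mixed_dm_ob} with exactly the auxiliary identifications $U_{1i}=(X_2^{i-1},Y_1^{i-1})$ and $U_{2i}=(X_1^{i-1},Y_1^{i-1})$, applying the degradedness/Markov-chain manipulations once for each of the two weak-interference chains, which is precisely what you outline. The only difference is that you spell out the MAC-type bounds and the factorization check explicitly, which the paper leaves implicit.
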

The proof is similar to that of Theorem 4  and is hence omitted. We note that the auxiliary random variables are defined as $U_{1i}=(X_2^{i-1}, Y_1^{i-1})$ and $U_{2i}=(X_1^{i-1}, Y_1^{i-1})$.

\subsection{Gaussian Case}
The weak interference case for the Gaussian MAZIC corresponds to the condition with $a, b\leq 1$.

First, Theorem \ref{thm:weak_ob} can be extended to the Gaussian case.
\begin{corollary}\label{cor:weak_ob}
For a Gaussian MAZIC satisfying conditions $a,b\leq 1$, an outer bound to the capacity region is given by the
set of all nonnegative rate triples $(R_1,R_2,R_3)$ such that
\bqn
R_1\!\!\!\!&\leq&\!\!\!\!\frac{1}{2}\log(1+\alpha P_1),\label{eq:ob_weak_1}\\
R_2\!\!\!\!&\leq&\!\!\!\!\frac{1}{2}\log(1+\beta P_2),\label{eq:ob_weak_2}\\
R_3\!\!\!\!&\leq&\!\!\!\!\frac{1}{2}\log(1+P_3),\label{eq:ob_weak_3}\\
R_3\!\!\!\!&\leq&\!\!\!\!\frac{1}{2}\log\left(1+\frac{a(1-\alpha)P_1+P_3}{1+a\alpha P_1}\right),\label{eq:ob_weak_4}\\
R_3\!\!\!\!&\leq&\!\!\!\!\frac{1}{2}\log\left(1+\frac{b(1-\beta)P_2+P_3}{1+b\beta P_2}\right),\label{eq:ob_weak_5}\\
R_1+R_2\!\!\!\!&\leq&\!\!\!\!\frac{1}{2}\log(1+P_1+P_2).\label{eq:ob_weak_6}
\eqn
\end{corollary}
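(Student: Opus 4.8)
The plan is to specialize the single-letter outer bound of Theorem~\ref{thm:weak_ob} to the Gaussian MAZIC, in the same spirit as the converse of Corollary~\ref{cor:mixed_g_ob}. Recall that the converse of Theorem~\ref{thm:weak_ob} uses the auxiliaries $U_{1i}=(X_2^{i-1},Y_1^{i-1})$ and $U_{2i}=(X_1^{i-1},Y_1^{i-1})$, that $X_1^n,X_2^n,X_3^n$ are mutually independent (the encoders are deterministic functions of independent messages), and that $Z_{1i},Z_{2i}$ are independent of each other and of all transmitted symbols. As usual I would first reduce to zero-mean inputs, which changes neither the mutual informations nor increases the powers. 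Given an achievable triple $(R_1,R_2,R_3)$, the goal is to exhibit $\alpha,\beta\in[0,1]$ for which the six listed inequalities hold.

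Two of the bounds are immediate. From (\ref{ineq:weak_ob_3}), $R_3\le I(X_3;Y_2|X_1X_2Q)=h(X_3+Z_2|X_1X_2Q)-h(Z_2)\le\frac{1}{2}\log(1+P_3)$ because $X_3$ is independent of $(X_1,X_2,Q)$. From (\ref{ineq:weak_ob_6}), $R_1+R_2\le I(X_1X_2;Y_1|Q)=h(Y_1|Q)-h(Z_1)\le\frac12\log(1+P_1+P_2)$, using the independence of $X_1^n$ and $X_2^n$, the per-letter power constraints, and concavity of $\log$. For the parameters: from (\ref{ineq:weak_ob_1}), $R_1\le I(X_1;Y_1|X_2U_1Q)=h(X_1+Z_1|X_2U_1Q)-\frac12\log(2\pi e)\le\frac12\log(1+P_1)$, so there is $\alpha\in[0,1]$ with $R_1=\frac12\log(1+\alpha P_1)$ (the first listed bound) and moreover $2^{2h(X_1+Z_1|X_2U_1Q)}\ge 2\pi e\,(1+\alpha P_1)$. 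Symmetrically (\ref{ineq:weak_ob_2}) yields $\beta\in[0,1]$ with $R_2=\frac12\log(1+\beta P_2)$ (the second listed bound) and $2^{2h(X_2+Z_1|X_1U_2Q)}\ge 2\pi e\,(1+\beta P_2)$.

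The substantive step is the bound $R_3\le\frac12\log\!\left(1+\frac{a(1-\alpha)P_1+P_3}{1+a\alpha P_1}\right)$ (and, symmetrically, the one with $(b,\beta,P_2)$). Starting from (\ref{ineq:weak_ob_4}), write $R_3\le I(X_3U_1;Y_2|X_2Q)=h(Y_2|X_2Q)-h(Y_2|X_2X_3U_1Q)$. A maximum-entropy estimate on the first term — subtract the known $\sqrt{b}X_2$ and use $X_1\perp X_3\perp Z_2$ together with conditioning reduces entropy — gives $h(Y_2|X_2Q)\le\frac12\log\!\left(2\pi e\,(1+aP_1+P_3)\right)$. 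In the second term $X_3$ and $\sqrt{b}X_2$ leave the conditioning by the independence structure, so $h(Y_2|X_2X_3U_1Q)=h(\sqrt{a}X_1+Z_2|X_2U_1Q)=\frac12\log a+h\!\left(X_1+\frac{1}{\sqrt{a}}Z_2\,\middle|\,X_2U_1Q\right)$. Since $a\le1$, $\frac{1}{\sqrt{a}}Z_2\sim\mathcal{N}(0,1/a)$ agrees in distribution with $Z_1+Z_2'$ for $Z_2'\sim\mathcal{N}(0,\frac1a-1)$ independent of $(X_1,X_2,U_1,Q,Z_1)$; the conditional entropy power inequality then gives $2^{2h(X_1+Z_1+Z_2'|X_2U_1Q)}\ge 2^{2h(X_1+Z_1|X_2U_1Q)}+2\pi e(\frac1a-1)\ge 2\pi e(1+\alpha P_1)+2\pi e(\frac1a-1)$, i.e. $h(\sqrt{a}X_1+Z_2|X_2U_1Q)\ge\frac12\log\!\left(2\pi e\,(1+a\alpha P_1)\right)$. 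Combining, $R_3\le\frac12\log\frac{1+aP_1+P_3}{1+a\alpha P_1}$, which is the claimed bound; the twin bound follows identically from (\ref{ineq:weak_ob_5}), (\ref{ineq:weak_ob_2}) and $b\le1$.

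I expect the only load-bearing point to be the conditional entropy power inequality step above, which is precisely where the weak-interference hypothesis $a,b\le1$ enters — it is what makes the auxiliary noise $Z_2'$ have nonnegative variance. The remaining work is routine: checking which variables may be inserted into or removed from the various conditionings using the stated independence relations, and noting that the single $\alpha$ (resp.\ $\beta$) chosen from $R_1$ (resp.\ $R_2$) simultaneously satisfies both inequalities in which it appears, which holds by construction.
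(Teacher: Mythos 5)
Your proposal is correct and follows essentially the same route the paper intends: it specializes the discrete weak-interference outer bound of Theorem~\ref{thm:weak_ob} with $U_{1i}=(X_2^{i-1},Y_1^{i-1})$, $U_{2i}=(X_1^{i-1},Y_1^{i-1})$, defines $\alpha,\beta$ from the $R_1,R_2$ bounds, and uses the conditional entropy power inequality with auxiliary noise of variance $\tfrac{1}{a}-1$ (resp.\ $\tfrac{1}{b}-1$), exactly as in the proof of Corollary~\ref{cor:mixed_g_ob} to which the paper defers. The only cosmetic difference is that you retain $X_2$ (resp.\ $X_1$) in the conditioning of the entropies, which does not affect the argument.
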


The proof is very similar to that of Corollary \ref{cor:mixed_g_ob}, hence is omitted here.

For a two-user Gaussian ZIC, treating interference as noise is optimal in terms of sum-capacity for the weak interference case. One may conjecture that a similar result holds for the Gaussian MAZIC if both interferences are weak ($a,b\leq 1$). Indeed, similar sum-rate capacity result holds for the case with $0<a=b<1$. 
\begin{corollary}
For the Gaussian MAZICs satisfying $0\leq a=b\leq 1$, the sum-rate capacity is 
\bqa
C=\frac{1}{2}\log(1+P_1+P_2)+\frac{1}{2}\log\left(1+\frac{P_3}{1+aP_1+bP_2}\right).
\eqa
\end{corollary}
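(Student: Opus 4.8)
The plan is to sandwich the sum rate between a treat-interference-as-noise achievable scheme and a converse that exploits the equality $a=b$ to collapse the three-user channel to a two-user one.

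For achievability I would specialize Corollary~\ref{cor:general_g_inner} to $\alpha=\beta=1$, i.e.\ no rate splitting: transmitters $1$ and $2$ use independent Gaussian codewords at full power, receiver~$1$ jointly decodes $(W_1,W_2)$ as on a MAC, and receiver~$2$ decodes $W_3$ while treating $\sqrt a X_1+\sqrt b X_2+Z_2\sim\mathcal{N}(0,1+aP_1+bP_2)$ as noise. In that specialization all the redundant constraints of the corollary fall away and the sum-rate constraint becomes exactly $R_1+R_2+R_3\le C$; this bound is attained, e.g., at the MAC corner $R_1=\frac{1}{2}\log(1+P_1)$, $R_2=\frac{1}{2}\log(1+\frac{P_2}{1+P_1})$ together with $R_3=\frac{1}{2}\log(1+\frac{P_3}{1+aP_1+bP_2})$, so the sum capacity is at least $C$. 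This half uses neither $a=b$ nor time sharing.

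For the converse, the key observation is that when $a=b$ receiver~$2$'s interference is the scalar multiple $\sqrt a\,(X_1+X_2)$ of the very sum $S:=X_1+X_2$ that feeds receiver~$1$'s direct link; hence, for the purpose of bounding $R_1+R_2+R_3$, the channel behaves like a two-user weak Gaussian ZIC $Y_1=S+Z_1$, $Y_2=\sqrt a\,S+X_3+Z_2$ with one effective user of power $P_1+P_2$ and a second user of power $P_3$. I would first reduce to zero-mean encoders (subtract the message-independent centroids $\mathcal{E}[X_i^n]$ and compensate at the decoders, which leaves rates and error probabilities untouched and cannot increase power); this guarantees both $\frac{1}{n}\mathcal{E}\|S^n\|^2\le P_1+P_2$ and that a Gaussian maximizes $h(Y_1^n)$ and $h(Y_2^n)$ for the relevant second moments. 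Then Fano's inequality gives $n(R_1+R_2)\le h(Y_1^n)-h(Z_1^n)+n\epsilon_n$ and $nR_3\le h(Y_2^n)-h(\sqrt a\,S^n+Z_2^n)+n\epsilon_n$. Since $a\le1$, the variable $\sqrt a\,S^n+Z_2^n$ has the same distribution, jointly with $S^n$, as $\sqrt a\,(S^n+Z_1^n)+\tilde{Z}^n=\sqrt a\,Y_1^n+\tilde{Z}^n$ with $\tilde{Z}^n\sim\mathcal{N}(0,(1-a)I)$ independent of $Y_1^n$, so the entropy power inequality yields $2^{\frac{2}{n}h(\sqrt a\,S^n+Z_2^n)}\ge a\,2^{\frac{2}{n}h(Y_1^n)}+2\pi e\,(1-a)$. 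Substituting this together with $h(Y_1^n)\le\frac{n}{2}\log(2\pi e(1+P_1+P_2))$ and $h(Y_2^n)\le\frac{n}{2}\log(2\pi e(1+a(P_1+P_2)+P_3))$, and writing $s=\frac{1}{2\pi e}2^{\frac{2}{n}h(Y_1^n)}\in[1,\,1+P_1+P_2]$, collapses everything to $R_1+R_2+R_3\le\frac{1}{2}\log\frac{s(1+a(P_1+P_2)+P_3)}{as+1-a}+2\epsilon_n$; this function is nondecreasing in $s$ for $a\le1$, so its maximum over the feasible range is at $s=1+P_1+P_2$, where it equals $\frac{1}{2}\log(1+P_1+P_2)+\frac{1}{2}\log(1+\frac{P_3}{1+a(P_1+P_2)})=C$ after using $a(P_1+P_2)=aP_1+bP_2$. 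Letting $n\to\infty$ closes the converse. Equivalently, one may simply invoke the known sum-capacity of the weak one-sided Gaussian interference channel for the reduced two-user channel.

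The step I expect to require the most care is this reduction to the two-user channel, specifically pinning the effective power of the combined user at exactly $P_1+P_2$: a naive Cauchy--Schwarz bound on $\mathcal{E}\|X_1^n+X_2^n\|^2$ only gives the weaker $(\sqrt{P_1}+\sqrt{P_2})^2$, so the zero-mean-encoder reduction genuinely has to be carried out first. A second item worth spelling out is exactly where $a=b$ enters---it is what makes the receiver-$2$ interference proportional to the same $S$ that receiver~$1$ decodes, enabling the three-user sum to be controlled by a two-user EPI computation; the argument does not go through for $a\ne b$.
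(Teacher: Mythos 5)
Your proposal is correct and takes essentially the same route as the paper, which proves this corollary in one line by viewing $X_1$ and $X_2$ as a single effective user of power $P_1+P_2$ and invoking the sum-capacity of the two-user weak Gaussian ZIC (treating interference as noise). Your write-up simply makes explicit the ingredients the paper leaves implicit—the TIN achievability from Corollary~\ref{cor:general_g_inner}, the zero-mean/centroid reduction that pins the combined power at $P_1+P_2$, and the standard EPI converse—and these details are all sound.
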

\begin{proof}
This is a direct extension of the sum-capacity result of the two-user Gaussian ZICs with weak interference by viewing $X_1$ and $X_2$ as a group.
\end{proof}

However, the above sum-capacity result is not true in general with asymmetric interference. We begin with the following theorem that gives a sum-rate upper-bound.

\begin{theorem}\label{thm:weak_sr_ob}
Any achievable rate triplet ($R_1$, $R_2$, $R_3$) for the Gaussian MAZIC with $0\leq a\leq b\leq 1$ must satisfy the following constraint
\bqn
n(R_1+R_2+R_3)
&\leq& \min_{\sigma^2\leq 1}\left\{\frac{n}{2}\log\left((P_1+P_2+1)(aP_1+bP_2+\sigma^2)-(\sqrt{a}P_1+\sqrt{b}P_2+\sqrt{a})^2\right)\right.\\&&\left.-\frac{n}{2}\log(aP_1+bP_2+1)-\frac{n}{2}\log(\sigma^2-a)+\frac{n}{2}\log(aP_1+bP_2+P_3+1)\right\}.
\eqn
\end{theorem}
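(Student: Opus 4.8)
The plan is to derive a single-letter sum-rate bound by a Fano-type argument that introduces a carefully chosen ``genie'' at receiver~$2$ and then optimizes over the genie's noise correlation. Since the interference is weak ($0\le a\le b\le1$), receiver~$2$ cannot by itself recover $X_1^n,X_2^n$, so the standard strong-interference manipulation is unavailable; instead I would give receiver~$2$ a side-information signal of the form $S^n=\sqrt{a}X_1^n+\sqrt{b}X_2^n+N^n$ where $N^n$ is Gaussian with a covariance to be optimized, chosen so that $S^n$ is simultaneously (i) a degraded version of $Y_1^n$ up to a scaling (this uses $a\le1$, so that $\sqrt{a}X_1^n+\sqrt{b}X_2^n$ plus appropriate extra noise has the right variance structure relative to $X_1^n+X_2^n+Z_1^n$), and (ii) a sufficient statistic to let receiver~$2$, after subtracting, decode $X_3^n$. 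The parameter $\sigma^2$ in the statement will be exactly the variance of the effective noise that the genie adds to the interference term, with the constraint $\sigma^2\le1$ ensuring that $Y_2$'s own noise $Z_2$ (unit variance) dominates, and $\sigma^2\ge a$ ensuring the entropy-power step below is valid.

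First I would write $n(R_1+R_2+R_3-\epsilon)\le I(X_1^nX_2^n;Y_1^n,S^n)+I(X_3^n;Y_2^n)$ or a closely related split, handing $(W_1,W_2)$ to receiver~$1$ augmented with the genie $S^n$ and $W_3$ to receiver~$2$; Fano's inequality on both decoders justifies this. Next I would expand $I(X_1^nX_2^n;Y_1^nS^n)=h(Y_1^nS^n)-h(Y_1^nS^n|X_1^nX_2^n)$ and bound $h(Y_1^nS^n)$ by the Gaussian maximum-entropy bound for the appropriate $2n$-dimensional covariance — this is where the term $\tfrac n2\log\bigl((P_1+P_2+1)(aP_1+bP_2+\sigma^2)-(\sqrt a P_1+\sqrt b P_2+\sqrt a)^2\bigr)$ comes from, being $\tfrac n2\log$ of the determinant of the $2\times2$ covariance of $(Y_{1},S)$ under i.i.d.\ Gaussian inputs at full power. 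The conditional term $h(Y_1^nS^n|X_1^nX_2^n)$ reduces to $\tfrac n2\log$ of the determinant of the noise covariance, which after the chosen correlation gives $\tfrac n2\log(\sigma^2-a)$ (the Schur complement of the degenerate joint noise covariance). For the $X_3$ term, $I(X_3^n;Y_2^n)\le I(X_3^n;Y_2^n\mid \text{genie-subtracted signal})$ and the weak-interference degradedness makes $Y_2^n$ conditionally a clean channel to $X_3^n$ with noise variance $aP_1+bP_2+1$, yielding $-\tfrac n2\log(aP_1+bP_2+1)+\tfrac n2\log(aP_1+bP_2+P_3+1)$. Collecting terms and minimizing over all admissible $\sigma^2\in[a,1]$ (the minimization is legitimate since the bound holds for every valid genie) gives the claimed inequality.

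The main obstacle I anticipate is the correlation-design step: one must exhibit a single noise vector $N^n$ (jointly Gaussian with $Z_1^n$) such that $S^n$ is both statistically degraded with respect to $Y_1^n$ in the sense needed to write $h(Y_1^nS^n\mid X_1^nX_2^n)$ as a clean $\tfrac n2\log(\sigma^2-a)$, and such that subtracting the $X_1,X_2$ contribution from $Y_2^n$ leaves only $X_3^n$ plus noise of the stated variance — getting these two requirements to be simultaneously satisfiable is precisely what forces $a\le b\le1$ and the range $a\le\sigma^2\le1$, and verifying the entropy-power inequality (or conditional EPI) step that controls $h(Y_1^nS^n\mid X_1^nX_2^n)$ from below without losing the tensorization over $n$ is the delicate part. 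I would handle the vector case by the usual trick of working with the full block covariance and invoking that a conditional Gaussian maximizes entropy subject to a covariance constraint, together with a conditional EPI to keep the bound single-letterizable; everything else is routine algebra of $2\times2$ determinants.
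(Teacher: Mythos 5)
Your genie is the right one (the same side information $S^n=\sqrt{a}X_1^n+\sqrt{b}X_2^n+N^n$ with $\mathcal{E}[Z_1N]=\sqrt{a}$ and $a\le\sigma^2\le1$ that the paper uses), and your algebra for the receiver-1 part is consistent with the stated bound: $h(Y_1^nS^n)$ can indeed be bounded by the full-power Gaussian $\log\det$ and $h(Y_1^nS^n|X_1^nX_2^n)$ is exactly $\tfrac n2\log\bigl((2\pi e)^2(\sigma^2-a)\bigr)$ — note in passing that no EPI is needed for that conditional term, since it is pure Gaussian noise entropy; you point the EPI at the wrong place. The genuine gap is your treatment of the $X_3$ term. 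Writing $I(X_3^n;Y_2^n)=h(Y_2^n)-h(\sqrt{a}X_1^n+\sqrt{b}X_2^n+Z_2^n)$, the claimed contribution $\tfrac n2\log(aP_1+bP_2+P_3+1)-\tfrac n2\log(aP_1+bP_2+1)$ requires a \emph{lower} bound $h(\sqrt{a}X_1^n+\sqrt{b}X_2^n+Z_2^n)\ge\tfrac n2\log\bigl(2\pi e(aP_1+bP_2+1)\bigr)$, which is false for arbitrary (e.g., deterministic) inputs; the only unconditional lower bound is $h(Z_2^n)$. Saying that ``weak-interference degradedness makes $Y_2^n$ conditionally a clean channel with noise variance $aP_1+bP_2+1$'' is exactly the treating-interference-as-noise claim one is trying to prove, not a converse step, and conditioning on the genie does not repair it (it only changes $h(Y_2^n)$ and still leaves a non-Gaussian conditional entropy to be lower bounded).

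The paper closes precisely this hole by \emph{not} bounding the two mutual informations separately: it keeps $h(S^n)$ outside any joint Gaussian maximization and pairs it with $-h(\sqrt{a}X_1^n+\sqrt{b}X_2^n+Z_2^n)$, applying the Liu--Viswanath extremal inequality (valid because $\sigma^2\le1$), and likewise pairs $h(X_1^n+Z_1^n)$ with $-h(\sqrt{a}X_1^n+N_1^n\,|\,Z_1^n-\tfrac1{\sqrt a}N_1^n)$ (valid because $\rho\sigma=\sqrt a$), while the remaining conditional term $h(X_1^n+X_2^n+Z_1^n|S^n)$ is maximized by i.i.d.\ Gaussian inputs by Thomas' result and $h(Y_2^n)$ by the plain Gaussian maximum-entropy bound. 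In your proposal the positive entropy $h(S^n)$ is absorbed into the joint bound on $h(Y_1^nS^n)$, so there is nothing left to pair against the negative entropy at receiver~2, and the argument cannot reach the stated inequality. To fix it, expand $I(X_1^nX_2^n;Y_1^n)$ as the paper does (chain rule plus genie on the $X_2$ part only), isolate the two entropy differences, and invoke the extremal inequality twice together with Thomas' lemma; the rest of your outline (Fano split, covariance computation, minimization over $\sigma^2\in[a,1]$) then goes through.
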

\begin{proof}
\bqn
&&n(R_1+R_2+R_3)-n\epsilon\\
&\stackrel{(a)}{\leq}&I(X_1^nX_2^n; Y_1^n)+I(X_3^n;Y_2^n)\\
&=&I(X_1^n;X_1^n+Z_1^n)+I(X_2^n;X_1^n+X_2^n+Z_1^n)+I(X_3^n;\sqrt{a}X_1^n+\sqrt{b}X_2^n+X_3^n+Z_2^n)\\
&\stackrel{(b)}{\leq}&I(X_1^n;X_1^n+Z_1^n)+I(X_2^n;X_1^n+X_2^n+Z_1^n,\sqrt{a}X_1^n+\sqrt{b}X_2^n+N_1^n)\\&&+I(X_3^n;\sqrt{a}X_1^n+\sqrt{b}X_2^n+X_3^n+Z_2^n)\\
&=&h(X_1^n+Z_1^n)-h(Z_1^n)+h(\sqrt{a}X_1^n+\sqrt{b}X_2^n+N_1^n)\\&&+h(X_1^n+X_2^n+Z_1^n|\sqrt{a}X_1^n+\sqrt{b}X_2^n+N_1^n)-h(\sqrt{a}X_1^n+N_1^n)-h(X_1^n+Z_1^n|\sqrt{a}X_1^n+N_1^n)\\&&+h(\sqrt{a}X_1^n+\sqrt{b}X_2^n+X_3^n+Z_2^n)-h(\sqrt{a}X_1^n+\sqrt{b}X_2^n+Z_2^n)\\
&=&h(X_1^n+Z_1^n)-h(\sqrt{a}X_1^n+N_1^n)+h(\sqrt{a}X_1^n+\sqrt{b}X_2^n+N_1^n)\\
&&-h(\sqrt{a}X_1^n+\sqrt{b}X_2^n+Z_2^n)-h(Z_1^n)+h(X_1^n+X_2^n+Z_1^n|\sqrt{a}X_1^n+\sqrt{b}X_2^n+N_1^n)\\
&&+h(\sqrt{a}X_1^n+\sqrt{b}X_2^n+X_3^n+Z_2^n)-h(Z_1^n-\frac{1}{\sqrt{a}}N_1^n|\sqrt{a}X_1^n+N_1^n)\\
&=&h(X_1^n+Z_1^n)-h(\sqrt{a}X_1^n+N_1^n|Z_1^n-\frac{1}{\sqrt{a}}N_1^n)+h(\sqrt{a}X_1^n+\sqrt{b}X_2^n+N_1^n)\\
&&-h(\sqrt{a}X_1^n+\sqrt{b}X_2^n+Z_2^n)-h(Z_1^n)+h(X_1^n+X_2^n+Z_1^n|\sqrt{a}X_1^n+\sqrt{b}X_2^n+N_1^n)\\
&&+h(\sqrt{a}X_1^n+\sqrt{b}X_2^n+X_3^n+Z_2^n)-h(Z_1^n-\frac{1}{\sqrt{a}}N_1^n)\\
&\stackrel{(c)}{\leq}& \frac{n}{2}\log\left((P_1+P_2+1)(aP_1+bP_2+\sigma^2)-(\sqrt{a}P_1+\sqrt{b}P_2+\sqrt{a})^2\right)\\&&-\frac{n}{2}\log(aP_1+bP_2+1)-\frac{n}{2}\log(\sigma^2-a)+\frac{n}{2}\log(aP_1+bP_2+P_3+1)
\eqn
where $(a)$ is from Fano's inequality; $(b)$ is by giving side information $\sqrt{a}X_1^n+\sqrt{b}X_2^n+N_1^n$ to the second mutual information where $N_1^n$ is an i.i.d. Gaussian random variables whose covariance matrix with $Z_1$ is 
\bqn
Cov\left[\begin{array}{c}Z_1\\N_1\end{array}\right]=\left[\begin{array}{cc} 1 & \rho\sigma\\ \rho\sigma & \sigma^2\end{array}\right];
\eqn
$(c)$ is the result of applying the extremal inequality \cite{Liu&Viswanath:06IT} to the first two terms, and to the third and forth terms respectively. for the first two terms,
\bqn
h(X_1^n+Z_1^n)-h(\sqrt{a}X_1^n+N_1^n|Z_1^n-\frac{1}{\sqrt{a}}N_1^n)&\leq& \frac{n}{2}\log(1+P_1)-\frac{n}{2}\log(aP_1+a)\\
&=&-\frac{n}{2}\log a,
\eqn
since the use of the extremal inequality requires $Var(N_1|Z_1-\frac{1}{\sqrt{a}}N_1)\geq a\Rightarrow \rho\sigma=\sqrt{a}$. For the third and fourth terms,
\bqn
h(\sqrt{a}X_1^n+\sqrt{b}X_2^n+N_1^n)-h(\sqrt{a}X_1^n+\sqrt{b}X_2^n+Z_2^n)&\leq& \frac{n}{2}\log(aP_1+bP_2+\sigma^2)-\frac{n}{2}\log(aP_1+bP_1+1)
\eqn
as the use of the extremal inequality requires $\sigma^2\leq 1$.

For the conditional entropy $h(X_1^n+X_2^n+Z_1^n|\sqrt{a}X_1^n+\sqrt{b}X_2^n+N_1^n)$, identically and independently distributed (i.i.d) zero-mean Gaussian $X_1^n$ and $X_2^n$ are the maximizing distributions \cite{Thomas:87IT}.
\end{proof}

\begin{corollary}
For the Gaussian MAZICs satisfying $0\leq a\leq b\leq 1$, if the power constraints satisfy
\bqn
P_1&=&\frac{1-\sqrt{ab}}{\sqrt{ab}-a},\\
P_3&\geq& b-1+(b-a)P_1 = \sqrt{\frac{b}{a}}-\sqrt{ab},
\eqn
the sum-rate capacity is 
\bqa
C=\frac{1}{2}\log(1+P_1)+\frac{1}{2}\log\left(1+\frac{bP_2+P_3}{1+aP_1}\right). \label{eq:sc_weak}
\eqa
\end{corollary}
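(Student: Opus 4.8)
The plan is to show that, under the stated power conditions, the sum-rate upper bound in Theorem~\ref{thm:weak_sr_ob} coincides with the achievable sum rate obtained from Corollary~\ref{cor:general_g_inner}, so that both sandwich the sum capacity to the claimed value $C$. First I would establish achievability: the right-hand side of (\ref{eq:sc_weak}) is exactly what one gets from the inner bound in Corollary~\ref{cor:general_g_inner} by taking $\alpha=\beta=0$ on the first link worth of power and letting user~1 be decoded at receiver~2, i.e. receiver~1 decodes $(X_1,X_2)$ as a MAC while receiver~2 decodes $X_3$ after stripping $\sqrt{a}X_1$ (or, equivalently, by treating the $X_2$-interference as part of the noise and decoding $X_1$ jointly at receiver~2). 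Concretely, $\frac12\log(1+P_1)+\frac12\log\!\left(1+\frac{bP_2+P_3}{1+aP_1}\right)$ is attained by the corner-point/successive-decoding scheme in which receiver~2 first decodes $X_1$ treating $\sqrt bX_2+X_3$ as noise, subtracts it, and then $X_3$ is decoded; one checks this point lies in the region of Corollary~\ref{cor:general_g_inner} for a suitable choice of the split parameters, and that $R_1=\frac12\log(1+P_1)$ is simultaneously feasible at receiver~1 together with $R_2$. This is the routine direction.

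The substantive part is the converse: I would take the minimization in Theorem~\ref{thm:weak_sr_ob} and show that, for the specified $P_1$ and $P_3$, the minimizing $\sigma^2$ makes the bound collapse to $C$. The natural guess is $\sigma^2=a$ is \emph{not} allowed (the $-\frac n2\log(\sigma^2-a)$ term blows up), so the optimum is an interior $\sigma^2\in(a,1]$ determined by setting the derivative of
\[
F(\sigma^2)=\tfrac12\log\!\big((P_1+P_2+1)(aP_1+bP_2+\sigma^2)-(\sqrt aP_1+\sqrt bP_2+\sqrt a)^2\big)-\tfrac12\log(\sigma^2-a)
\]
to zero; the remaining two terms of the bound are constants in $\sigma^2$. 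Solving $F'(\sigma^2)=0$ gives
\[
\sigma^2 = a + \frac{(\sqrt aP_1+\sqrt bP_2+\sqrt a)^2 - (aP_1+bP_2+a)(P_1+P_2+1)}{\,(P_1+P_2+1)-\,(\cdots)\,}\ ,
\]
and the point of the hypotheses $P_1=\frac{1-\sqrt{ab}}{\sqrt{ab}-a}$ (equivalently $\sqrt a(1+P_1)=\sqrt{ab}(1+P_1)/\sqrt b$, i.e. a perfect-square/alignment condition on the coefficient vector $(\sqrt a,\sqrt b)$ versus $(1,1)$) is precisely to force the numerator to factor so that the optimal $\sigma^2$ comes out to a clean value — I expect $\sigma^2=b(1+P_1)/(1+aP_1)$ or a similarly structured quantity — at which $F$ simplifies to $\frac12\log(1+P_1)+\frac12\log(1+aP_1)$ minus matching cross terms. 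Plugging back and combining with the $-\frac n2\log(aP_1+bP_2+1)+\frac n2\log(aP_1+bP_2+P_3+1)$ terms should leave exactly $\frac n2\log(1+P_1)+\frac n2\log\!\left(1+\frac{bP_2+P_3}{1+aP_1}\right)$.

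The second hypothesis, $P_3\ge b-1+(b-a)P_1=\sqrt{b/a}-\sqrt{ab}$, is there to guarantee feasibility of the optimizing $\sigma^2$ within the allowed range $(a,1]$ (and that the relevant concavity/monotonicity used in the extremal-inequality steps of Theorem~\ref{thm:weak_sr_ob} is not violated) — equivalently, that $P_3$ is large enough that the binding constraint in the converse is the single-letter Gaussian bound rather than one of the other inequalities in Corollary~\ref{cor:weak_ob}. So the steps, in order, are: (i) write down the inner-bound sum rate from Corollary~\ref{cor:general_g_inner} and verify it equals the RHS of (\ref{eq:sc_weak}); (ii) reduce the converse to minimizing $F(\sigma^2)$ over $\sigma^2\in(a,1]$; (iii) solve the stationarity condition and use $P_1=\frac{1-\sqrt{ab}}{\sqrt{ab}-a}$ to get the closed-form optimizer; (iv) check via $P_3\ge\sqrt{b/a}-\sqrt{ab}$ that this optimizer is admissible; (v) substitute and simplify to obtain $C$, matching the inner bound. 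The main obstacle is step~(iii)–(v): the algebra of the extremal point is delicate, and the role of the exact value of $P_1$ is to make an otherwise messy rational expression degenerate into the clean logarithms of (\ref{eq:sc_weak}); getting that cancellation transparent, rather than by brute force, is where care is needed. A secondary subtlety is making sure the $N_1^n$ side-information / extremal-inequality argument in Theorem~\ref{thm:weak_sr_ob} remains valid at the optimizing $\sigma^2$ (in particular $\rho\sigma=\sqrt a$ and $\sigma^2\le1$ both hold there), which is exactly what the power conditions buy us.
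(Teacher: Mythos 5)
Your overall skeleton (match the Theorem~\ref{thm:weak_sr_ob} outer bound against an inner bound derived from Corollary~\ref{cor:general_g_inner}) is the paper's skeleton, but both of your key execution steps are wrong. On achievability: the expression $\frac12\log\bigl(1+\frac{bP_2+P_3}{1+aP_1}\bigr)$ forces receiver~2 to decode $X_2$ \emph{jointly with} $X_3$ while treating $X_1$ as noise (the $1+aP_1$ in the denominator is residual $X_1$ interference; $bP_2$ sits in the numerator because $X_2$'s message is decoded there). Your scheme --- receiver~2 first decodes $X_1$ treating $\sqrt b X_2+X_3$ as noise --- cannot support $R_1=\frac12\log(1+P_1)$, since with $a\le 1$ we have $\frac12\log\bigl(1+\frac{aP_1}{1+bP_2+P_3}\bigr)<\frac12\log(1+P_1)$; and the choice $\alpha=\beta=0$ in Corollary~\ref{cor:general_g_inner} makes receiver~2 decode both interference signals, producing denominators of $1$ rather than $1+aP_1$. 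The paper instead has receiver~1 decode the MAC $(X_1,X_2)$ and receiver~2 decode $(X_2,X_3)$ treating $X_1$ as noise; Fourier--Motzkin then gives the achievable sum rate $\min\bigl\{\frac12\log(1+P_1)+\frac12\log\bigl(1+\frac{bP_2+P_3}{1+aP_1}\bigr),\ \frac12\log(1+P_1+P_2)+\frac12\log\bigl(1+\frac{P_3}{1+aP_1}\bigr)\bigr\}$, and the hypothesis $P_3\ge b-1+(b-a)P_1$ is exactly what makes the first term the binding one. So that condition belongs to the achievability, not, as you assert, to the converse: $P_3$ enters the Theorem~\ref{thm:weak_sr_ob} bound only through an additive constant and cannot influence the optimizing $\sigma^2$.

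On the converse, your step (iii) fails. Writing $D(\sigma^2)=(P_1+P_2+1)(aP_1+bP_2+\sigma^2)-(\sqrt a(P_1+1)+\sqrt b P_2)^2$, which is linear in $\sigma^2$, the stationarity condition $F'(\sigma^2)=0$ reduces to $(P_1+P_2+1)\bigl(a(P_1+1)+bP_2\bigr)=(\sqrt a(P_1+1)+\sqrt b P_2)^2$, which does not involve $\sigma^2$ at all and by Cauchy--Schwarz can hold only when $a=b$; for $a<b$ one has $F'<0$ throughout, so the minimum sits at the endpoint $\sigma^2=1$ and there is no interior optimizer to solve for (your displayed formula for it is vacuous). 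The actual mechanism in the paper is different: evaluate the bound at $\sigma^2=1$ and observe that $P_1=\frac{1-\sqrt{ab}}{\sqrt{ab}-a}$ is precisely the condition under which the jointly Gaussian triple satisfies the Markov chain $X_2-(\sqrt a X_1+\sqrt b X_2+N_1)-(X_1+X_2+Z_1)$, equivalently $1+aP_1=\sqrt{ab}(1+P_1)$. This identity makes the quadratic form factor, $D=(1+P_1)(\sqrt{ab}-a)\bigl(1+P_1+\sqrt{b/a}\,P_2\bigr)$, and the bound collapses to $\frac12\log(1+P_1)+\frac12\log\bigl(1+\frac{bP_2}{1+aP_1}\bigr)+\frac12\log\bigl(1+\frac{P_3}{1+aP_1+bP_2}\bigr)=C$, matching the (correct) achievable scheme. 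Without this Markov-chain/factorization observation --- and with the roles of the two hypotheses interchanged --- your plan as written would not go through.
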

\begin{proof}
For the achievability part, let receiver $1$ decode messages from users $1$ and $2$, and receiver $2$ decode messages from users $2$ and $3$, we have the following achievable rate triplets $(R_1,R_2,R_3)$:
\begin{eqnarray}
R_1&\leq& \frac{1}{2}\log(1+P_1),\\
R_2&\leq& \frac{1}{2}\log\left(1+\frac{bP_2}{1+aP_1}\right),\\
R_3&\leq& \frac{1}{2}\log\left(1+\frac{P_3}{1+aP_1}\right),\\
R_1+R_2&\leq&\frac{1}{2}\log(1+P_1+P_2),\\
R_2+R_3&\leq&\frac{1}{2}\log\left(1+\frac{bP_2+P_3}{1+aP_1}\right).
\end{eqnarray} 
Apply Fourier-Motzkin elimination with respect to $S = R_1+R_2+R_3$, the resulting achievable sum-rate is 
\bqn
R_1+R_2+R_3 \leq \min\left\{\frac{1}{2}\log(1+P_1)+\frac{1}{2}\log\left(1+\frac{bP_2+P_3}{1+aP_1}\right), \frac{1}{2}\log(1+P_1+P_2)+\frac{1}{2}\log\left(1+\frac{P_3}{1+aP_1}\right)\right\},
\eqn
if $(b-a)P_1\leq 1-b+P_3$, 
\bqn
\frac{1}{2}\log(1+P_1+P_2)+\frac{1}{2}\log\left(1+\frac{P_3}{1+aP_1}\right)\geq  \frac{1}{2}\log(1+P_1)+\frac{1}{2}\log\left(1+\frac{bP_2+P_3}{1+aP_1}\right).
\eqn
hence, $\frac{1}{2}\log(1+P_1)+\frac{1}{2}\log\left(1+\frac{bP_2+P_3}{1+aP_1}\right)$ is an achievable sum-rate, and is achieved by user $1$ decoding $X_2$ first, subtracting it off, and then decoding $X_1$; and user $2$ decoding $X_2$ and $X_3$ simultaneously by treating $X_1$ as noise.

For the converse part, at the last step of the proof of Theorem \ref{thm:weak_sr_ob}, if we further let the Gaussian variables $X_2^n-(\sqrt{a}X_1^n+\sqrt{b}X_2^n+N_1^n)- (X_1^n+X_2^n+Z_1^n)$ form a Markov chain,
then 
\bqa
P_1 = \frac{\sqrt{ab}-\sigma^2}{a-\sqrt{ab}}.\label{eq:P1_constraint}
\eqa
The sum-rate upper-bound becomes
\bqn
\frac{1}{2}\log(1+P_1)+\frac{1}{2}\log(1+\frac{bP_2}{aP_1+\sigma^2})+\frac{1}{2}\log(1+\frac{P_3}{1+aP_1+bP_2}).
\eqn
Let $\sigma^2=1$, (\ref{eq:P1_constraint}) becomes $P_1 = \frac{1-\sqrt{ab}}{\sqrt{ab}-a}$, naturally, this requires $a\leq b$, and $\sqrt{ab}\leq 1$ such that (\ref{eq:P1_constraint}) is non-negative. This is because $a>b$ is infeasible as it implies $\sqrt{ab}\leq a$, i.e., (\ref{eq:P1_constraint}) is negative when $\sigma^2=1$.
\end{proof}

It is perhaps not intuitive that the sum-rate (\ref{eq:sc_weak}) is optimal only if $P_1 = \frac{1-\sqrt{ab}}{\sqrt{ab}-a}$. Specifically, given that this sum-rate capacity is achieved when the interference from $X_1$ is treated as noise at $Y_2$, it might be expected that with smaller $P_1$, the same scheme should also be optimal. We show that this is not true.

First, for $a\leq 1$,
\begin{equation}
\frac{1-b}{b-a}\leq \frac{1-\sqrt{ab}}{\sqrt{ab}-a}.\nonumber
\end{equation}
But for $P_1\leq \frac{1-b}{b-a}$, the achievable sum-rate
\begin{equation}
\frac{1}{2}\log(1+P_1+P_2)+\frac{1}{2}\log\left(1+\frac{P_3}{1+aP_1+bP_2}\right) \label{eq:weak_tian}
\end{equation} 
is greater than the sum-rate (\ref{eq:sc_weak}). 

Now consider any $P_1$ with $\frac{1-b}{b-a}\leq P_1\leq \frac{1-\sqrt{ab}}{\sqrt{ab}-a}$. The following function is an achievable sum-rate for $P_1\leq \frac{1-\sqrt{ab}}{\sqrt{ab}-a}$. However, it is easy to show that $f$ is not concave in $P_1$ around the point $\frac{1-b}{b-a}$. Therefore, sum-rates strictly larger than (\ref{eq:sc_weak}) can be achieved for $\frac{1-b}{b-a}\leq P_1\leq \frac{1-\sqrt{ab}}{\sqrt{ab}-a}$ using time-sharing.

\bqn
f(P_1)=\left\{\begin{array}{ll}
\frac{1}{2}\log(1+P_1+P_2)+\frac{1}{2}\log\left(1+\frac{P_3}{1+a P_1+b P_2}\right), & \textrm{if    } P_1\leq \frac{1-b}{b-a},\\
\frac{1}{2}\log(1+ P_1)+\frac{1}{2}\log\left(1+\frac{bP_2+P_3}{1+a P_1}\right), & \textrm{if    } \frac{1-b}{b-a}\leq P_1 \leq \frac{1-\sqrt{ab}}{\sqrt{ab}-a}.
\end{array}\right.
\eqn

\begin{figure}[htp]
\centering
\scalefig{1.0}
\epsfbox{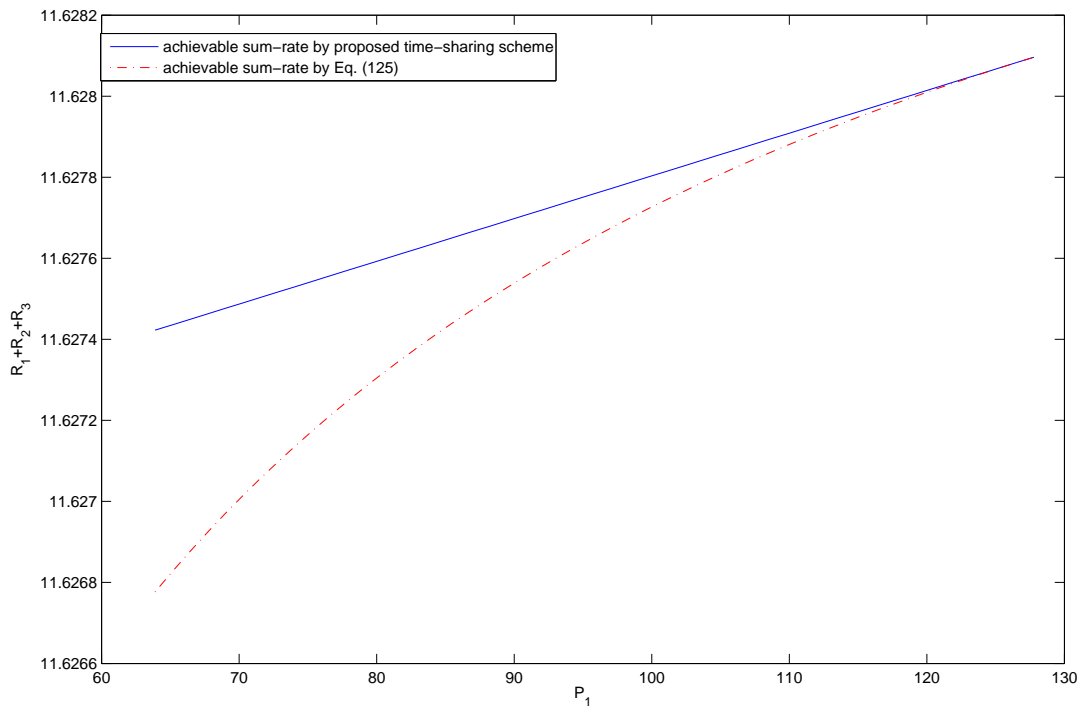}
\caption{The Comparison of the sum-rates achieved by proposed time-sharing scheme and Eq. \ref{eq:sc_weak} when $\frac{1-b}{b-a}\leq P_1\leq \frac{1-\sqrt{ab}}{\sqrt{ab}-a}$.}\label{fig:sum_rate_weak}
\end{figure}
Next, let us consider an even simpler case, where one of the cross link gain vanishes, for example, $a=0$. With only one weak interference link, we are able to obtain a boundary curve of the capacity region.

\begin{theorem}
For a Gaussian MAZIC with $a=0$ and $\frac{1+P_3}{1+P_1}\leq b\leq 1$ ($P_3\leq P_1$), then the following rate triple is always on the boundary of the capacity region:
\begin{eqnarray}
\left(\frac{1}{2}\log\left(1+\frac{P_1}{1+\bar{\beta}P_2}\right), \frac{1}{2}\log(1+\bar{\beta} P_2)+\frac{1}{2}\log\left(1+\frac{\beta P_2}{1+P_1+\bar{\beta}P_2}\right), \frac{1}{2}\log(1+P_3)\right),\label{rate}
\end{eqnarray}
where $\beta\in [0,1]$ and satisfy
\begin{eqnarray}
\frac{1}{2}\log(1+\bar{\beta} P_2)+\frac{1}{2}\log\left(1+\frac{\beta P_2}{1+P_1+\bar{\beta}P_2}\right)\leq \frac{1}{2}\log\left(1+\frac{bP_2}{1+P_3}\right).
\end{eqnarray}
\end{theorem}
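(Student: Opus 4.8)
The plan is the familiar sandwich: exhibit a coding scheme that achieves the triple in (\ref{rate}), and observe that this triple already lies on two elementary outer‑bound hyperplanes, forcing it onto the boundary of the capacity region. Write $(R_1^\ast,R_2^\ast,R_3^\ast)$ for the three coordinates in (\ref{rate}); since $a=0$, receiver~$2$ observes $Y_2=\sqrt{b}\,X_2+X_3+Z_2$. \emph{Achievability.} The key decision is that I would \emph{not} rate‑split transmitter~$2$; instead I let it send a single Gaussian codeword $X_2\sim\mathcal{N}(0,P_2)$ at rate $R_2^\ast$. Receiver~$1$ runs a joint‑typicality decoder for the two‑user Gaussian MAC $(X_1,X_2)\to Y_1$, whose achievable region is the full pentagon $R_1\le\frac12\log(1+P_1)$, $R_2\le\frac12\log(1+P_2)$, $R_1+R_2\le\frac12\log(1+P_1+P_2)$. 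A short computation gives the identity $R_1^\ast+R_2^\ast=\frac12\log(1+P_1+P_2)$, together with $R_1^\ast\le\frac12\log(1+P_1)$ and, using $\beta\ge 0$, $R_2^\ast\le\frac12\log(1+P_2)$; hence $(R_1^\ast,R_2^\ast)$ sits on the dominant face of this pentagon and is decodable at receiver~$1$. Receiver~$2$ first decodes $W_2$ from $\sqrt{b}\,X_2+X_3+Z_2$ treating $X_3+Z_2$ as noise — reliable exactly because the side condition imposed on $\beta$ in the statement is precisely $R_2^\ast\le\frac12\log\!\left(1+\frac{bP_2}{1+P_3}\right)$ — then it subtracts $\sqrt{b}\,X_2$ and decodes $W_3$ over the clean channel $X_3+Z_2$ at rate $\frac12\log(1+P_3)=R_3^\ast$. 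Finally, $R_2^\ast$ is increasing in $\bar\beta$, so the admissible set of $\beta\in[0,1]$ is nonempty iff $\bar\beta=0$ already satisfies the side condition, i.e.\ iff $\frac12\log\!\left(1+\frac{P_2}{1+P_1}\right)\le\frac12\log\!\left(1+\frac{bP_2}{1+P_3}\right)$, which is exactly $b\ge\frac{1+P_3}{1+P_1}$; the hypothesis $P_3\le P_1$ only keeps this threshold at most~$1$ so the condition on $b$ is not vacuous.

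\emph{Converse and conclusion.} I would invoke two inequalities that hold for every achievable triple, irrespective of $a$ and $b$: receiver~$1$ must recover $(W_1,W_2)$, so a Fano plus power‑constraint argument yields $R_1+R_2\le\frac12\log(1+P_1+P_2)$; and handing $(X_1^n,X_2^n)$ to receiver~$2$ as genie information leaves its channel as $X_3+Z_2$, so $R_3\le\frac12\log(1+P_3)$ — these are two of the bounds of Corollary~\ref{cor:weak_ob}. The triple (\ref{rate}) meets both with equality, so no neighbourhood of it is contained in the capacity region (every neighbourhood contains triples violating one of these two valid outer bounds); combined with achievability, the triple lies on the boundary.

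\emph{Main obstacle.} The only genuine subtlety is the design choice flagged above. The reflexive approach — rate‑splitting $W_2$ into two virtual users and decoding successively at receiver~$1$ — fails here, because receiver~$2$ would then be forced to decode a sub‑codeword of rate $\frac12\log(1+\bar\beta P_2)$, which exceeds $\frac12\log(1+b\bar\beta P_2)$ whenever $b<1$ and is therefore undecodable at receiver~$2$, so that $R_3^\ast=\frac12\log(1+P_3)$ becomes unreachable. Keeping $X_2$ a single codeword and relying on joint MAC decoding at receiver~$1$ is what lets both outer bounds be met simultaneously; once that is seen, the proof reduces to the identity $R_1^\ast+R_2^\ast=\frac12\log(1+P_1+P_2)$ and a few routine inequality checks.
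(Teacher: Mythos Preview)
Your proof is correct and follows essentially the same route as the paper. The paper obtains achievability by specialising Corollary~\ref{cor:general_g_inner} to $\alpha=1$ and then forcing $\beta=0$ via $R_3=\frac12\log(1+P_3)$, which operationally is exactly your scheme: user~$2$'s codeword is entirely ``common'' (fully decoded at receiver~$2$), receiver~$1$ does joint MAC decoding, and the converse is the MAC sum-rate bound $R_1+R_2\le\frac12\log(1+P_1+P_2)$. Your explicit use of the second hyperplane $R_3\le\frac12\log(1+P_3)$ and the check that $b\ge\frac{1+P_3}{1+P_1}$ guarantees a nonempty range of~$\beta$ are slightly more detailed than the paper's write-up, but the arguments coincide.
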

\begin{proof}
By setting $\alpha=1$, the general achievable rate region in Corollary \ref{cor:general_g_inner} reduces to
\begin{eqnarray}
R_1&\leq&\frac{1}{2}\log(1+P_1),\nonumber\\
R_2&\leq&\frac{1}{2}\log(1+P_2),\nonumber\\
R_3&\leq&\frac{1}{2}\log\left(1+\frac{P_3}{1+b\beta P_2}\right),\nonumber\\
R_1+R_2&\leq&\frac{1}{2}\log(1+P_1+P_2),\nonumber\\
R_2+R_3&\leq&\frac{1}{2}\log(1+\beta P_2)+\frac{1}{2}\log\left(1+\frac{b\bar{\beta}P_2+P_3}{1+b\beta P_2}\right),\nonumber\\
R_1+R_2+R_3&\leq&\frac{1}{2}\log(1+P_1+\beta P_2)+\frac{1}{2}\log\left(1+\frac{b\bar{\beta}P_2+P_3}{1+b\beta P_2}\right).\nonumber
\end{eqnarray}
If let $R_3=\frac{1}{2}\log(1+P_3)$, the achievable rate region reduces to
\begin{eqnarray}
R_1&\leq& \frac{1}{2}\log(1+P_1),\\
R_2&\leq&\frac{1}{2}\log\left(1+\frac{bP_2}{1+P_3}\right),\\
R_1+R_2&\leq&\frac{1}{2}\log(1+P_1+P_2).\label{eq:thm7_sumrate}
\end{eqnarray}
If $b\geq \frac{1+P_3}{1+P_1}$ ($P_3\leq P_1$), inequality (\ref{eq:thm7_sumrate}) is always active. Therefore, the rate triple (\ref{rate}) is always achievable.

For the converse part, (\ref{eq:thm7_sumrate}) is a natural upper-bound for $R_1+R_2$.
\end{proof}

\section{Conclusion}\label{sec:conclusion}
In this paper we have studied the capacity of an uplink network with co-channel interference. By modeling such networks using
a multiple access interference channel with one-sided interference,
we have obtained an inner bound to the capacity region for both the discrete memoryless case and the Gaussian case.
The capacity region for the discrete memoryless channel model with strong and very strong interference has been established; for the Gaussian MAZIC, we have determined the capacity region for the very strong interference case, and for the case that one interference link being strong and the other one being very strong; for the strong interference case, we have obtained a boundary line segment of the capacity region. For the mixed interference case, a boundary point of the capacity region has been obtained. For the weak interference case, we have obtained the sum-rate capacity for the symmetric channel coefficients whose result is analogous to that of the two user Gaussian one-sided interference channel. For the general case, a sum-rate upper bound has been obtained which gives rise to a sum-rate capacity result under certain power constraint conditions. Furthermore, it does not change the capacity results if we allow more users intended for receiver $2$ without interfering receiver $1$. In this case, $R_3$ is replaced by the sum-rate of all those added users.

\begin{appendix}
\subsection{Proof of Theorem \ref{thm:general_inner}}\label{apx:general_inner}
Fix $p(q)p(u_1|q)p(x_1|u_1q)p(u_2|q)p(x_2|u_2q)p(x_3|q)$.

Codebook generation: Randomly generate a time sharing sequence $q^n$ according to $\prod_{i=1}^np(q_i)$. Randomly generate $2^{nR_3}$ sequences $x_3^n(m_3)$, $m_3\in [1:2^{nR_3}]$, according to $\prod_{i=1}^n p(x_{3i}|q_i)$. For $j=1,2$, randomly generate $2^{nT_i}$ sequences $u_j^n(l_j)$, $l_j\in [1:2^{nT_j}]$, each according to $\prod_{i=1}^n p_{U_j|Q}(u_{ji}|q_i)$. For each $u_j^n(l_j)$, randomly generate $2^{nS_j}$ sequences $x_j^n(l_j,k_j)$, $k_j\in [1:2^{nS_j}]$, each according to $\prod_{i=1}^n p_{X_j|U_j,Q}(x_j|u_{ji}(l_j),q_i)$. The codebook is available at all transmitters and receivers.

Encoding: For user $j$, $j=1,2$, to send message $m_j=(l_j,k_j)$, encoder $j$ transmits $x_j^n(l_j,k_j)$. For user $3$, to send message $m_3$, encoder $3$ transmits $x_3^n(m_j)$.

Decoding: Upon receiving $y_1^n$, decoder $1$ finds the unique message tuple $(\hat{l}_1,\hat{l}_2,\hat{k}_1,\hat{k}_2)$ such that 
\begin{eqnarray}
(q^n,u_1^n(\hat{l}_1),u_2^n(\hat{l}_2),x_1^n(\hat{l}_1,\hat{k}_1),x_2^n(\hat{l}_2,\hat{k}_2),y_1^n)\in A_\epsilon^{(n)}(QU_1U_2X_1X_2Y_1).
\end{eqnarray}
If no such unique tuple exists, the decoder declares an error.

Upon receiving $y_2^n$, decoder $2$ finds the unique message $\hat{m}_3$ such that
\begin{equation}
(q^n, u_1^n(l_1),u_2^n(l_2), x_3^n(\hat{m}_3))\in A_\epsilon^{(n)}(QU_1U_2X_3Y_2),
\end{equation}
for some $l_1\in [1:2^{nT_1}]$ and some $l_2\in[1:2^{nT_2}]$.
If no such unique $\hat{m}_3$ exists, the decoder declares an error.

Analysis of the probability of error: By the symmetry of the codebook generation, we assume that the transmitted indices are $l_1=l_2=k_1=k_2=m_3=1$. For user $1$, we
define the following event:
\bqa
E_{l_1l_2k_1k_2}^1=\left\{(q^n,u_1^n(l_1),u_2^n(l_2),x_1^n(l_1,k_1),x_2^n(l_2,k_2),y_1^n)\in A_\epsilon^{(n)}(QU_1U_2X_1X_2Y_1)\right\}.
\eqa
The error probability at receiver $1$ is

\bqn
P_{e1}^n&=&\Pr\left\{{E_{1111}^1}^c\bigcup\cup_{(l_1l_2k_1k_2)\neq(1,1,1,1)}E_{l_1l_2k_1k_2}^1\right\}\\
&\leq&\Pr({E_{1111}^1}^c)+\sum_{l_1\neq 1,l_2=k_1=k_2=1} \Pr(E_{l_1111}^1)+\sum_{l_2\neq 1,l_1=k_1=k_2=1}\Pr(E_{1l_211}^1)+\sum_{k_1\neq 1,l_1=l_2=k_2=1}\Pr(E_{11k_11}^1)\\
&&\sum_{k_2\neq 1,l_1=l_2=k_1=1}\Pr(E_{111k_2}^1)+\sum_{l_1,l_2\neq 1,k_1=k_2=1}\Pr(E_{l_1l_211}^1)+\sum_{l_1,k_1\neq 1,l_2=k_2=1}\Pr(E_{l_11k_11}^1)\\&&\sum_{l_1,k_2\neq 1,l_2=k_1=1}\Pr(E_{l_111k_2}^1)+\sum_{l_2,k_1\neq 1,l_1=k_2=1}\Pr(E_{1l_2k_11}^1)+\sum_{l_2,k_2\neq 1,l_1=k_1=1}\Pr(E_{1l_21k_2}^1)\\&&\sum_{k_1,k_2\neq 1,l_1=l_2=1}\Pr(E_{11k_1k_2}^1)+\sum_{l_1,l_2,k_1\neq 1,k_2=1}\Pr(E_{l_1l_2k_11}^1)+\sum_{l_1,l_2,k_2\neq 1,k_1=1}\Pr(E_{l_1l_21k_2}^1)\\&&\sum_{l_1,k_1,k_2\neq 1,l_2=1}\Pr(E_{l_11k_1k_2}^1)+\sum_{l_2,k_1,k_2\neq 1,l_1=1}\Pr(E_{1l_1k_1k_2}^1)+\sum_{l_1,l_2,k_1,k_2\neq 1}\Pr(E_{l_1l_2k_1k_2}^1)
\eqn

It is obvious that $\Pr({E_{1111}^1}^c)\rightarrow 0$ when $n\rightarrow \infty$. From the joint typicality we have
\bqn
&&\sum_{l_1\neq 1,l_2=k_1=k_2=1}\Pr(E_{l_1111}^1)\\
&\leq& 2^{nT_1}\sum_{(q^n,u_1^n,u_2^n,x_1^n,x_2^n,y_1^n)\in A_\epsilon^{(n)}}p(u_1^n,x_1^n|q^n)p(q^nu_2^nx_2^ny_1^n)\\
&\leq& 2^{nT_1}2^{n(H(QU_1U_2X_1X_2Y_1)+\epsilon)}2^{-n(H(U_1X_1|Q)-2\epsilon)}2^{-n(H(QU_2X_2Y_1)-\epsilon)}\\
&=&2^{n(T_1-I(U_1X_1;Y_1|U_2X_2Q)+4\epsilon)}=2^{n(T_1-I(X_1;Y_1|X_2Q)+4\epsilon)}
\eqn
\bqn
&&\sum_{l_2\neq 1,l_1=k_1=k_2=1}\Pr(E_{1l_211}^1)\\
&\leq& 2^{nT_2}\sum_{(q^n,u_1^n,u_2^n,x_1^n,x_2^n,y_1^n)\in A_\epsilon^{(n)}}p(u_2^n,x_2^n|q^n)p(q^nu_1^nx_1^ny_1^n)\\
&\leq& 2^{nT_2}2^{n(H(QU_1U_2X_1X_2Y_1)+\epsilon)}2^{-n(H(U_2X_2|Q)-2\epsilon)}2^{-n(H(QU_1X_1Y_1)-\epsilon)}\\
&=&2^{n(T_2-I(U_2X_2;Y_1|U_1X_1Q)+4\epsilon)}=2^{n(T_2-I(X_2;Y_1|X_1Q)+4\epsilon)}
\eqn
\bqn
&&\sum_{k_1\neq 1,l_1=l_2=k_2=1}\Pr(E_{11k_11}^1)\\
&\leq& 2^{nS_1}\sum_{(q^n,u_1^n,u_2^n,x_1^n,x_2^n,y_1^n)\in A_\epsilon^{(n)}}p(x_1^n|u_1^n,q^n)p(q^nu_1^nu_2^nx_2^ny_1^n)\\
&\leq& 2^{nS_1}2^{n(H(QU_1U_2X_1X_2Y_1)+\epsilon)}2^{-n(H(X_1|U_1Q)-2\epsilon)}2^{-n(H(QU_1U_2X_2Y_1)-\epsilon)}\\
&=&2^{n(S_1-I(X_1;Y_1|U_1U_2X_2Q)+4\epsilon)}=2^{n(S_1-I(X_1;Y_1|U_1X_2Q)+4\epsilon)}
\eqn
\bqn
&&\sum_{k_2\neq 1,l_1=l_2=k_1=1}\Pr(E_{111k_2}^1)\\
&\leq& 2^{nS_2}\sum_{(q^n,u_1^n,u_2^n,x_1^n,x_2^n,y_1^n)\in A_\epsilon^{(n)}}p(x_2^n|u_2^n,q^n)p(q^nu_1^nu_2^nx_1^ny_1^n)\\
&\leq& 2^{nS_2}2^{n(H(QU_1U_2X_1X_2Y_1)+\epsilon)}2^{-n(H(X_2|U_2Q)-2\epsilon)}2^{-n(H(QU_1U_2X_1Y_1)-\epsilon)}\\
&=&2^{n(S_2-I(X_2;Y_1|U_1U_2X_1Q)+4\epsilon)}=2^{n(S_2-I(X_2;Y_1|U_2X_1Q)+4\epsilon)}
\eqn
\bqn
&&\sum_{l_1,l_2\neq 1,k_1=k_2=1}\Pr(E_{l_1l_211}^1)\\
&\leq& 2^{n(T_1+T_2)}\sum_{(q^n,u_1^n,u_2^n,x_1^n,x_2^n,y_1^n)\in A_\epsilon^{(n)}}p(u_1^n,x_1^n,u_2^n,x_2^n|q^n)p(q^ny_1^n)\\
&\leq& 2^{n(T_1+T_2)}2^{n(H(QU_1U_2X_1X_2Y_1)+\epsilon)}2^{-n(H(U_1X_1U_2X_2|Q)-2\epsilon)}2^{-n(H(QY_1)-\epsilon)}\\
&=&2^{n(T_1+T_2-I(U_1X_1U_2X_2;Y_1|Q)+4\epsilon)}=2^{n(T_1+T_2-I(X_1X_2;Y_1|Q)+4\epsilon)}
\eqn
\bqn
&&\sum_{l_1,k_1\neq 1,l_2=k_2=1}\Pr(E_{l_11k_11}^1)\\
&\leq& 2^{n(S_1+T_1)}\sum_{(q^n,u_1^n,u_2^n,x_1^n,x_2^n,y_1^n)\in A_\epsilon^{(n)}}p(u_1^n,x_1^n|q^n)p(q^nu_2^nx_2^ny_1^n)\\
&\leq& 2^{n(S_1+T_1)}2^{n(H(QU_1U_2X_1X_2Y_1)+\epsilon)}2^{-n(H(U_1X_1Q)-2\epsilon)}2^{-n(H(QU_2X_2Y_1)-\epsilon)}\\
&=&2^{n(S_1+T_1-I(U_1X_1;Y_1|U_2X_2Q)+4\epsilon)}=2^{n(S_1+T_1-I(X_1;Y_1|X_2Q)+4\epsilon)}
\eqn
\bqn
&&\sum_{l_1,k_2\neq 1,l_2=k_1=1}\Pr(E_{l_111k_2}^1)\\
&\leq& 2^{n(S_2+T_1)}\sum_{(q^n,u_1^n,u_2^n,x_1^n,x_2^n,y_1^n)\in A_\epsilon^{(n)}}p(u_2^n,x_1^n,x_2^n|u_1^nq^n)p(q^nu_1^ny_1^n)\\
&\leq& 2^{n(S_2+T_1)}2^{n(H(QU_1U_2X_1X_2Y_1)+\epsilon)}2^{-n(H(U_2X_1X_2|U_1Q)-2\epsilon)}2^{-n(H(QU_1Y_1)-\epsilon)}\\
&=&2^{n(S_2+T_1-I(U_2X_1X_2;Y_1|U_1Q)+4\epsilon)}=2^{n(S_2+T_1-I(X_1X_2;Y_1|U_1Q)+4\epsilon)}
\eqn
\bqn
&&\sum_{l_2,k_1\neq 1,l_1=k_2=1}\Pr(E_{1l_2k_11}^1)\\
&\leq& 2^{n(S_1+T_2)}\sum_{(q^n,u_1^n,u_2^n,x_1^n,x_2^n,y_1^n)\in A_\epsilon^{(n)}}p(u_1^n,x_1^n,x_2^n|u_2^nq^n)p(q^nu_2^ny_1^n)\\
&\leq& 2^{n(S_1+T_2)}2^{n(H(QU_1U_2X_1X_2Y_1)+\epsilon)}2^{-n(H(U_1X_1X_2|U_2Q)-2\epsilon)}2^{-n(H(QU_2Y_1)-\epsilon)}\\
&=&2^{n(S_1+T_2-I(U_1X_1X_2;Y_1|U_2Q)+4\epsilon)}=2^{n(S_1+T_2-I(X_1X_2;Y_1|U_2Q)+4\epsilon)}
\eqn
\bqn
&&\sum_{l_2,k_2\neq 1,l_1=k_1=1}\Pr(E_{1l_21k_2}^1)\\
&\leq& 2^{n(S_2+T_2)}\sum_{(q^n,u_1^n,u_2^n,x_1^n,x_2^n,y_1^n)\in A_\epsilon^{(n)}}p(u_2^n,x_2^n|q^n)p(q^nu_1^nx_1^ny_1^n)\\
&\leq& 2^{n(S_2+T_2)}2^{n(H(QU_1U_2X_1X_2Y_1)+\epsilon)}2^{-n(H(U_2X_2|Q)-2\epsilon)}2^{-n(H(QU_1X_1Y_1)-\epsilon)}\\
&=&2^{n(S_2+T_2-I(U_2X_2;Y_1|U_1X_1Q)+4\epsilon)}=2^{n(S_2+T_2-I(X_2;Y_1|X_1Q)+4\epsilon)}
\eqn
\bqn
&&\sum_{k_1,k_2\neq 1,l_1=l_2=1}\Pr(E_{11k_1k_2}^1)\\
&\leq& 2^{n(S_1+S_2)}\sum_{(q^n,u_1^n,u_2^n,x_1^n,x_2^n,y_1^n)\in A_\epsilon^{(n)}}p(x_1^n|u_1^nq^n)p(x_2^n|u_2^nq^n)p(q^nu_1^nu_2^ny_1^n)\\
&\leq& 2^{n(S_1+S_2)}2^{n(H(QU_1U_2X_1X_2Y_1)+\epsilon)}2^{-n(H(X_1X_2|U_1U_2Q)-2\epsilon)}2^{-n(H(QU_1U_2Y_1)-\epsilon)}\\
&=&2^{n(S_1+S_2-I(X_1X_2;Y_1|U_1U_2Q)+4\epsilon)}=2^{n(S_1+S_2-I(X_1X_2;Y_1|U_1U_2Q)+4\epsilon)}
\eqn
\bqn
&&\sum_{l_1,l_2,k_1\neq 1,k_2=1}\Pr(E_{l_1l_2k_11}^1)\\
&\leq& 2^{n(S_1+T_1+T_2)}\sum_{(q^n,u_1^n,u_2^n,x_1^n,x_2^n,y_1^n)\in A_\epsilon^{(n)}}p(u_1^n,x_1^n,u_2^n,x_2^n|q^n)p(q^ny_1^n)\\
&\leq& 2^{n(S_1+T_1+T_2)}2^{n(H(QU_1U_2X_1X_2Y_1)+\epsilon)}2^{-n(H(U_1X_1U_2X_2|Q)-2\epsilon)}2^{-n(H(QY_1)-\epsilon)}\\
&=&2^{n(S_1+T_1+T_2-I(U_1X_1U_2X_2;Y_1|Q)+4\epsilon)}=2^{n(S_1+T_1+T_2-I(X_1X_2;Y_1|Q)+4\epsilon)}
\eqn
\bqn
&&\sum_{l_1,l_2,k_2\neq 1,k_1=1}\Pr(E_{l_1l_21k_2}^1)\\
&\leq& 2^{n(T_1+S_2+T_2)}\sum_{(q^n,u_1^n,u_2^n,x_1^n,x_2^n,y_1^n)\in A_\epsilon^{(n)}}p(u_1^n,x_1^n,u_2^n,x_2^n|q^n)p(q^ny_1^n)\\
&\leq& 2^{n(T_1+S_2+T_2)}2^{n(H(QU_1U_2X_1X_2Y_1)+\epsilon)}2^{-n(H(U_1X_1X_2U_2|Q)-2\epsilon)}2^{-n(H(QY_1)-\epsilon)}\\
&=&2^{n(S_1+T_1+S_2-I(U_1U_2X_1X_2;Y_1|Q)+4\epsilon)}=2^{n(T_1+S_2+T_2-I(X_1X_2;Y_1|Q)+4\epsilon)}
\eqn
\bqn
&&\sum_{l_1,k_1,k_2\neq 1,l_2=1}\Pr(E_{l_11k_1k_2}^1)\\
&\leq& 2^{n(S_1+T_1+S_2)}\sum_{(q^n,u_1^n,u_2^n,x_1^n,x_2^n,y_1^n)\in A_\epsilon^{(n)}}p(u_1^n,x_1^n,x_2^n|u_2^nq^n)p(q^nu_2^ny_1^n)\\
&\leq& 2^{n(S_1+T_1+S_2)}2^{n(H(QU_1U_2X_1X_2Y_1)+\epsilon)}2^{-n(H(U_1X_1X_2|U_2Q)-2\epsilon)}2^{-n(H(QU_2Y_1)-\epsilon)}\\
&=&2^{n(S_1+T_1+S_2-I(U_1X_1X_2;Y_1|U_2Q)+4\epsilon)}=2^{n(S_1+T_1+S_2-I(X_1X_2;Y_1|U_2Q)+4\epsilon)}
\eqn
\bqn
&&\sum_{l_2,k_1,k_2\neq 1,l_1=1}\Pr(E_{1l_2k_1k_2}^1)\\
&\leq& 2^{n(S_1+S_2+T_2)}\sum_{(q^n,u_1^n,u_2^n,x_1^n,x_2^n,y_1^n)\in A_\epsilon^{(n)}}p(u_2^n,x_1^n,x_2^n|u_1^nq^n)p(q^nu_1^ny_1^n)\\
&\leq& 2^{n(S_1+S_2+T_2)}2^{n(H(QU_1U_2X_1X_2Y_1)+\epsilon)}2^{-n(H(U_2X_1X_2|U_1Q)-2\epsilon)}2^{-n(H(QU_1Y_1)-\epsilon)}\\
&=&2^{n(S_1+S_2+T_2-I(U_2X_1X_2;Y_1|U_1Q)+4\epsilon)}=2^{n(S_1+S_2+T_2-I(X_1X_2;Y_1|U_1Q)+4\epsilon)}
\eqn
\bqn
&&\sum_{l_1,l_2,k_1,k_2\neq 1}\Pr(E_{l_1l_2k_1k_2}^1)\\
&\leq& 2^{n(S_1+T_1+S_2+T_2)}\sum_{(q^n,u_1^n,u_2^n,x_1^n,x_2^n,y_1^n)\in A_\epsilon^{(n)}}p(u_1^n,x_1^n,u_2^n,x_2^n|q^n)p(q^ny_1^n)\\
&\leq& 2^{n(S_1+T_1+S_2+T_2)}2^{n(H(QU_1U_2X_1X_2Y_1)+\epsilon)}2^{-n(H(U_1X_1U_2X_2|Q)-2\epsilon)}2^{-n(H(QY_1)-\epsilon)}\\
&=&2^{n(S_1+T_1+S_2+T_2-I(U_1U_2X_1X_2;Y_1|Q)+4\epsilon)}=2^{n(S_1+T_1+S_2+T_2-I(X_1X_2;Y_1|Q)+4\epsilon)}
\eqn
Putting them together, we have
\bqn
P_{e1}^n&\leq&\epsilon+2^{n(T_1-I(X_1;Y_1|X_2Q)+4\epsilon)}+2^{n(T_2-I(X_2;Y_1|X_1Q)+4\epsilon)}\\
&&+2^{n(T_1-I(X_1;Y_1|U_1X_2Q)+4\epsilon)}
+2^{n(S_2-I(X_2;Y_1|U_2X_1Q)+4\epsilon)}\\
&&+2^{n(T_1+T_2-I(X_1X_2;Y_1|Q)+4\epsilon)}
+2^{n(S_1+T_1-I(X_1;Y_1|X_2Q)+4\epsilon)}\\
&&+2^{n(S_2+T_1-I(X_1X_2;Y_1|U_1Q)+4\epsilon)}
+2^{n(S_1+T_2-I(X_1X_2;Y_1|U_2Q)+4\epsilon)}\\
&&+2^{n(S_2+T_2-I(X_2;Y_1|X_1Q)+4\epsilon)}
+2^{n(S_1+S_2-I(X_1X_2;Y_1|U_1U_2Q)+4\epsilon)}\\
&&+2^{n(S_1+T_1+T_2-I(X_1X_2;Y_1|Q)+4\epsilon)}
+2^{n(T_1+S_2+T_2-I(X_1X_2;Y_1|Q)+4\epsilon)}\\
&&+2^{n(S_1+T_1+S_2-I(X_1X_2;Y_1|U_2Q)+4\epsilon)}
+2^{n(S_1+S_2+T_2-I(X_1X_2;Y_1|U_1Q)+4\epsilon)}\\
&&+2^{n(S_1+T_1+S_2+T_2-I(X_1X_2;Y_1|Q)+4\epsilon)}
\eqn
For user $2$, we
define the following event:
\bqa
E_{l_1l_2m_3}^2=\left\{(q^n,u_1^n(l_1),u_2^n(l_2),x_3^n(m_3),y_2^n)\in A_\epsilon^{(n)}(QU_1U_2X_3Y_2)\right\}.
\eqa
The error probability at receiver $2$ is
\bqn
P_{e2}^n&=&\Pr\left\{{E_{111}^2}^c\bigcup\cup_{m_3\neq 1,any(l_1,l_2)}E_{l_1l_2m_3}^2\right\}\\
&\leq&\Pr\left({E_{111}^2}^c\right)+\sum_{m_3\neq 1, l_1=l_2=1}\Pr\left(E_{11m_3}^2\right)+\sum_{l_1,m_3\neq 1, l_2=1}\Pr\left(E_{l_11m_3}^2\right)\\&&+\sum_{l_2,m_3\neq 1, l_1=1}\Pr\left(E_{1l_2m_3}^2\right)+\sum_{l_1,l_2,m_3\neq 1}\Pr\left(E_{l_1l_2m_3}^2\right)
\eqn

Again, it is obvious that $\Pr({E_{111}^2}^c)\rightarrow 0$ when $n\rightarrow \infty$. From the joint typicality we have
\bqn
\sum_{m_3\neq 1, l_1=l_2=1}\Pr\left(E_{11m_3}^2\right)&\leq&2^{nR_3}\sum_{(q^n,u_1^n,u_2^n,x_3^n,y_2^n)\in A_\epsilon^{(n)}}p(x_3^n|q^n)p(q^n,u_1^n,u_2^n,y_2^n)\\
&\leq&2^{nR_3}2^{n(H(QU_1U_2X_3Y_2)+\epsilon)}2^{-n(H(X_3|Q)-2\epsilon)}2^{-n(H(QU_1U_2Y_2)-\epsilon)}\\
&=&2^{n(R_3-I(X_3;Y_2|U_1U_2Q)+4\epsilon)}
\eqn
\bqn
\sum_{l_1,m_3\neq 1,l_2=1}\Pr\left(E_{l_11m_3}^2\right)&\leq&2^{n(T_1+R_3)}\sum_{(q^n,u_1^n,u_2^n,x_3^n,y_2^n)\in A_\epsilon^{(n)}}p(u_1^n,x_3^n|q^n)p(q^n,u_2^n,y_2^n)\\
&\leq&2^{n(T_1+R_3)}2^{n(H(QU_1U_2X_3Y_2)+\epsilon)}2^{-n(H(U_1,X_3|Q)-2\epsilon)}2^{-n(H(QU_2Y_2)-\epsilon)}\\
&=&2^{n(T_1+R_3-I(U_1X_3;Y_2|U_2Q)+4\epsilon)}
\eqn
\bqn
\sum_{l_2,m_3\neq 1,l_1=1}\Pr\left(E_{1l_2m_3}^2\right)&\leq&2^{n(T_2+R_3)}\sum_{(q^n,u_1^n,u_2^n,x_3^n,y_2^n)\in A_\epsilon^{(n)}}p(u_2^n,x_3^n|q^n)p(q^n,u_1^n,y_2^n)\\
&\leq&2^{n(T_2+R_3)}2^{n(H(QU_1U_2X_3Y_2)+\epsilon)}2^{-n(H(U_2,X_3|Q)-2\epsilon)}2^{-n(H(QU_1Y_2)-\epsilon)}\\
&=&2^{n(T_2+R_3-I(U_2X_3;Y_2|U_1Q)+4\epsilon)}
\eqn
\bqn
\sum_{l_1,l_2,m_3\neq 1}\Pr\left(E_{l_1l_2m_3}^2\right)&\leq&2^{n(T_1+T_2+R_3)}\sum_{(q^n,u_1^n,u_2^n,x_3^n,y_2^n)\in A_\epsilon^{(n)}}p(u_1^n,u_2^n,x_3^n|q^n)p(q^n,y_2^n)\\
&\leq&2^{n(T_1+T_2+R_3)}2^{n(H(QU_1U_2X_3Y_2)+\epsilon)}2^{-n(H(U_1U_2X_3|Q)-2\epsilon)}2^{-n(H(QY_2)-\epsilon)}\\
&=&2^{n(T_1+T_2+R_3-I(U_1U_2X_3;Y_2|Q)+4\epsilon)}
\eqn
Therefore, for receiver $2$,
\bqn
P_{e2}^n&\leq&\epsilon+2^{n(R_3-I(X_3;Y_2|U_1U_2Q)+4\epsilon)}
+2^{n(T_1+R_3-I(U_1X_3;Y_2|U_2Q)+4\epsilon)}\\
&&+2^{n(T_2+R_3-I(U_2X_3;Y_2|U_1Q)+4\epsilon)}
+2^{n(T_1+T_2+R_3-I(U_1U_2X_3;Y_2|Q)+4\epsilon)}
\eqn
In order that $P_{e1}^n$, $P_{e2}^n$ $\rightarrow 0$, from above inequalities, we must have
\bqa
T_1&\leq&I(X_1;Y_1|X_2Q),\label{eq:original_start}\\
T_2&\leq&I(X_2;Y_1|X_1Q),\\
T_1&\leq&I(X_1;Y_1|U_1X_2Q),\\
S_2&\leq&I(X_2;Y_1|U_2X_1Q),\\
T_1+T_2&\leq&I(X_1X_2;Y_1|Q),\\
S_1+T_1&\leq&I(X_1;Y_1|X_2Q),\\
S_2+T_1&\leq&I(X_1X_2;Y_1|U_1Q),\\
S_1+T_2&\leq&I(X_1X_2;Y_1|U_2Q),\\
S_2+T_2&\leq&I(X_2;Y_1|X_1Q),\\
S_1+S_2&\leq&I(X_1X_2;Y_1|U_1U_2Q),\\
S_1+T_1+T_2&\leq&I(X_1X_2;Y_1|Q),\\
T_1+S_2+T_2&\leq&I(X_1X_2;Y_1|Q),\\
S_1+T_1+S_2&\leq&I(X_1X_2;Y_1|U_2Q),\\
S_1+S_2+T_2&\leq&I(X_1X_2;Y_1|U_1Q),\\
S_1+T_1+S_2+T_2&\leq&I(X_1X_2;Y_1|Q),\\
R_3&\leq&I(X_3;Y_2|U_1U_2Q),\\
T_1+R_3&\leq&I(U_1X_3;Y_2|U_2Q),\\
T_2+R_3&\leq&I(U_2X_3;Y_2|U_1Q),\\
T_1+T_2+R_3&\leq&I(U_1U_2X_3;Y_2|Q).\label{eq:original_end}
\eqa
Using the Fourier-Motzkin elimination on (\ref{eq:original_start})-(\ref{eq:original_end}) and getting rid of redundant inequalities, we obtain (\ref{eq:thm1_start})-(\ref{eq:thm1_end}). The cardinality bounds on the auxiliary random variables are from the Caratheodory Theorem.
\subsection{Proof of Theorem \ref{thm:discrete_strong}}\label{apx:strong_cr}
The achievability part follows directly from Theorem \ref{thm:general_inner} by setting $U_1=U_2=\emptyset$.
For the converse, (\ref{eq:cr_strong_1}), (\ref{eq:cr_strong_2}) and (\ref{eq:cr_strong_4}) form an outer bound on the capacity region of the corresponding MAC with $X_1$ and $X_2$ as inputs and $Y_1$ as output. Moreover, (\ref{eq:cr_strong_3}) is a natural bound on $R_3$. Therefore, we only need to prove (\ref{eq:cr_strong_5})-(\ref{eq:cr_strong_7}).
First,
\bqn
n(R_2+R_3)-n\epsilon&=&H(W_2)+H(W_3)-n\epsilon\\
&\stackrel{(a)}{\leq}&I(X_2^n;Y_1^n)+I(X_3^n;Y_2^n)\\
&\stackrel{(b)}{\leq}&I(X_2^n;Y_1^n|X_1^n)+I(X_3^n;Y_2^n|X_1^n)\\
&\stackrel{(c)}{\leq}&I(X_2^n;Y_2^n|X_1^nX_3^n)+I(X_3^n;Y_2^n|X_1^n)\\
&=&I(X_2^nX_3^n;Y_2^n|X_1^n)\\
&=&H(Y_2^n|X_1^n)-H(Y_2^n|X_1^nX_2^nX_3^n)\\
&=&\sum_{i=1}^n\left\{H(Y_{2i}|Y_2^{i-1}X_1^n)-H(Y_{2i}|Y_2^{i-1}X_1^nX_2^nX_3^n)\right\}\\
&\stackrel{(d)}{\leq}&\sum_{i=1}^n\left\{H(Y_{2i}|X_{1i})-H(Y_{2i}|X_{1i}X_{2i}X_{3i})\right\}\\
&=&I(X_{2i}X_{3i};Y_{2i}|X_{1i}),
\eqn
where $(a)$ is from Fano's inequality; $(b)$ is because of the mutual independence among $X_1^n$, $X_2^n$ and $X_3^n$; $(c)$ is due to (\ref{eq:multi_strong_2}); and $(d)$ uses the fact that conditioning reduces entropy and the memoryless property.
Similarly, we can prove the bound on $R_1+R_3$. We further have
\bqn
n(R_1+R_2+R_3)-n\epsilon&=&H(W_1,W_2)+H(W_3)-n\epsilon\\
&\stackrel{(a)}{\leq}&I(X_1^nX_2^n;Y_1^n)+I(X_3^n;Y_2^n)\\
&\stackrel{(b)}{\leq}&I(X_1^nX_2^n;Y_1^n)+I(X_3^n;Y_2^n)\\
&\stackrel{(c)}{\leq}&I(X_1^nX_2^n;Y_2^n|X_3^n)+I(X_3^n;Y_2^n)\\
&=&I(X_1^nX_2^nX_3^n;Y_2^n)\\
&=&H(Y_2^n)-H(Y_2^n|X_1^nX_2^nX_3^n)\\
&=&\sum_{i=1}^n \left\{H(Y_{2i}|Y_2^{i-1})-H(Y_{2i}|Y_2^{i-1}X_1^nX_2^nX_3^n)\right\}\\
&\stackrel{(d)}{\leq}&\sum_{i=1}^n \left\{H(Y_{2i}-H(Y_{2i}|X_{1i}X_{2i}X_{3i}))\right\}\\
&=&I(X_{1i}X_{2i}X_{3i};Y_{2i}).
\eqn
By introducing a time-sharing random variable $Q$, we obtain Theorem \ref{thm:discrete_strong}. The cardinality of $\Qmat$ can be verified using the Caratheodory theorem.
\end{appendix}

\end{document}